\newtheorem*{rep@theorem}{\rep@title}
\newcommand{\newreptheorem}[2]{%
\newenvironment{rep#1}[1]{%
 \def\rep@title{#2 \ref{##1}}%
 \begin{rep@theorem}}%
 {\end{rep@theorem}}}
 \newcommand{\para}[1]{~\\ \noindent {\bf #1}}
\newtheorem{remark}{Remark}
\newtheorem{theorem}{Theorem}
\newtheorem{proposition}{Proposition}
\newtheorem{lemma}{Lemma}
\newtheorem{assumption}{Assumption}
\newtheorem*{assumption*}{Assumption}
\newtheorem{corollary}{Corollary}
\newtheorem{definition}{Definition}
\newcommand{\sgn}[1]{\mathsf{sgn}\left[{#1}\right]}
\newenvironment{thmbis}[1]
{%
   \addtocounter{theorem}{-1}%
   \begin{theorem}}
  {\end{theorem}}
\newenvironment{propbis}[1]
{%
   \addtocounter{proposition}{-1}%
   \begin{proposition}}
   {\end{proposition}}
\newcommand{\Prob}[1]{\mathrm{Pr}\left({#1}\right)}
\begin{document}

\title{Robust and Efficient Hamiltonian Learning}

\author{Wenjun Yu}
\email{wenjunyus@gmail.com}
\affiliation{QICI, Department of Computer Science, The University of Hong Kong, Pokfulam Road, Hong Kong, China}
\affiliation{Center on Frontiers of Computing Studies, Peking University, Beijing 100871, China}

\author{Jinzhao Sun}
\affiliation{Clarendon Laboratory, University of Oxford, Oxford OX1 3PU, UK}
\affiliation{Quantum Advantage Research, Beijing 100080, China}

\author{Zeyao Han}
\affiliation{School of Physics, Peking University, Beijing 100871, China}

\author{Xiao Yuan}
\email{xiaoyuan@pku.edu.cn}
\affiliation{Center on Frontiers of Computing Studies, Peking University, Beijing 100871, China}

\begin{abstract}
With the fast development of quantum technology, the sizes of both digital and analog quantum systems increase drastically. 
In order to have better control and understanding of the quantum hardware, an important task is to characterize the interaction, i.e., to learn the Hamiltonian, which determines both static and dynamic properties of the system. 
Conventional Hamiltonian learning methods either require costly process tomography or adopt impractical assumptions, such as prior information on the Hamiltonian structure and the ground or thermal states of the system. 
In this work, we present a robust and efficient Hamiltonian learning method that circumvents these limitations based only on mild assumptions. 
The proposed method can efficiently learn any Hamiltonian that is sparse on the Pauli basis using only short-time dynamics and local operations without any information on the Hamiltonian or preparing any eigenstates or thermal states.
The method has a scalable complexity and a vanishing failure probability regarding the qubit number. Meanwhile, it performs robustly given the presence of state preparation and measurement errors and resiliently against a certain amount of circuit and shot noise.
We numerically test the scaling and the estimation accuracy of the method for transverse
field Ising Hamiltonian with random interaction strengths and molecular Hamiltonians, both with varying sizes and manually added noise. 
All these results verify the robustness and efficacy of the method, paving the way for a systematic understanding of the dynamics of large quantum systems.

\end{abstract}

\maketitle

\section{Introduction}

Recently, we have witnessed a rapid development of quantum technologies with drastically increasing sizes of fully or partially controllable quantum systems \cite{Arute2019,neill2021accurately,ebadi2021quantum,wu2021strong,gong2021quantum,zhong2020quantum,mi2021time,kandala2017hardware,kjaergaard2019superconducting,zhang2017observation}.
Following the law of quantum mechanics, the quantum devices allow for controllable quantum operations and offer opportunities to naturally probe quantum many-body behaviors in a programmable or analog way. 
To further upgrade the quantum technology or even to realize a universal quantum computer, an essential requirement is to better understand the underlying quantum mechanisms of the systems~\cite{georgescu2014quantum,altman2021quantum,hauke2012can}.
According to the Schr\"{o}dinger equation, the static and dynamical behaviors of quantum systems are described by their Hamiltonian.
Successful detection of the Hamiltonian, i.e., the so-called \emph{Hamiltonian learning}~\cite{granade2012robust, wiebe2014quantum,wiebe2014hamiltonian,krastanov2019stochastic,wang2017experimental,o2021quantum,haah2021optimal} task, could guide us to have a better understanding and hence control of analog quantum simulators~\cite{lloyd1996universal,das2008colloquium,friedenauer2008simulating,aspuru2012photonic,georgescu2014quantum,hauke2012can}, digital quantum processors~\cite{bernstein1997quantum,Shor1995,grover1996fast,harrow2009quantum,krantz2019quantum,kjaergaard2019superconducting,trabesinger2012quantum}, quantum sensors~\cite{pirandola2018advances,degen2017quantum,boss2017quantum}, etc.


A straightforward choice of learning Hamiltonian is via \emph{quantum process tomography}~\cite{Chuang1997,PhysRevLett.90.193601,leung2003choi,Merkel2012,rahimi2011quantum}.
However, since the dimension of a many-body Hamiltonian grows exponentially with the system size, complete tomography of the Hamiltonian is costly and formidable in real experiments~\cite{mohseni2008quantum,baldwin2014quantum}.
Here we briefly review some other efficient learning methods and describe their limitations.
Wiebe et al.~\cite{wiebe2014quantum,wiebe2014hamiltonian} firstly use a trustworthy quantum simulator to inverse the unknown evolution and iteratively approximate the exact inverse. 
However, since the scaling of complexity is case-specific, the order of magnitude thereof could be prohibitively large for a certain Hamiltonian family.
Besides, the accurate simulation of some arbitrary inverse Hamiltonian evolution remains hard for experimental realizations in NISQ era~\cite{preskill2018quantum}.
To avoid the usage of expensive and inaccessible quantum resources, some protocols focus on using states that contain information on the Hamiltonia, including steady states and thermal states~\cite{Bairey_2019,evans2019scalable,anshu2020sample,qi2019determining}.
These methods are scalable with the system size by measuring states with desired properties to infer the Hamiltonian.
However, they rely on specific states containing the information of the target Hamiltonian, which are generally hard to prepare and are vulnerable to measurement errors. 

There are also learning proposals that get rid of the requirements of preparing some special initial states.
For example, Li et al.~\cite{quench} construct equations based on the energy conservation of arbitrary states during evolution.
Given the structure of the Hamiltonian, this work considers the effects of measurement errors and gives a bound on the measurement errors based on the gap between the smallest two singular values of the equation matrix.
Notably, this protocol is steady under both state preparation and measurement (SPAM) errors.
This steadiness, nevertheless, relies heavily on the model assumptions of the measurement errors and Hamiltonian structure (ansatz) errors as well as the eigenstate thermalization hypothesis, and it is hard to give a firm claim for the real-world cases based on these constraints.
Zubida et al.~\cite{zubida2021optimal} estimate the Hamiltonian parameters by constructing equations from Ehrenfest theorem about the derivatives of observables involved in the structure of the Hamiltonian.
The number of measurements scales with precision quartically $\Omega(1/\epsilon^{4})$ due to the trade-off between statistical and systematic errors, but its accuracy would be affected by state preparation and measurement errors.
Hangleiter et al.~\cite{hangleiter2021precise} measure state properties by evolving it to identify the nearest-neighbor coupled Hamiltonian in superconducting systems.
Since the structure and system nature are known, they show that measurements contain the eigenvalues of the Hamiltonian.
By fixing a time evolution as a base, this method can be robust to state preparation errors.
Nevertheless, all of the above three methods need a precise \emph{prior} about the structure of the Hamiltonian, rendering them unfavorable for tasks of learning an unknown system from scratch or reliably verifying Hamiltonian simulation with extra noise terms.

In this work, we report an efficient Hamiltonian learning algorithm that circumvents the above requirements and assumptions. Our method only utilizes short-time Hamiltonian evolution~---~getting rid of the requirement of preparing ground or thermal states, assumes sparse Hamiltonians~---~avoiding the second assumption without specifying the Hamiltonian structure~\footnote{The sparsity is defined with respect to the decomposition of the Hamiltonian in the Pauli matrix basis. Note that realistic Hamiltonian, including molecules, lattice models, quantum field theories, etc., all satisfy the sparsity requirement.}, and exploits ideas from randomized benchmarking (RB)~\cite{Flammia2019}~---~being robust against effects from SPAM errors.
The method requires only local input states, local Pauli measurements, and local Pauli operations inserted in the Hamiltonian dynamics, which is accessible for various experimental platforms, such as superconducting qubits~\cite{neill2021accurately,kjaergaard2019superconducting,krantz2019quantum,mezzacapo2014digital,garcia2015fermion,asaad2016independent,weber2017coherent}, trapped ions~\cite{zhang2017observation,haffner2005scalable,blatt2012quantum}, Rydberg atoms~\cite{saffman2016quantum,ebadi2021quantum}, nuclear magnetic resonance~\cite{cory1998nuclear,o2021quantum}, etc.

After our paper was posted to the arXiv, there come out various interesting proposals toward efficiently learning unknown Hamiltonians~\cite{franca2022efficient,gu2022practical,wilde2022scalably,huang2022learning}.
Wilde et al.~\cite{wilde2022scalably} recruit the first-order Trotter to simulate Hamiltonian with \emph{prior} parameters and use the effective fidelity between the simulated and evolved states to generate the loss function.
By imposing constraints on the loss function, they also show the accuracy guarantee of the optimization solution. 
Huang et al.~\cite{huang2022learning} randomly implement Pauli gates to project the target Hamiltonian onto some stabilizer spaces in the Pauli basis, so the eigenspace of the new Hamiltonian is easily determined.
They achieve the Heisenberg-limited scaling complexity by introducing the phase estimation method as a subroutine and assuming the bounded-degree structure.
We also expect to find more and more interesting proposals for learning an unknown Hamiltonian with fewer assumptions about the Hamiltonian itself.

\para{Contributions.}
For the first time, we introduce a provably efficient and robust method to detect an arbitrary Hamiltonian operator containing sparse decomposition parameters over Pauli basis.
In order to be SPAM error robust, our protocol employs the exponential fitting idea from randomized benchmarking to estimate the absolute values of parameters and uses simplified process tomography equations to fulfill the need of sign estimation.
The sign estimation remains robust as it only returns discrete values. 
Based on this protocol, we give a provable guarantee in Section~\ref{sec:mainresult} to the estimation results sketched from theoretical bounds of all subroutines.
Given the sparsity $s$ and qubit number $n$, we achieve a measurement complexity as $\order{{sn}/{\epsilon^4}}$ where $\epsilon$ depicts the desired precision.
We simulate the protocol on multiple systems with various Hamiltonian structures in Section~\ref{sec:numerical} to verify and monitor the scaling properties, including an 8-qubit molecular Hamiltonian with over 100 nonzero Pauli terms.
All these results conclude that our Hamiltonian learning protocol on $n$-qubit systems is scalable and robust. It is also comparably practical since it does not require the correct Hamiltonian structure or the preparation of eigenstates or thermal states and only needs single-qubit operations.

During the estimation of the absolute values (which we denote by stage 1), the protocol estimates the Pauli fidelity of the Hamiltonian evolution channel and transforms these values to be Pauli error rates, which are equal to squares of the decomposition parameters.
We construct the fidelity estimator by implementing the random circuits~\cite{Flammia2019} in an interleaved manner to estimate the extraneous Hamiltonian channel.
We need to assume that the circuit noise performs benignly as in Assumption~\ref{assump:gtm} and~\ref{assump:Pauli}, and the exponential decay and fittings achieve SPAM-robust estimation.
Suppose Hamiltonian operators to be sparse as in \textbf{A1} and \textbf{A2}, we adopt the sparse transforming idea~\cite{harper2020fast} to reduce the exponential calculation to an efficient sampling.
We also relieve the original assumption about the unbiased Gaussian noise to biased noise.
In the second stage, we introduce an optimization problem inspired by the process tomography protocol~\cite{Shabani2011} to estimate the sign information.
With the assumption about the SPAM errors in Assumption~\ref{assump:SPAM}, the sign estimator can be SPAM robust as it tolerates small disturbances.

\para{Organization.}
The remainder of this paper is organized as follows. 
We provide the basic problem setting and intuition in \Cref{sec:setting}. 
We present an overview and the main ideas of our protocol in \Cref{sec:Main} with a rigorous analysis of the validity of the protocol in \Cref{sec:mainresult}.
The numerical simulations for the Hamiltonian learning protocol are exhibited in \Cref{sec:numerical}.
We conclude the results in \Cref{sec:con}.


\section{Background}\label{sec:setting}
In this section, we introduce the main setting of the Hamiltonian learning problem and some basics thereof.
We then summarize the assumptions about the Hamiltonian in our protocol.
We also introduce the relationship between the time evolution channel and the Hamiltonian operator as a tool to extract the estimation.
In order to better control the SPAM errors and circuit noise, we propose some restrictions to keep them performing benignly.
The assumptions and constraints help to learn the information of the Hamiltonian operator robustly and efficiently.

\subsection{Problem Setting}
The Hamiltonian learning problem aims to estimate the unknown underlying Hamiltonian operator of a quantum system.
Consider an $n$-qubit Hermitian Hamiltonian $H$ with dimension $d\times d$ and $d=2^n$,
which can be decomposed on the Pauli operator basis as
\begin{gather}\label{eq:decomp}
    H=\sum_{P_\alpha \in{\sf P}^n}s_{\alpha}P_{\alpha},
\end{gather}
with real coefficients $s_\alpha$. 
Here the ${\sf P}^n=\{I,X,Y,Z\}^{\otimes n}$ is the \emph{refined Pauli group} consisting of $4^n$ Pauli matrices.
We label each element therein by a $2n$-bit string as $\alpha$ for $P_\alpha$, and Appendix~\ref{sec:PauliConcept} includes the detailed definitions of this labeling and other Pauli-related concepts.
We denote the vector ${\bf s}\in\mathbb{R}^{d^2}$ by the \emph{decomposition parameter}, and each element can be obtained by $s_\alpha=\frac{1}{d}\Tr(HP_\alpha)$ since ${\sf P}^n$ forms a basis under the Hilbert Schmidt norm.
Consequently, the Hamiltonian learning problem is treated as the reconstruction of the decomposition parameters ${\bf s}$. 

In the literature, several Hamiltonian learning methods have been proposed to efficiently learn the decomposition parameters ${\bf s}$~\cite{Bairey_2019,hou2020determining,PhysRevX.8.021026,anshu2020sample,qi2019determining,hangleiter2021precise} by imposing 
\begin{enumerate}[(1)]
    \item a certain Hamiltonian structure, such as nearest-neighbor or local interactions
    \item the ability to prepare an eigenstate or thermal state of the Hamiltonian.
\end{enumerate}
\noindent Yet, these constraints may fail in practical scenarios.

For the first requirement, a realistic system may have different types of interactions, and it would be impractical to assume a specific Hamiltonian structure. For example, we may not have full prior information on the Hamiltonian structure for some partially controllable systems~\cite{altman2021quantum}. Meanwhile, even if we know the Hamiltonian of an ideal quantum system, the imperfection of the system may lead to unknown and complex interactions~\cite{hauke2012can,altman2021quantum,zhang2017observation}, such as the multiqubit crosstalk of superconducting qubits~\cite{krantz2019quantum,kjaergaard2019superconducting,takita2017experimental,kandala2019error,sun2020mitigating}. 
Indeed, the dimension of the vector $\bf s$ increases exponentially with the number of qubits, and a specific structure of the Hamiltonian will restrict the problem size to an exponentially smaller subspace. In this work, we release the assumption to avoid prior information on the Hamiltonian. We notice that while the Hamiltonian structure may be unknown, most real-world physical systems, such as superconducting qubits, cold atoms, etc., generally have sparse decomposition parameters ${\bf s}$.
For convenience, we denote the set of nonzero parameters by ${\bf s}^\star$.
In this work, our protocol and analysis mainly focus on the Hamiltonian with sparse nonzero parameters, and we make the following assumptions to determine the decomposition parameters ${\bf s}$ of the target Hamiltonian.


\begin{assumption*}
\label{assump:HamiltAssump}
Given Hamiltonian $H$, we have
\begin{itemize}
    \item[$\bf A1$] \textit{(Random sparse support)} The support of the nonzero decomposition parameters ${\bf s}^\star$ is chosen uniformly at random from all subsets with size $s$ of the group ${\sf P}^n$. Here $s$ is polynomial with the qubit number, $n$.
    \item[$\bf A2$] \textit{(Spectral gap)} The smallest nonzero parameter of ${\bf s}^\star$ is larger than a constant $\sqrt{\epsilon_0}$.
\end{itemize}
\end{assumption*}
\noindent $\bf A1$ ensures the sparsity of the Pauli decomposition of the target Hamiltonian, which is much more general compared to the structure assumptions, i.e., indicating which Pauli term has a nonzero coefficient.
Although this assumption supposes uniformly random support, this is due to the concern of theoretical analysis and proofs.
In the numerical results, we will show the feasibility of our protocol when learning sparse Hamiltonians with different structures, including a comparably complicated molecular Hamiltonian.
The assumed lower bound of the spectral gap in $\bf A2$ helps to distinguish the nonzero parameters from overwhelming noise effects.
These two assumptions help to narrow the estimation to the sparse parameters ${\bf s}^\star$, which makes the execution efficient regarding the number of qubits.

For the second requirement, preparing the eigenstate or thermal state of an arbitrary Hamiltonian is proven to be QMA-hard~\cite{kempe2006complexity}, and this eigen- or thermal state preparation also seems impractical. Instead of solving those computationally hard problems, a more natural way is to consider the time evolution of the Hamiltonian, which is accessible for both analog and digital quantum computers~\cite{georgescu2014quantum,trabesinger2012quantum,childs2018toward,low2019hamiltonian,sun2021perturbative}. Using the Schr\"{o}dinger equation, 
the time-evolved state for an initial state $\rho$ is represented as a unitary
\begin{gather}
    \mathcal{H}_t(\rho)=U_H(t)\rho U_H(t)^\dagger,
\end{gather}
where $U_H(t)=e^{-iHt}$ corresponds to Hamiltonian $H$ and time $t$. 
Here and henceforth, we use $\mathcal{H}_t$ to denote the \emph{Hamiltonian evolution channel} with time $t$.
For a short enough time, we can approximate the Hamiltonian evolution channel by expanding and truncating the infinite series as
\begin{align}\label{eq:Hamiltonian}
    &\mathcal{H}_t(\rho)
    =\rho+it\sum_{\alpha\in{\sf P}^n}s_\alpha(\rho P_\alpha-P_\alpha\rho)\notag\\
    +&t^2\sum_{\alpha,\beta\in{\sf P}^n}s_\alpha s_\beta\left[ P_\alpha\rho P_\beta-\frac{1}{2}(P_\alpha P_\beta\rho+\rho P_\alpha P_\beta)\right]+o(t^3).
\end{align}
For simplicity, here and after in this paper we will use the index $\alpha\in{\sf P}^n$ to indicate the corresponding operator $P_\alpha\in{\sf P}^n$ when it would not incur ambiguity.
We note that the first- and second-order terms of the evolution channel faithfully store the decomposition parameters of the Hamiltonian operator, which offers the possibility for extracting the Hamiltonian information from the dynamics.
Together, we will utilize the short-time evolution under the target unknown Hamiltonian and present a Hamiltonian learning protocol for any unknown Hamiltonian with sparse decomposition parameters. 

\subsection{Noise Models}
In way to construct a robust Hamiltonian learning protocol, a crucial obstacle remaining is the unknown SPAM errors and circuit noise.
Due to the limitation of the quality and controlling quantum systems, these errors and noise are ubiquitous in NISQ devices.
The imperfect state preparation and readout noise will undermine the availability of measurement outcomes, which renders the estimation untrustworthy.
To make the protocol SPAM robust, we first need  to restrict the strength of these errors; otherwise, the unbounded errors will always be able to destroy all the effective information from measurements.
For theoretical convenience, we will hold the following assumption to quantify the device's SPAM errors.
\begin{assumption}\label{assump:SPAM}
We denote the noisy state by $\tilde{\rho}$ and the noisy measurement operator by $\tilde{M}$.
The noisy device satisfies the following constraints for every pair of positive Pauli eigenstates and Pauli measurements,
\begin{gather}
    \abs{\Tr(\tilde{M}\mathcal{H}(\tilde{\rho}))-\Tr(M\mathcal{H}(\rho))}\leq\tau,
\end{gather}
where $\mathcal{H}$ here represents the Hamiltonian evolution with all possible short evolving $t$ used in the implementation.
The positive bound $\tau$ must be in the region of $\left[0,O\left(t_1\sqrt{\frac{\epsilon_0}{s}}\right)\right]$ where $t_1$ is a typical unit time used in the regression during the sign estimation.
\end{assumption}

In order to eliminate these depicted SPAM effects, we will show shortly that the protocol would choose special circuit components, especially random Pauli gates, and implement them in a interleaved way to build a SPAM-robust estimation.
In this sense, we need to clarify the form and amplitude of the introduced gate noise in a rigorous way.
Following the idea and convention in randomized benchmarking protocols, \cite{Flammia2019,helsen2020general,helsen2019multiqubit} we impose time-independent and Markovian quantum noise.
Furthermore, for every noisy implementation of the Pauli gate, we assume that there exists some certain noise channel $\Lambda$ such that
\begin{gather}\label{eq:independent}
    \tilde{\mathcal{P}}=\Lambda\circ\mathcal{P},\ \forall\,P\in {\sf P}^n,
\end{gather}
where $\mathcal{P}$ is the implementation of the unitary $P$.
We denote the noise effect satisfies Eq.~\eqref{eq:independent} by gate-independent noise.
Therefore, we have the following model for circuit noise.
\begin{assumption}\label{assump:gtm}
The noise from the implementation of quantum circuits is {\bf \emph{time-independent}}, {\bf \emph{Markovian}}, and {\bf \emph{gate-independent}}.
\end{assumption}
\noindent Even though only gate-independent noise allows the RB circuit to construct the standard twirling structure, Wallman~\cite{Wallman2018} proved that a slight gate-dependent disturbance of the gate-independent noise still keeps the twirling results and causes limited harm to the final analysis of RB.
This might give us a clue about the possibility to relax this gate-independent requirement.

An additional assumption is ad-hoc for the execution of the interleaved protocol.
Since the implementation of random Pauli gates carries some noise inevitably, our protocol needs to calibrate and compensate for these effects to extract the net information of the Hamiltonian.
However, in order to eliminate them, the noise is supposed to be diagonal in the superoperator representation, i.e., a stochastic channel.
\begin{assumption}\label{assump:Pauli}
The noise $\Lambda$ of the circuit implementation for Pauli gates is a stochastic channel (Pauli channel), and it is close to the identity according to the operator norm as 
\begin{gather}
    \|\Lambda-\mathcal{I}\|\leq\frac{1}{3}.
\end{gather}
\end{assumption}
Note that Pauli gates are actually implemented from single-qubit pulses in realization. 
From some current experimental verification and detection results \cite{harper2020efficient,Barends2014,somoroff2021millisecond}, the implementation of local operations can be close to their ideal version up to high precision.
Therefore, Assumption~\ref{assump:Pauli} is reasonable.
In fact, the constant $\frac{1}{3}$ can be further enlarged to an arbitrary constant less than 1, which still makes sure that all the Pauli fidelity terms of $\Lambda$ are larger than 0.

\begin{figure*}[t]
    \centering
    \includegraphics[width=2\columnwidth]{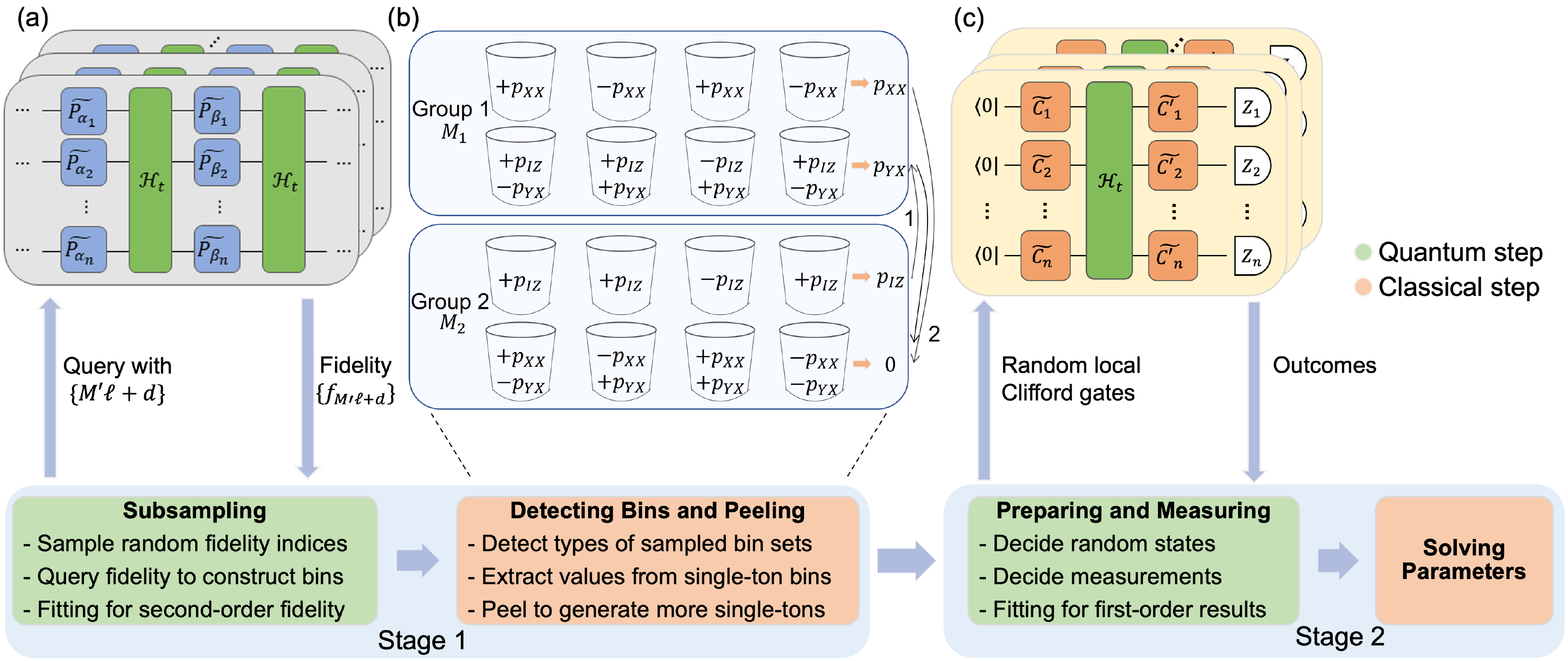}
    \caption{An overview of the Hamiltonian learning execution. The first stage can be viewed as a Pauli error rate estimator along with an independent Pauli fidelity estimator. In circuit (a), we introduce the kernel of the random circuits to get the desired fidelity terms, where $\tilde{P}$ denotes the noisy Pauli gate. By choosing the random matrix $M$, the index $\ell$, and the offset $d$, the protocol accordingly queries the fidelity terms $f_{M'\ell+d}$, where $M'$ is the shuffled $M$ as discussed in Section~\ref{sec:sparse}.
    According to the query, the procedure then prepares input states followed by the cascading structures including the noisy Pauli gates and Hamiltonian evolution.
    The sub-subscripts represent the corresponding single-qubit parts of Pauli gates.
    By measuring circuits with different depths, it acquires the fidelity terms with bounded noise and combines them to be the bin sets as in Eq.~\eqref{eq:hashfunction}.
    The formal illustration of the circuits can be viewed in Section~\ref{sec:fidelityEstimation}.
    In subfigure (b), the procedure uses multiple subsampling groups and bin sets aligned therein to detect the type of each bin set and peel to generate bins with single nonzero Pauli error rates. For example, in this two-qubit case, the procedure finds the first bin set in the first group to be single-ton and extracts the nonzero term and index, $p_{XX}$, with the aid of different signs. For the second bin set, the procedure gets stuck due to the summation of multiple terms. It then switches to the second group to detect the first bin set with result $p_{IZ}$. By learning $p_{IZ}$ is nonzero, the procedure uses it to peel that stuck case and gets $p_{YX}$. 
    Then the procedure uses estimated error rates to peel the second bin set of group two and certifies it as a zero-ton bin.
    This peeling can be viewed more detailedly in Section~\ref{sec:sparse}.
    The second stage aims to complete the Hamiltonian learning by estimating signs of decomposition parameters as illustrated in Section~\ref{sec:sign}. In circuit (c), the procedure inputs random local Pauli eigenstates and measures them with Pauli operators after evolving under Hamiltonian, where $\tilde{C}$ and $\tilde{C'}$ denote the noisy local Clifford gates for state preparation and measurements which are not conjugate. By constructing the relationship between the measurement outcomes and the unknown parameters, we can solve them and recover the sign information. }
    \label{fig:peeling}
\end{figure*}

\section{Protocol Overview}\label{sec:Main}

In this section, we illustrate the main ingredients of our protocol to estimate the unknown Hamiltonian operator. 
The basic idea is to extract the information of the sparse parameter vector $\bf s^\star$ sequentially from the second- and first-order terms of the series expansion of Hamiltonian evolution in Eq.~\eqref{eq:Hamiltonian}. 
In this sense, our protocol is naturally analog to a quantum channel detection scheme. 


We briefly summarize our protocol as follows. We also refer to Figure~\ref{fig:peeling} for a schematic summary. 
\begin{itemize}
    \item At the first stage, the procedure first  extracts the Pauli fidelity with Pauli inputs and measurements in a similar vein to RB as in Figure~\ref{fig:peeling} (a), which is robust against SPAM errors.
    By linear regression with $t^2$, the results correspond to the second-order terms in the evolution channel.
    Then we show a procedure to efficiently transform the extracted fidelity to the squares of $\bf s^\star$, and an example is illustrated in Figure~\ref{fig:peeling} (b).

    \item At the second stage, we construct and solve the process equations to estimate the remaining sign information of $\bf s^\star$ from the first-order terms as in Figure~\ref{fig:peeling} (c). 
    With knowledge of the support position about $\bf s^\star$ from the first stage and the nature that small noise on results will not cause any sign flips, the sign estimator can be executed robustly and efficiently for both quantum and classical complexity.
\end{itemize}

\noindent One may argue that a straightforward strategy is to directly consider the first-order expansion of Eq.~\eqref{eq:Hamiltonian} and extract $\bf s^\star$ with certain input states and measurements. 
However, the first-order term corresponds to coherent (off-diagonal) information of the channel, which is generally sensitive to SPAM errors. Indeed, most quantum channel detection schemes that are robust to SPAM errors, such as randomized benchmarking~\cite{Magesan2012a,Dankert2009,Erhard2019,Helsen2018}, can only extract incoherent (diagonal) information of channels. 
Moreover, as in Eq.~\eqref{eq:linearordereq}, the overhead  for  classical computing to construct the process equations from state preparations and measurements would be exponentially large without the information of the sparse support of the Hamiltonian effect since there are exponential  items on the Pauli basis.
Therefore, we choose this composite algorithm to circumvent these obstacles.


We discuss in detail the two stages in the remainder of this section and refer to the next section and appendix for an informal summary and detailed proofs of the protocol, respectively.


\subsection{Stage 1.~Absolute value estimation}\label{sec:Pauli}

The first stage is to estimate the squares (absolute values) of the decomposition parameters, which come from the second-order effects of the Hamiltonian evolution.

Considering the $\chi$-representation using the Pauli basis, the Hamiltonian evolution channel can be decomposed as
\begin{align}\label{eq:processmatrix}
    \mathcal{H}_t(\rho)=\sum_{\alpha,\beta\in{\sf P}^n}\chi_{\alpha\beta}P_\alpha\rho P_\beta,
\end{align}
where $p_\alpha\coloneqq\chi_{\alpha\alpha}$ is denoted by the \emph{Pauli error rate}.
Comparing Eq.~\eqref{eq:processmatrix} with Eq.~\eqref{eq:Hamiltonian}, the relationship between the parameters in $\bf s$ and Pauli error rates $\{p_\alpha\}_{\alpha\in{\sf P}^n}$ is
\begin{gather}\label{eq:chi}
    p_\alpha=
    \begin{cases}
    s_\alpha^2 t^2+o(t^3)& \text{$\alpha\neq 0$}\\
    1-t^2\sum_{\beta\in{\sf P}^n,\beta\neq0}s_\beta^2+o(t^3)& \text{$\alpha=0$}
    \end{cases}.
\end{gather}
Since the square of decomposition parameter $s_\alpha^2$ is proportional to the Pauli error rate $p_\alpha$ with a small remainder from small $t$, we can approximately obtain every $|s_\alpha|$ given the corresponding Pauli error rate $p_\alpha$. 
Note that the sparsity of $|s_\alpha|$ also implies the sparsity of $p_\alpha$. 
Therefore, the basic idea is to exploit the recently proposed Pauli estimation protocol to learn all the sparse Pauli error rates~\cite{harper2020fast} from Pauli fidelity (which is defined later). 
Nevertheless, there are two challenges if we directly adopt existing Pauli estimation protocols. 
\begin{itemize}
    \item First, the Hamiltonian channel also contains coherent parts, while we only need the incoherent information here.
    \item Second, Pauli error rates of a short-time Hamiltonian channel are only approximately sparse to the second-order expansion. It is actually dense if we consider infinite expansions. 
\end{itemize}
Fortunately, we can circumvent these in our estimation.
For the first one, our circuit would apply the Pauli twirling to the Hamiltonian evolution channel to convert it into a Pauli (diagonal) channel. 
In brief, \emph{Pauli twirling} annihilates the off-diagonal elements of a channel and keeps all the Pauli (diagonal) information the same. 
Denote a Pauli gate by $\mathcal{P}_\alpha$, and the evolution channel $\mathcal{H}_t$ after Pauli twirling must be
\begin{gather}\label{eq:PauliTwirling}
    \mathcal{H}_t^{{\sf P}^n}(\rho)\coloneqq\frac{1}{\abs{{\sf P}^n}}\sum_{\alpha\in{\sf P}^n}\mathcal{P}_\alpha^\dagger\circ\mathcal{H}_t\circ\mathcal{P}_\alpha(\rho)=\sum_{\alpha\in{\sf P}^n}p_\alpha P_\alpha \rho P_\alpha.
\end{gather}
For the second one, we apply regression to the results with different small $t$ to suppress the effects induced by higher-order terms in the evolution. 
We further prove that the protocol is noise-resilient and efficient for large-scale Hamiltonian evolution channels.


\begin{figure}[t]
    \centering
    \includegraphics[width=\columnwidth]{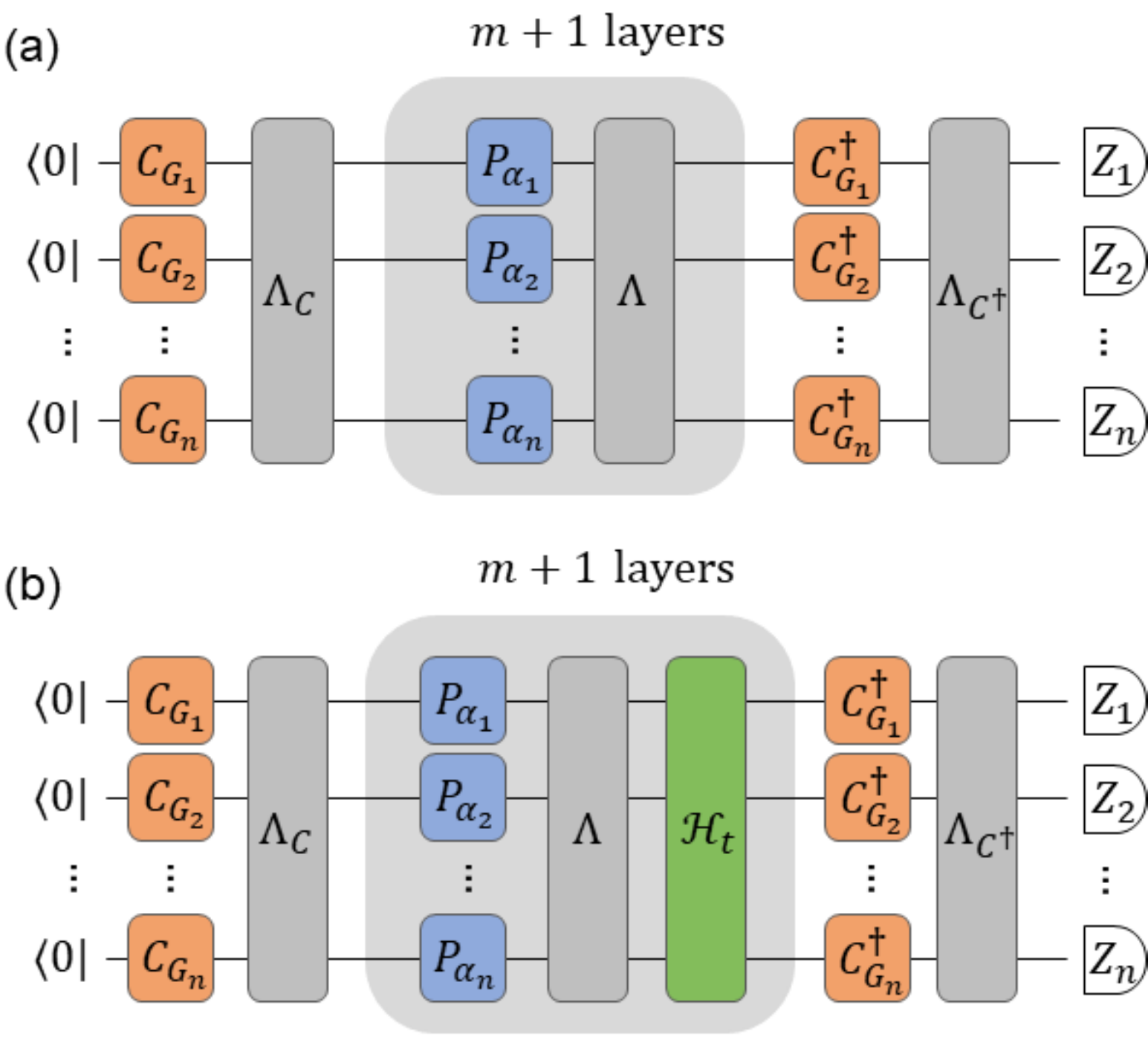}
    \caption{An illustrative diagram of the gate-cascading estimator of the Pauli fidelity of the Hamiltonian evolution channel. This estimator performs robustly against both SPAM errors and the implementation noise from Pauli gates. (a) In order to rule out the effects of the implementation noise $\Lambda$, we recruit the circuit in~\cite{Flammia2019} as a reference circuit. The procedure inputs the stabilizer states of stabilizer group $G$ in the Pauli group and measures the Pauli syndromes using local Clifford gates $C$ and $C^\dagger$.
    Random Pauli gates $P$ twirl noise channels into the diagonal version with only fidelity left. 
    The sub-subscripts of the locally tensored gates represent the corresponding single-qubit parts.
    Therefore, the procedure varies the gate length and gets a decaying factor by tracking the overall syndromes from both Pauli gates and the syndrome measurements, which helps to estimate the fidelity information of $\Lambda$. (b) Since we have access to the short-time evolution of the target Hamiltonian, we implement it in the estimation circuit. Similarly to the previous circuit, we estimate the fidelity of the composite channel $\mathcal{H}_t\circ\Lambda$. Under Assumption~\ref{assump:Pauli}, a ratio estimator contributes to the extraction of the net fidelity information of the Hamiltonian channel from the composite results from both circuits.}
    \label{fig:rbcircuit}
\end{figure}

\subsubsection{Pauli fidelity estimation}\label{sec:fidelityEstimation}
In this subsection, we focus on getting data from measuring quantum circuits consisting of the Hamiltonian evolution.
Particularly, the protocol  extracts Pauli fidelity as a type of Pauli information for further analysis and extraction of the target parameters.

Inputting and measuring the same operator $P_x$, the outcome corresponds to a \emph{Pauli fidelity} of an arbitrary channel $\mathcal{E}$ as defined in the following
\begin{equation}\label{eq:Paulifidelity}
    f_x \coloneqq \frac{1}{2^n}\tr[P_x \mathcal{E}(P_x)] = \sum_{\alpha\in{\sf P}^n}(-1)^{\langle x,\alpha\rangle_p}p_\alpha,\ \forall\, x\in{\sf P}^n,
\end{equation}
where $\langle x,\alpha\rangle_p\in\{0,1\}$ is a useful operation named \emph{Pauli inner product}.
Depending on whether $P_\alpha$ and $P_x$ commute or not, $\langle x,\alpha\rangle_p$ returns 0 or 1.
We give the mathematical definition of this inner product using bit strings of indices in Appendix~\ref{sec:PauliConcept}.
Eq.~\eqref{eq:PauliTwirling} and Eq.~\eqref{eq:Paulifidelity} depict the major tool and the main target of the estimation in this step, respectively.





The fidelity estimator contains two circuits to extract the Pauli fidelity of the target Hamiltonian channel, as shown in Figure~\ref{fig:rbcircuit}. The idea of the circuit (a), as well as the Pauli estimation, is proposed in~\cite{Flammia2019}.
We introduce an additional circuit (b) to rule out the noise effects.
Therefore, each round of execution consists of implementing these two circuits.

As shown in circuit (a) of Figure~\ref{fig:rbcircuit}, we prepare a stabilizer state using a local Clifford gate $C$ and $\ket{0}^{\otimes n}$, where the state is the simultaneous positive eigenstate of all Pauli operators in a stabilizer group $G$ that includes $P_x$ given the target fidelity is $f_x$.
Here the group can be chosen arbitrarily given it contains $P_x$.
The concepts related to the stabilizer group in the Pauli group are reviewed in Appendix~\ref{sec:Stabilizer}.
We then implement random Pauli gates, and
these gates twirl the noise channel $\Lambda$ into a Pauli channel $\Lambda^{\sf{P}^n}$ which is a diagonal matrix in the superoperator representation with elements to be all Pauli fidelity.
In this case, the circuit is equal to the combination of diagonal channel from the twirled noise channel.
We finally measure the syndrome $\beta\in A_G=\faktor{{\sf P}^n}{G}$ of the state, \emph{i.e.}, do $C^\dagger$ to the state and measure it on $Z^{\otimes n}$ basis.
A brief introduction of the syndrome measurement is also in Appendix~\ref{sec:Stabilizer}.

The estimator sums all the indices of the sampled Pauli gates as $a$ and decides the corresponding syndrome or the decomposition of $a$ in the quotient group: $\alpha\in A_G$.
It is found that the distribution probability of the overall error syndrome $\alpha+\beta$ is the linear combination of exponentially decaying fidelity terms with signs depending on the inner products of fidelity indices and the overall syndrome.
We can further isolate every single decaying by adding an additional sign for each overall syndrome and taking an expectation over these syndromes as shown in Lemma 6 of \cite{Flammia2019}.
By varying $m$, we can extract the desired fidelity term $f_x$ of $\Lambda$ robustly as SPAM errors do not affect the decay rates.
An advantage of considering the stabilizer formalism is that we can simultaneously estimate all the fidelity in one stabilizer group within one round of experiments by different post-processing.
While it would be hidden in the asymptotic complexity, this advantage can expedite real-world processing considerably.

In circuit (b) of Figure~\ref{fig:rbcircuit}, we interleave Pauli gates and the additional time evolution and keep the state preparation and measurements the same. The target channel now becomes $\mathcal{H}_t\circ\Lambda$ with the same analysis as the previous circuit. We can, therefore, get the Pauli fidelity $f_{\mathcal{H}_t\circ\Lambda}$.
In the remainder of this paper, we will omit the subscript when it is clear of which channel the fidelity is.

With the executions of both circuits, we can eliminate the Pauli gate noise $\Lambda$ and extract the net information of the Pauli fidelity of $\mathcal{H}_t$ solely.
As in assumption~\ref{assump:Pauli}, we assume that the implementation noise $\Lambda$ is itself a Pauli channel, and a simple ratio estimator helps to detect the corresponding Pauli fidelity of the Hamiltonian evolution $\mathcal{H}_t$.
Therefore, we get a noise-robust estimation of the Pauli fidelities of the Hamiltonian evolution channel $\mathcal{H}_t$.

Observing Eq.~\eqref{eq:chi} and Eq.~\eqref{eq:Paulifidelity}, we realize that fidelity terms of the evolution channel also contain higher-order effects of the evolving time.
In order to extract the net second-order coefficients that directly contain the information of decomposition parameters, our procedure applies an ordinary linear regression for multiple pairs of Pauli fidelity of evolution channels and squares of evolving times.
This helps to estimate the \emph{second-order Pauli fidelity}, $\{f_x^{(2)}\}$, of the Hamiltonian evolution channel, which is the second-order coefficient of the ordinary fidelity of the evolution,
\begin{align}\label{eq:fidelityexpan}
    f_{\mathcal{H}_t,x}=\frac{1}{2^n}\tr[P_x \mathcal{H}_t^{{\sf P}^n}(P_x)]=f_x^{(0)}+f_x^{(2)}t^2+o(t^2).
\end{align}
This fidelity denotes the second-order coefficient of the ordinary Pauli fidelity, and we have the following definition from Eq.~\eqref{eq:Hamiltonian} and Eq.~\eqref{eq:fidelityexpan}
\begin{align}
    f_x^{(2)}\coloneqq\sum_{\alpha\in{\sf P}^n}s^2_\alpha[(-1)^{\langle\alpha,x\rangle}-1],
\end{align}
which is time-independent.

We summarize the whole fidelity estimator in Algorithm~\ref{alg:query}.
\begin{table}[thb]
    \begin{algorithm}[H]
      \caption{\label{alg:query} Fidelity Estimator: \textbf{FE}($\sf G,X$, $l,t_0$)}
      \begin{algorithmic}[1]
        \State {\tt Input:} a target set $\sf X$ belonging to a stabilizer group $\sf G$, a positive integer $l$, and the unit time length $t_0$
        \State Initialize the identity array $I$ with size $|\sf{X}|$ to be all 1
        \State $\hat{f}_\Lambda\leftarrow I-\textbf{FEstimator}({\sf G,X},l)$
        \For{$i=1,\cdots,5$}
            \State $t\leftarrow i\cdot t_0$
            \State $\hat{f}_{\mathcal{H}_t\circ\Lambda,i}\leftarrow I-\textbf{HFEstimator}({\sf G,X},l,t)$
            \State $\hat{f}_{\mathcal{H}_t,i}\leftarrow \hat{f}_{\mathcal{H}_t\circ\Lambda,i}/\hat{f}_\Lambda$\Comment \textit{ratio estimator}
        \EndFor
        \State Do ordinary linear regression for every fidelity in $\sf X$ from $(\hat{f}_{\mathcal{H}_{it_0}},(i\cdot t_0)^2)$ pairs and get $\hat{f}^{(2)}$
        \State\Return the second-order fidelity array $\hat{f}^{(2)}$
      \end{algorithmic}
    \end{algorithm}
\end{table}
The subroutines, \textbf{FEstimator} and \textbf{HFEstimator}, denote the implementation of circuit $(a)$ and $(b)$ in Figure~\ref{fig:rbcircuit} with post-processing, respectively.
The algorithm uses the former estimation as a reference and implements the latter for multiple evolving times to both eliminate the Pauli noise channel and extract the second-order terms. 
For instance, here, we choose five evenly-increasing evolving times for Hamiltonian evolutions.
Dividing the composite fidelity by the reference fidelity of noise channel $\Lambda$, we can estimate the fidelity of the Hamiltonian channel with the corresponding evolving time, as shown in Lemma~\ref{lm:fe}.
We denote this as the \emph{ratio estimator}.
We will elaborate on these subroutines in detail in Appendix~\ref{sec:fidelity}.

For this second-order fidelity estimation, we hold a guarantee about the accuracy and complexity of the results.
In this theoretical statement, we assign the ordinary least square as our regression method for convenience.
\begin{proposition}
\label{lm:f2bound}
Let $\sf G$ be a stabilizer group in ${\sf P}^n$ and ${\sf X}\subset{\sf G}$,
and suppose Assumption~\ref{assump:gtm} and~\ref{assump:Pauli} hold. 
For any small $\epsilon,\delta>0$, the following happens with probability $1-\delta$:
Run Algorithm~\ref{alg:query} $\textbf{FE}({\sf X,G},l,t_0)$ with $t_0$ satisfies $\|\mathcal{H}_{5t_0}-\mathcal{I}\|<\frac{1}{4}$ and  $l = \frac{2}{\epsilon^2} \log\bigl(\frac{2|\kappa|}{\delta}\bigr)$ where $\kappa$ is the set of variant sequence lengths. regression, and the resulting regression array $\hat{f}^{(2)}$ satisfies
\begin{gather*}
    \abs{\hat{f}^{(2)}_x-f^{(2)}_x}\leq\frac{\sigma O(\epsilon)(f_{\mathcal{H}_t,x}r_{\Lambda,x}+r_{\mathcal{H}_t\circ\Lambda,x})}{\left(f_{\Lambda,x}-O(\epsilon)r_{\Lambda,x}\right)t_0^2}+o(t_0^2),
\end{gather*}
where the upper bound term adopts the largest bound term referred from Lemma~\ref{lm:fe} among multiple evolution with different $t$, and $\sigma$ is a constant regrading the regression method.
\end{proposition}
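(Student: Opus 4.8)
The plan is to decompose the deviation $\hat f^{(2)}_x-f^{(2)}_x$ into three contributions and bound each in turn: the statistical error of the raw fidelity estimates returned by \textbf{FEstimator} and \textbf{HFEstimator}; the amplification of that error by the ratio estimator and by the ordinary least squares (OLS) fit inside Algorithm~\ref{alg:query}; and the systematic bias from truncating the expansion~\eqref{eq:fidelityexpan} of $f_{\mathcal{H}_t,x}$ at order $t^2$. The hypothesis $\|\mathcal{H}_{5t_0}-\mathcal{I}\|<\tfrac14$ will be used throughout to keep $f_{\mathcal{H}_{it_0},x}$ away from the degeneracies that would break the linearizations and to force all neglected higher-order remainders to be genuinely subleading.

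\emph{Raw estimates.} First I would apply Lemma~\ref{lm:fe} to the reference circuit and to each interleaved circuit. With $l=\tfrac{2}{\epsilon^2}\log\!\bigl(\tfrac{2\abs{\kappa}}{\delta}\bigr)$ random sequences per length, a Hoeffding bound on the expectation read off at each length together with a union bound over the $\le 2\abs{\kappa}$ exponential fits involved (the fit for $\Lambda$ and the fits for $\mathcal{H}_{it_0}\circ\Lambda$, $i=1,\dots,5$, indexed by the lengths in $\kappa$) shows that, except with probability $\delta$, the bounds of Lemma~\ref{lm:fe} hold simultaneously; in particular $\abs{\hat f_\Lambda-f_{\Lambda,x}}\le\order{\epsilon}\,r_{\Lambda,x}$ and $\abs{\hat f_{\mathcal{H}_{it_0}\circ\Lambda,i}-f_{\mathcal{H}_{it_0}\circ\Lambda,x}}\le\order{\epsilon}\,r_{\mathcal{H}_t\circ\Lambda,x}$ for all $i$, where $r_{\Lambda,x}$ and $r_{\mathcal{H}_t\circ\Lambda,x}$ are the width parameters supplied by that lemma.

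\emph{Ratio estimator.} Composing (twirled) Pauli channels multiplies their Pauli fidelities, so $f_{\mathcal{H}_t\circ\Lambda,x}=f_{\mathcal{H}_t,x}\,f_{\Lambda,x}$, and Assumption~\ref{assump:Pauli} forces $f_{\Lambda,x}>0$, making $\hat f_{\mathcal{H}_{it_0},i}=\hat f_{\mathcal{H}_{it_0}\circ\Lambda,i}/\hat f_\Lambda$ well defined. Linearizing the quotient about $(f_{\mathcal{H}_t\circ\Lambda,x},f_{\Lambda,x})$ and substituting the previous bounds gives
\begin{gather*}
\abs{\hat f_{\mathcal{H}_{it_0},i}-f_{\mathcal{H}_{it_0},x}}\ \le\ \frac{\order{\epsilon}\bigl(r_{\mathcal{H}_t\circ\Lambda,x}+f_{\mathcal{H}_t,x}\,r_{\Lambda,x}\bigr)}{f_{\Lambda,x}-\order{\epsilon}\,r_{\Lambda,x}}\ +\ (\text{second order in the errors}),
\end{gather*}
the second-order remainder being negligible by the smallness of $\|\Lambda-\mathcal{I}\|$ (Assumption~\ref{assump:Pauli}) and of $\|\mathcal{H}_{5t_0}-\mathcal{I}\|$.

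\emph{Regression and truncation.} OLS of the responses $y_i=\hat f_{\mathcal{H}_{it_0},i}$ against the regressors $u_i=(it_0)^2$, $i=1,\dots,5$, returns $\hat f^{(2)}_x=\sum_i w_i\,y_i$ with design-only weights $w_i=(u_i-\bar u)\big/\sum_j(u_j-\bar u)^2$ obeying $\sum_i w_i=0$ and $\sum_i w_i u_i=1$; since every $u_i\propto t_0^2$, one has $\sum_i\abs{w_i}=\sigma/t_0^2$ for a constant $\sigma$ depending only on the five-point design, hence independent of $n$ and of $t_0$. Writing $y_i=f_x^{(0)}+f_x^{(2)}u_i+\rho_i+(\text{ratio-step error})$ with truncation residuals $\rho_i=f_{\mathcal{H}_{it_0},x}-f_x^{(0)}-f_x^{(2)}(it_0)^2$, we obtain
\begin{gather*}
\abs{\hat f^{(2)}_x-f^{(2)}_x}\ \le\ \Bigl(\sum_i\abs{w_i}\Bigr)\max_i\abs{\hat f_{\mathcal{H}_{it_0},i}-f_{\mathcal{H}_{it_0},x}}\ +\ \Bigl|\sum_i w_i\,\rho_i\Bigr|.
\end{gather*}
Using the largest per-$t$ bound from the ratio step in the first term reproduces exactly $\sigma\,\order{\epsilon}\bigl(f_{\mathcal{H}_t,x}r_{\Lambda,x}+r_{\mathcal{H}_t\circ\Lambda,x}\bigr)\big/\bigl((f_{\Lambda,x}-\order{\epsilon}\,r_{\Lambda,x})\,t_0^2\bigr)$. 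For the second term I would observe that $\mathcal{H}_t^{{\sf P}^n}$ is self-adjoint, hence equal to its own adjoint $\mathcal{H}_{-t}^{{\sf P}^n}$, so its Pauli rates and the fidelities $f_{\mathcal{H}_t,x}$ are even in $t$ and the leading neglected Taylor term is $O(t^4)$; thus $\rho_i=O(t_0^4)$, and after multiplication by the weights of size $O(1/t_0^2)$ the combination $\sum_i w_i\rho_i$ is of order $t_0^2$, the $o(t_0^2)$ correction recorded in the statement. Adding the two contributions, all valid simultaneously with probability $1-\delta$, gives the claim.

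\emph{Main obstacle.} The delicate part is the two-stage error propagation: carrying the width parameters $r_{\Lambda,x}$ and $r_{\mathcal{H}_t\circ\Lambda,x}$ of Lemma~\ref{lm:fe} through the quotient and then through the regression without losing the $1/t_0^2$ amplification or weakening the denominator $f_{\Lambda,x}-\order{\epsilon}\,r_{\Lambda,x}$, establishing that $\sigma$ is a genuine constant once $t_0^2$ is factored out, and verifying that both the quadratic-in-error remainder of the quotient and the super-quadratic Taylor remainder are truly lower order under $\|\mathcal{H}_{5t_0}-\mathcal{I}\|<\tfrac14$; a minor additional point is tuning the union bound so that precisely the stated $l$ secures probability $1-\delta$.
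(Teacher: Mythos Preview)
Your proposal is correct and follows essentially the same route as the paper: invoke the per-time fidelity bound of Lemma~\ref{lm:fe} (which packages the raw-estimate and ratio-estimator steps you spell out), feed that bound through the OLS weights $w_i=(u_i-\bar u)/\sum_j(u_j-\bar u)^2$ to extract the factor $\sigma/t_0^2$, and account for the truncation remainder by the evenness of $f_{\mathcal{H}_t,x}$ in $t$. The only cosmetic differences are that the paper bounds the ratio directly via worst-case numerator/denominator rather than by linearization, and it simply asserts that only even-order terms contribute to the Pauli fidelity, whereas you supply the self-adjointness justification.
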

\begin{proof}
The formal proof can be viewed in Appendix~\ref{sec:fidelity}.
The main idea is to combine the accuracy bound from the subroutines of this algorithm.
By tracking the processes of the ratio estimator and the regression, the accuracy of the second-order fidelity can be determined from those inherited bounds.
\end{proof}
\begin{remark}\label{rm: pha1t}
\rm Note that in this analysis, we choose the ordinary least square regression with an instance of five evenly increasing evolving times, and the parameter to represent the noise propagation during fitting is denoted by $\sigma$.
In order to minimize the overall noise in this phase, we can adjust the unit time $t_0$ to reconcile these two noise sources. 
Specifically, under this fitting model, we need to choose $t_0\sim\sqrt[4]{ O(\sigma\epsilon r_\Lambda)}$.
\end{remark}

Algorithm~\ref{alg:query} serves as a supplier of second-order Pauli fidelity for the later bin construction as stated in Algorithm~\ref{alg:subsampling}, which is a key structure in the transform from fidelity to error rates in Sec.~\ref{sec:sparse}.
The bin construction specifies the fidelity terms needed with indices included in the subset ${\sf X}\subset {\sf G}$ since we do not need the whole estimation as explained in the next subsection.
Algorithm~\ref{alg:query} will return the set of desired second-order fidelity $\sf X$ in stabilizer group $\sf G$.
Even though this estimator can estimate all the fidelity in a stabilizer group using only one round of measurements, we still count each query separately regarding the worst-case complexity.


\subsubsection{From Pauli fidelity to Pauli error rates}\label{sec:sparse}
In the last subsection, we have constructed a second-order fidelity estimator.
According to the previous discussion, the rest of this stage is to use this subroutine to get the desired values, namely, to transform the second-order fidelity to squares of decomposition parameters.
We first define the \emph{second-order Pauli error rate} $p_{\alpha}^{(2)}$ which stands for the second-order coefficient of $p_\alpha$.
According to Eq.~\eqref{eq:chi}, we know the specific representation of $p^{(2)}$:
\begin{gather}\label{eq:chi_wot}
    p_\alpha^{(2)}\coloneqq
    \begin{cases}
    s_\alpha^2& \text{$\alpha\neq 0$}\\
    -\sum_{\beta\in{\sf P}^n,\beta\neq0}s_\beta^2& \text{$\alpha=0$}
    \end{cases}.
\end{gather}
Therefore, the target is reduced to implement the transform from second-order fidelity to second-order Pauli error rates.

We know from the definition that the following \emph{Walsh-Hadamard transform} with Pauli inner product depicts the relation between Pauli fidelity and Pauli error rates.
In the remainder of this paper, we use \emph{Walsh-Hadamard transform} to denote the Pauli variant when it causes no ambiguity.
\begin{align}\label{eq:PauliWHT}
    \begin{split}
        p_\alpha=&\frac{1}{4^n}\sum_{x\in{\sf P}^n}(-1)^{\langle x,\alpha\rangle_p}f_x,\ \forall\, \alpha\in{\sf P}^n,\\
        f_x=&\sum_{\alpha\in{\sf P}^n}(-1)^{\langle x,\alpha\rangle_p}p_\alpha,\ \forall\, x\in{\sf P}^n.
    \end{split}
\end{align}
 Since second-order fidelity and Pauli error rates are linear components of the ordinary fidelity and Pauli error rates, they can also be linked by the Walsh-Hadamard transform.
Even though the second-order error rates are sparse according to Assumption $\bf A1$, the second-order fidelity is generally dense. 
A faithful implementation of Walsh-Hadamard transform thus requires the summation of an exponential number of terms.  
Instead of directly following Eq.~\eqref{eq:PauliWHT} with an exponential number of summations, our protocol recruits the idea for Pauli properties estimation~\cite{harper2020fast} to efficiently obtain the sparse Pauli error information.

Along with the idea firstly introduced in Ref.~\cite{harper2020fast}, our protocol utilizes a special structure named \emph{bin}.
A bin $U$ is a linear combination of a subset of noisy second-order fidelity terms, as defined in Eq.~\eqref{eq:hashfunction}, which also indicates the value of a set of second-order Pauli error rates. 
The procedure first determines a size index $b< 2n$.
Then each bin is constructed from $2^b$ randomly sampled second-order fidelity terms.
In order to implement the random sampling, the procedure employs several $2n\times b$ random binary matrices $M$ and transforms them to corresponding $M'$, where $M'\coloneqq J_nM$ and $J_n\coloneqq I_n\otimes X$. 
By enumerating all $b$-bit string $\ell$, the random matrices $M'$ help to disperse them to $2n$-bit which are valid Pauli indices.
This conversion from $M$ to $M'$ aims to keep the dispersion in the form of Eq.~\eqref{eq:hashfunction}.
Furthermore, the procedure recruits additional $2n$-bit vectors $d$ as offsets.
The detailed construction can be viewed in Algorithm~\ref{alg:subsampling}.
After this construction, we observe that each bin consists of $2^{2n-b}$ second-order Pauli error rates and that different bins have no overlap for a given $M$ (or $M'$) as follows,
\begin{align}\label{eq:hashfunction}
    U[j]&\coloneqq\frac{1}{2^b}\sum_{\ell\in\mathbb{F}^b_2}(-1)^{\langle\ell,j\rangle}\tilde{f}^{(2)}_{M'\ell+d}=\sum_{\alpha:M^T\alpha=j}(-1)^{\langle \alpha,d\rangle_p}\tilde{p}^{(2)}_\alpha.
\end{align}
This equation can be checked in Lemma~\ref{lm:prop_hashing_obs}.
Here $\tilde{f}^{(2)}_{M'\ell+d}$ denotes the noisy Pauli fidelity after the linear regression, and $\tilde{p}^{(2)}_\alpha$ is a noisy second-order Pauli error rate with noise comes from the noisy fidelity.
Note that each bin $U[j]$ selects all the error rates of which indices satisfy $M^T\alpha=j$.
The bin construction actually disperses the whole set of error rates into bins with different indices.
This is similar to a hash function that uses $\alpha$ to uniquely decide the bin index $j$ that contains $\tilde{p}^{(2)}_\alpha$.
Therefore, Given a matrix $M$ (or $M'$), all bins form a partition over the second-order Pauli error rates.
We denote every sampling matrix $M$ (or $M'$) a \emph{subsampling group} as indices it disperses form a subgroup in $\mathbb{F}_2^{2n}$.

Based on the sparse support assumption and a large enough subsampling size $b\sim O(\log s)$, the procedure constructs bins that are likely to contain only one or few nonzero Pauli error rates.
Therefore, we only need polynomial number of fidelity terms.
This detection of the number of nonzero terms in every bin can be achieved with the aid of offset strings $d$.
Intuitively, we construct a bin $U[j]$ with multiple offsets $d$, and we denote these bins with the same bin index $j$ and different offsets to belong to a \emph{bin set}.
Notably, all bins in a single bin set contain the same set of second-order error rates with different signs according to Eq.~\eqref{eq:hashfunction}.
Suppose a bin set only contains one nonzero term $p^{(2)}_\alpha$ and some small noise, we would find absolute values of bins in this bin set remain close.
Similarly, absolute values that are all close to zero indicate that bins contain all zero Pauli error rates, while significant but diverse absolute values imply that there are multiple nonzero rates in a bin set.
Therefore, we can classify bin sets by \emph{zero-ton}, \emph{single-ton}, and \emph{multi-ton} bins.
This bin detector is formalized in Algorithm~\ref{alg:bin_detect} of Appendix~\ref{sec:errorRate}.

When we decide the bin is a single-ton, the averaged absolute value would be regarded as the estimation of that second-order Pauli error rate.
Furthermore, the Pauli index can be learned from signs of bins with different offsets.
According to Eq.~\eqref{eq:hashfunction}, when the noise is not large enough to flip the sign, there comes an equation to describe the linkage between a bin's sign and the Pauli index of the nonzero term,
\begin{gather}
    \sgn{U}=\langle \alpha,d\rangle_p,
\end{gather}
where the sign function is defined as follows,
\begin{equation}
    \sgn{x} = \begin{cases}
    0 & \text{if } x \ge 0,\\
    1 & \text{if } x < 0.
    \end{cases}
\end{equation}
By choosing multiple offsets $d$, the procedure can solve the index $\alpha$.

However, it is hard to deal with a multi-ton bin since it is a summation of multiple error rates, and we can hardly extract the information of any single rates directly.
In order to avoid this stuck case, the procedure uses multiple sampling matrices $M$ and constructs different groups of bins.
In each group, the partition of Pauli error rates varies from others.
In this case, if the procedure finds a multi-ton bin, it is supposed to find some single-ton bins in another group with overlapped Pauli error rates.

By peeling out the overlapping error rates detected from single-ton bins, the multi-ton bin contains fewer nonzero error rates and is degraded to single-ton eventually.
We denote this  by the \emph{peeling process}, and we show an example of the peeling part of a 2-qubit system in Figure~\ref{fig:peeling}.
Here in Algorithm~\ref{alg:peelingIn}, we give a concise overview of how the peeling process works as mentioned above in this subsection and temporarily remove all the technical ingredients.
Shortly, this algorithm invokes Algorithm~\ref{alg:subsampling} to initialize several groups of random bins as above-explained.
With these bins, the peeling process enumerates all of them and uses Algorithm~\ref{alg:bin_detect} to detect their nature, including their types, and possible values and indices.
We can then record the desired second-order Pauli rates when finding single-ton bins.
\begin{table}[thb]
\begin{algorithm}[H] 
  \caption{Peeling Decoder\label{alg:peelingIn} (Informal)}
  \begin{algorithmic}[1]
    \State ${\tt Input}:$ Number of sampling matrices $C$, number of bins $2^b$
    \State ${\tt Initialize}:$ $\mathcal{P}\gets$ empty list to store the sparse indices and second-order Pauli error rates ($\alpha,{p}^{(2)}_{\alpha}$)
    \State Run Algorithm~\ref{alg:subsampling} to construct sampling matrices $\{\mathbf{M}_c\}$ and bins $\{\mathbf{U}_c[j]\}$
    \While{we can find new single-ton bins}
    
    	\ForAll{$c\in C, j\in \mathbb{F}_2^b$}
			\State Detect bins $\mathbf{U}_c[j]$ by Algorithm~\ref{alg:bin_detect}
			\If{the bin is single-ton}
			\State record ($\alpha,{p}^{(2)}_{\alpha}$) in $\mathcal{P}$
			\ForAll{other groups $c'\neq c$}
			    \State find bins $\mathbf{U}_{c'}[j']$ contains this ${p}^{(2)}_{\alpha}$
			        \State subtract ${p}^{(2)}_{\alpha}$ from $\mathbf{U}_{c'}[j']$
			 \EndFor
			\ElsIf {the bin is not single-ton}
				 \State continue to next bin
			\EndIf	
			
		\EndFor
    \EndWhile
    \State Return: the stored indices and second-order Pauli error rates $\mathcal{P}$
  \end{algorithmic}
\end{algorithm}
\end{table}
We will further give an exhaustive illustration of this peeling process in Algorithm~\ref{alg:peeling} in Appendix~\ref{sec:errorRate}.

It is worth noting that since the procedure needs to peel out nonzero rates and generate new single-ton bins, it triggers noise propagation among bins during peeling.
In Appendix~\ref{sec:errorRate}, we analyze the behavior of the whole transformation with noisy fidelity terms gained from Algorithm~\ref{alg:query}.
We show that an index array $\mathbf{T}$ indicates the size of variances for noise in each bin.
Moreover, since the noise of fidelity terms from Algorithm~\ref{alg:query} is not necessarily unbiased, we also consider the bias effects in the noise propagation during the peeling process, which is more complicated than the ideal version of transform introduced in \cite{harper2020fast}.
Combining the noise analysis and the mechanisms of the sampling and peeling, as well as Proposition~\ref{lm:f2bound}, we can claim the following.
\begin{proposition}\label{prop:Pauli}
Suppose the Assumption A1 \& A2, \ref{assump:gtm}, and \ref{assump:Pauli} hold.
Execute Algorithm~\ref{alg:peeling} with $t_0$ satisfies $\|\mathcal{H}_{5t_0}-\mathcal{I}\|<\frac{1}{4}$ and $l=\frac{2}{\epsilon^4}\log\left(\frac{4ns|\kappa|}{\delta}\right)$ sequences with a set $\kappa$ of variant sequence lengths, $B=2^b=\max\left\{O(s),O\left(\frac{\epsilon^4}{t_0^4\epsilon_0^2}\right)\right\}$, $C=O(1)$, the unit time length $t_0$, and offsets $\mathbf{D}$ with $P=O(n)$ for each subsampling group.
The transformer will estimate all Pauli error rates $\hat{\mathbf{p}}$
with accurate support information and error bounds to be $\|\hat{\mathbf{p}}-\mathbf{p}\|_\infty\leq O\left(\frac{\epsilon^2}{t_0^2\sqrt{s}}\right)$. 
Therefore, the absolute values of these nonzero decomposition parameters can be estimated by $\|\abs{\hat{s}^\star}-\abs{s^\star}\|_\infty\leq O\left(\frac{\epsilon}{t_0\sqrt[4]{s}}\right)$
The estimation works successfully with probability at least $1-\delta-O\left(\frac{1}{s}\right)$.
\end{proposition}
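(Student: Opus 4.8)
The plan is to chain together three quantitative layers: (i) the accuracy of the second-order Pauli fidelity estimates from Algorithm~\ref{alg:query} (Proposition~\ref{lm:f2bound}), (ii) the noise-propagation analysis of the subsampling/peeling transform, and (iii) the conversion from second-order Pauli error rates $p^{(2)}_\alpha = s^2_\alpha$ back to the absolute values $|s^\star_\alpha|$. First I would fix the target per-coordinate error $\eta$ for the error rates and work backwards to see what $\ell$, $b$, $C$, $P$ we need. Since each nonzero $s^\star_\alpha$ satisfies $s^2_\alpha > \epsilon_0$ by \textbf{A2}, to distinguish single-tons, zero-tons and multi-tons we need the noise per bin to be a constant fraction of $\epsilon_0 t_0^2$ (the minimal size of a nonzero $p^{(2)}_\alpha t^2$-scale contribution); this is what forces the choice $B = 2^b = \max\{O(s), O(\epsilon^4/(t_0^4\epsilon_0^2))\}$ — the first term guarantees (with high probability over the random support \textbf{A1}) that each bin is at worst a bounded-size multi-ton so the peeling terminates, and the second term shrinks the averaged fidelity noise within a bin so the bin-type test in Algorithm~\ref{alg:bin_detect} is reliable.

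Next I would carry out the noise bookkeeping through the peeling. Each bin $U[j]$ is, by Eq.~\eqref{eq:hashfunction}, an average of $2^b$ regression-estimated fidelity terms, each carrying the additive error from Proposition~\ref{lm:f2bound}; averaging gives the index array $\mathbf{T}$ tracking the per-bin variance. Crucially, because the fidelity noise need not be unbiased, I would track a bias term separately from the variance term, propagating both through each peel: when a single-ton value $\hat p^{(2)}_\alpha$ is subtracted from overlapping bins in other groups, its error (bias + fluctuation) is added to those bins, so after at most $O(s)$ peels — bounded because the support has size $s$ — the accumulated error is still $O(\sqrt{s})$ times the single-bin error in the worst case, or better with the $1/\sqrt{s}$-type concentration from the randomness of the $M_c$. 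Setting $l = \frac{2}{\epsilon^4}\log(4ns|\kappa|/\delta)$ makes the per-fidelity error $O(\epsilon^2)$ after the $t_0^2$-regression rescaling (this is where the Remark~\ref{rm: pha1t} choice $t_0 \sim \sqrt[4]{O(\sigma\epsilon r_\Lambda)}$ reconciles statistical and truncation error), and a union bound over the $\le 4ns|\kappa|$ sequences across all bins/groups/offsets/lengths gives the failure probability $\delta$; the extra $O(1/s)$ comes from the probability that the random support makes some bin an un-peelable multi-ton (a balls-in-bins / expander-style argument over the choice of $C=O(1)$ independent matrices $M_c$, exactly as in Ref.~\cite{harper2020fast}). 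This yields $\|\hat{\mathbf p} - \mathbf p\|_\infty \le O(\epsilon^2/(t_0^2\sqrt s))$ and, since the support is recovered exactly whenever no sign flips occur (which the bound guarantees given \textbf{A2}), correct support information.

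Finally I would convert: from $p^{(2)}_\alpha = s^2_\alpha$ and $\hat p^{(2)}_\alpha$ within $O(\epsilon^2/(t_0^2\sqrt s))$ of it, and since $|s^\star_\alpha| \ge \sqrt{\epsilon_0}$ is bounded away from zero, the map $x \mapsto \sqrt{x}$ is Lipschitz there, so $\big\||\hat s^\star| - |s^\star|\big\|_\infty \le O(\epsilon/(t_0\sqrt[4]{s}))$ after absorbing $\epsilon_0$-dependent constants; the $o(t_0^2)$ higher-order remainder from Eq.~\eqref{eq:chi} is dominated by the regression choice and folded into the $O(\cdot)$. The main obstacle I expect is the biased-noise propagation through the peeling recursion: unlike the unbiased setting of \cite{harper2020fast} where errors average down, here one must show the systematic bias does not accumulate coherently across the $O(s)$ peeling steps and across the $C$ groups — controlling this requires carefully exploiting that each subsampling matrix $M_c$ hashes the (fixed, random) support independently, so that a bias injected in group $c$ lands in a "fresh" bin of group $c'$ with high probability, keeping the per-bin accumulated bias at the single-term scale rather than the $s$-fold scale. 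Establishing that, together with the termination of peeling within $O(\log s)$ rounds, is the technical heart of the argument.
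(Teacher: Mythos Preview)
Your plan is structurally the same as the paper's: chain Proposition~\ref{lm:f2bound} for per-fidelity accuracy, track noise through the peeling via the index array $\mathbf{T}$, decompose the failure probability into the peeling-graph events and the per-bin detection events (as in Eq.~\eqref{error_rate}, with the $O(1/s)$ terms coming from \cite{Li2015}), and finish with the square-root conversion using \textbf{A2}. The one substantive difference is the lever you propose for controlling the \emph{bias} through the peeling recursion. You suggest exploiting the independence of the subsampling matrices $M_c$ so that a bias injected in group $c$ lands in a ``fresh'' bin of group $c'$; the paper instead exploits the randomness of the \emph{offsets} $d_{c;t}$. Writing the per-bin bias as $\Delta W_{c;t}[j]=\sum_{\alpha:M_c^T\alpha=j}(-1)^{\langle d_{c;t},\alpha\rangle_p}\Delta^{(p)}_\alpha$ with pairwise-independent random signs over $d$, Lemma~\ref{lm:bias} shows its variance is $O(B\mathcal{B}^2/N)$, i.e.\ exponentially small in $n$, and that this scale is preserved through peeling conditional on the cycle-free event $H$. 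This makes the bias negligible compared to the Gaussian fluctuation $\nu=\mathcal{B}/\sqrt{B}$ at every step, so your accumulation concern evaporates rather than needing the fresh-bin mechanism; the bias is then absorbed into the Markov-tail terms inside Lemma~\ref{lm:failure_bound}. Your fresh-bin argument, by contrast, would only keep the bias at a polynomial scale and would still require a careful recursion over the peeling depth to avoid $s$-fold blowup, which the paper sidesteps entirely.
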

\begin{proof}
The proof relies on several lemmas stated in Appendix~\ref{sec:errorRate}.
The proof starts by fetching the error bounds from Proposition~\ref{lm:f2bound} and analyzing the noise in every bin.
By proving Lemma~\ref{lm:prop_hashing_obs} and Lemma~\ref{lem:tailbound}, it shows that sampling bins serve as hash functions and that the bin detector works accurately with a given level of noise.
Based on these, the index $T$ records the noise size and helps the procedure to adjust the noise expectation.
Therefore, we can prove the success of this peeling process.
The formal proof can be viewed in Appendix~\ref{sec:bounds}
\end{proof}

In summary, the first stage can detect the second-order Pauli error rates of the Hamiltonian evolution channel or, equivalently, the squared decomposition parameters of the Hamiltonian.
The protocol first uses random gate twirling to detect the Pauli fidelity of the Hamiltonian channel, followed by regression to eliminate evolving times. 
It then constructs bins from several sets of second-order fidelity as the estimator of all nonzero second-orderPauli error rates.
The detailed protocol with formalized subroutines is illustrated in Appendix~\ref{Estimator}.
The validity and efficiency of estimation for the whole set of squared decomposition parameters ${\bf s}^\star$ is proven rigorously in Appendix~\ref{sec:bounds}.
We also discuss the possible benefits the estimation gets if some prior knowledge about the underlying structure of the Hamiltonian is available.
The knowledge might be collected from the understanding of the experimental devices or the properties of the tasks running on the devices.
We can expect a more accurate and reliable result with this kind of aid, which is illustrated in Appendix~\ref{sec:prior_info}.

\subsection{Stage 2.~Sign Estimation}\label{sec:sign}
After the first stage, we can detect all nonzero second-order Pauli error rates of the Hamiltonian channel.
Nevertheless, these terms only carry the square values of parameters as stated in Eq.~\eqref{eq:chi_wot}.
As a complement, the protocol for this stage is to estimate all the sign information.
The idea of our sign estimator is based on constructing equations as in quantum process tomography~\cite{PhysRevLett.90.193601,PhysRevA.77.032322,PhysRevLett.78.390} and utilizing the information gained from the first stage.


Suppose the procedure chooses $m$ states and measurement settings $\{(\rho_k,M_k)\}_{k=1}^m$.
As shown in circuit (c) of Figure~\ref{fig:peeling}, the procedure measures these states on the Pauli operators after evolving under the target Hamiltonian.
For each pair, the procedure collects multiple measurement results of the evolved state with different evolving times $\{t\}$.
By the regression over $\{t\}$, the procedure focuses on the first-order coefficients of the measurements. 
According to the expansion in Eq.~\eqref{eq:Hamiltonian}, the first-order measurement for a pair of SPAM settings is depicted by the \emph{process equation} as \begin{gather}\label{eq:linearordereq}
    \Tr(\mathcal{H}^{(1)}_t(\rho)M)=i\sum_{\alpha\in{\sf P}^n}s_\alpha \Tr(\rho [P_\alpha, M]),
\end{gather}
where $\mathcal{H}^{(1)}_t$ is the first-order expansion of the Hamiltonian evolution channel, and $[P_\alpha,M]$ is the commutator.
After the procedure computes all these first-order coefficients from measurements, we collect a series of observations for different SPAM settings $$\mathcal{M}=\left\{\Tr(\mathcal{H}_t^{(1)}(\rho_1)M_1),\cdots,\Tr(\mathcal{H}_t^{(1)}(\rho_m)M_m)\right\},$$ where each corresponds to a linear combination of trace results for commutators as in Eq.~\eqref{eq:linearordereq}.
This reduces the problem of estimating Hamiltonian parameters ${\bf s}$ to solving the following process equations,
\begin{gather}\label{eq:LEQ}
    \mathcal{M}=\Phi \cdot {\bf s},
\end{gather}
where $\Phi$ is the matrix consisting of $\{i \Tr(\rho_k [P_\alpha, M_k])\}_{k,\alpha}$.

Generally, the size of unknown parameters $\bf s$ scales exponentially with $n$, rendering the solution intractable.
In our execution, the sparsity assumption guarantees that there are only sparse nonzero variables to be solved.
Besides, from the first stage, our procedure has extracted the square values of decomposition parameters, which also includes the precise sparse support information.
The size of linear equations is reduced to $s$ by shrinking the target parameter from $\bf s$ to ${\bf s}^\star$ and reducing the coefficient matrix.
Therefore, we have the following equations to illustrate the polynomial-size problem,
\begin{gather}
    \mathcal{M}=\Phi \cdot {\bf s}^\star.
\end{gather}



The problem will be more knotty when SPAM errors are taken into account.
Suppose measurements of the evolved states for the chosen observable are corrupted by additive noise.
The noise will be transmitted in the linear regression stage, and the effects of noise will be enlarged due to the short evolving time $t$.
In this case, the first-order part of measurements $\mathcal{M}$ will also be corrupted by the noise as $\hat{\mathcal{M}}=\mathcal{M}+\omega$.
To suppress these effects, by choosing the coefficient matrix $\Phi$ with random local Pauli eigenstates $\rho$ and Pauli measurements $M$, we can construct a coefficient matrix which limits the noise propagation during solving the unknown parameters.
Regarding the noise and biased observation vector $\hat{\mathcal{M}}$, the original equations will be derived to a new optimization problem,
\begin{align}\label{eq:optimize}
\begin{split}
    \min_x \|x\|_1&\\
    ||\Phi x- \hat{\mathcal{M}}||_2&\leq \epsilon,
\end{split}
\end{align}
where $\epsilon$ is set to avoid the stuck case when the equation has no solution, and $x$ is an unknown vector with dimension $s$.
It is easy to see that when $\epsilon\geq\|\omega\|_2$, the true parameter ${\bf s}^\star$ lies in the feasible set.
The algorithm for sign estimation is shown in the following with a proper choice of $\epsilon\geq\|\omega\|_2$ according to Lemma~\ref{prop:signRIP}.
\begin{table}[thb]
\begin{algorithm}[H] 
  \caption{\label{alg:sign} Sign Estimation: \textbf{SE}($\{\abs{p_\alpha},\alpha\},m,t_1$)}
  \begin{algorithmic}[1]
    \State {\tt Input:} The nonzero error rates support of $\mathcal{P}$ from Algorithm~\ref{alg:peeling}, a positive integer $m$, and the unit time length $t_1$
    \State Prepare $m$ random local Pauli eigenstates and $m$ random Pauli operators for measurements $(\rho_{\gamma_i},P_{\beta_i})_{i=1}^m$
    \State Choose evolving times $\{i\cdot t_1\}$ which are multiples of $t_1$ and construct the Hamiltonian evolution channel $\{\mathcal{H}_t\}$ for evolving time
    \State Measure $\{\Tr(\mathcal{H}_t(\rho_\gamma)P_\beta)\}$ and calculate the coefficient matrix $\Phi$ for Pauli indices belong to $\mathcal{P}$
    \State Process ordinary linear regression on the measurement outcomes along the time $\{i\cdot t_1\}$ and get observations $\hat{\mathcal{M}}$
    \State Calculate $\sigma\coloneqq\frac{\sum\abs{i\cdot t_1-\bar{t}}}{\sum(i\cdot t_1-\bar{t})^2}\cdot t_1$ as the regression coefficient
    \State Choose $\epsilon=\sqrt{m}(\sigma\cdot\tau/t_1+o(t_1))$ for the optimization problem in Eq.~\eqref{eq:optimize}
    \State Solve the organized optimization problem with solution $x^\star$
    \State \Return solution $x^\star$
  \end{algorithmic}
\end{algorithm}
\end{table}

By sampling the input states and measurement operators, the $\Phi$ matrix possesses approximate restricted isometry property (RIP) as defined in Definition~\ref{de:rip}.
This property, by definition, makes sure that the norm of an arbitrary vector after the matrix product is close to the norm of the original vector.
We prove this RIP of the randomly sampled process matrix for all possible decomposition parameters with a high probability according to concentration inequalities in Lemma~\ref{prop:signRIP}, which we refer to as the approximate RIP.
Based on the definition of RIP, the noise of resulting nonzero parameters can be bounded by the chosen relaxation $\epsilon$.
Since the sign estimator aims at extracting the discrete $\pm1$ information, it tolerates a reasonable amount of noise that would not cause a sign flip in every solved parameter.
Therefore, the noise is not destructive for the estimator, given that we use multiple rounds of equations to minimize the statistical effects.
\begin{proposition}\label{prop:phase2}
Suppose the absolute value estimator perfectly returns the support information of decompose parameters of the Hamiltonian, and Assumption A1 \& A2 and \ref{assump:SPAM} hold.
Run Algorithm~\ref{alg:sign} with the support information of decomposition parameters and the unit time length $t_0$.
By setting $m$ to be $O(s)\leq m\leq O\left(\frac{t_1^2\epsilon_0}{\tau^2}\right) $ The solution of Algorithm~\ref{alg:sign} contains perfect sign information of all nonzero decomposition parameters with probability at least $1-\mathrm{e}^{-O(m-s)}$.
\end{proposition}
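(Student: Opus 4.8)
The plan is to read Algorithm~\ref{alg:sign} as an instance of basis–pursuit denoising and to chain three facts: (i) the true sparse vector $\mathbf{s}^\star$ is feasible for the program in Eq.~\eqref{eq:optimize}, (ii) the sampled coefficient matrix $\Phi$ satisfies an approximate restricted isometry property, and (iii) the resulting $\ell_2$ recovery error is too small to flip any sign. Conditioning on the hypothesis that stage~1 returns the exact support, the unknowns collapse to the $s$-dimensional vector $\mathbf{s}^\star$, and $\Phi$ becomes the $m\times s$ matrix with entries $i\Tr(\rho_{\gamma_k}[P_\alpha,P_{\beta_k}])$ over the sampled Pauli-eigenstate/Pauli-measurement pairs $(\rho_{\gamma_k},P_{\beta_k})$.

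\textbf{Step 1 (feasibility).} First I would control the noise vector $\omega$ in $\hat{\mathcal{M}}=\mathcal{M}+\omega$. Each entry of $\hat{\mathcal{M}}$ is an ordinary-least-squares slope extracted from the noisy expectation values $\Tr(\tilde M\mathcal{H}_{it_1}(\tilde\rho))$; by Assumption~\ref{assump:SPAM} every such value deviates from the ideal one by at most $\tau$, so propagating through the regression multiplies the perturbation by the explicit factor $\frac{\sum\abs{it_1-\bar t}}{\sum(it_1-\bar t)^2}=\sigma/t_1$ and leaves a systematic residue $o(t_1)$ coming from the truncation of Eq.~\eqref{eq:Hamiltonian} at first order. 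Hence $\|\omega\|_\infty\le\sigma\tau/t_1+o(t_1)$, so $\|\omega\|_2\le\sqrt m\,(\sigma\tau/t_1+o(t_1))=\epsilon$ for the $\epsilon$ chosen in line~7, and therefore $\mathbf{s}^\star$ lies in the feasible set while the minimizer $x^\star$ obeys $\|\Phi x^\star-\hat{\mathcal{M}}\|_2\le\epsilon$.

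\textbf{Step 2 (RIP).} Next I would invoke Lemma~\ref{prop:signRIP}: for random local Pauli eigenstates $\rho_\gamma$ and random Pauli operators $P_\beta$ the rows of $\Phi$, after the natural $1/\sqrt m$ normalization, are i.i.d., centered, and isotropic up to a known constant on any fixed $2s$-sparse coordinate set, so a covering-number/matrix-Bernstein argument forces $\Phi$ to satisfy the $(2s,\delta)$-RIP of Definition~\ref{de:rip} with $\delta$ below the stable-recovery threshold except with probability at most $\mathrm{e}^{-O(m-s)}$, once $m=\Omega(s)$. This is the step I expect to be the main obstacle: one must pin down the concentration constants for the specific commutator ensemble $\{i\Tr(\rho[P_\alpha,P_\beta])\}$ and keep the normalization consistent throughout, because the RIP quality and the noise scale both enter the same recovery inequality and must be balanced so the final bound is dimensionally honest.

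\textbf{Step 3 (no sign flip and conclusion).} On the RIP event, the classical stable-recovery theorem for $\ell_1$ minimization under $\|\Phi x-\hat{\mathcal{M}}\|_2\le\epsilon$ gives $\|x^\star-\mathbf{s}^\star\|_2\le C\epsilon$ for an absolute constant $C$; since $\mathbf{s}^\star$ is exactly $s$-sparse the usual compressibility term vanishes. Then $\|x^\star-\mathbf{s}^\star\|_\infty\le\|x^\star-\mathbf{s}^\star\|_2\le C\sqrt m\,(\sigma\tau/t_1)+o(t_1)$, and the upper bound $m\le O(t_1^2\epsilon_0/\tau^2)$ is chosen precisely so that this quantity stays strictly below $\sqrt{\epsilon_0}$, which by Assumption~\textbf{A2} lower-bounds every nonzero $\abs{s^\star_\alpha}$. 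Consequently $\sgn{x^\star_\alpha}=\sgn{s^\star_\alpha}$ for every $\alpha$ in the support, i.e.\ $x^\star$ carries the exact signs, and combining this with the $1-\mathrm{e}^{-O(m-s)}$ probability of the RIP event proves the claim. As a closing observation, exactly as in Remark~\ref{rm: pha1t} one can trade the statistical factor $\sqrt m$ against the systematic residue $o(t_1)$ by tuning $t_1$, which is why the admissible range for $m$ is an interval $[O(s),\,O(t_1^2\epsilon_0/\tau^2)]$ rather than a single value.
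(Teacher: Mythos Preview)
Your proposal is correct and follows essentially the same route as the paper: you bound $\|\omega\|_2$ by propagating the SPAM bound $\tau$ through the OLS slope formula, invoke Lemma~\ref{prop:signRIP} to obtain the RIP with failure probability $\mathrm{e}^{-O(m-s)}$, apply the Cand\`es stable-recovery inequality to get $\|x^\star-\mathbf{s}^\star\|_2\le C\epsilon$, and then use the upper bound on $m$ together with Assumption~\textbf{A2} to preclude sign flips. The only cosmetic differences are that the paper establishes the concentration in Lemma~\ref{prop:signRIP} via Hoeffding rather than matrix-Bernstein, and (since the support is already known from stage~1) it proves RIP only over the single fixed $s$-dimensional coordinate set rather than over all $2s$-sparse vectors, which is why the union-bound exponent comes out as $O(m-s)$ rather than involving a $\binom{d^2}{s}$ factor.
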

\begin{proof}
The full proof is shown in Appendix~\ref{sec:signestimate}.
\end{proof}
\begin{remark}
\rm As there exists an upper bound of the $m$ due to the accumulation of the SPAM errors, we cannot directly use larger $m$ to significantly suppress the sign errors. On the other hand, we can choose multiple blocks of equations to estimate signs for multiple times. Based on this, we can further reduce the failure probability by an additional majority vote mechanism when we hold multiple sets of signs.
\end{remark}

In this stage, we propose a protocol to extract the sign information of the Hamiltonian channel by constructing equations from random state preparation and measurements.
With the aid of absolute values from the former stage, the size of this problem is reduced to linear with the sparsity.
Based on the randomness, the equations will be robust against the measurement noise.
Therefore, this efficient and robust estimation fulfills the need for sign estimation.

\section{Main Theorem}\label{sec:mainresult}
Executing the two-stage protocol illustrated above, we can estimate the whole information of the nonzero decomposition parameters ${\bf s}^\star$, namely, we can recover the Hamiltonian operator faithfully from this estimation as in Algorithm~\ref{alg:main}.
\begin{table}[thb]
\begin{algorithm}[H] 
  \caption{Hamiltonian Estimator\label{alg:main}}
  \begin{algorithmic}[1]
    \State {\tt Input:} An oracle to control the unknown Hamiltonian $H$, positive integer $l$, $b$, $C$, $P$ and $m$, and the unit time lengths $t_0$ and $t_1$
    \State {\tt Input:} Offsets $\mathbf{D}$ for subsampling
    \State Call Algorithm~\ref{alg:peeling} with $l$, $b$, $C$, $P$, $t_0$ and offsets $\mathbf{D}$ to construct bin sets with random subsampling groups and estimate all the second-order error rates $\mathcal{P}$
    \State $\text{Supp}({\bf s}^\star)\leftarrow\{\alpha\,|\,p_\alpha\neq0\}$
    \State Call Algorithm~\ref{alg:sign} with $\text{Supp}({\bf s}^\star)$, $m$ and $t_1$ to estimate sign information $\sgn{{\bf s}^\star}$ nonzero parameters
    \State ${\bf s}^\star\leftarrow\{\sqrt{p_\alpha}\cdot\sgn{s_\alpha}\,|\,p_\alpha\in\mathcal{P}\}$
    \State \Return ${\bf s}^\star$
  \end{algorithmic}
\end{algorithm}
\end{table}

We combine the guarantee of our protocol from the guarantees of the two stages since our main protocol is a composite algorithm.
Our guarantee, therefore, does rely on some mild assumptions, which are either inherited from foregoing subroutines or from requirements to bond these stages in a self-consistent way.

We summarize the results in Theorem~\ref{thm:main}, which states that our algorithm can estimate the decomposition parameters of an unknown sparse Hamiltonian with high accuracy and vanishing failure probability.
It shows that our protocol is provably robust against circuit noise and SPAM errors under mild assumptions, which can be viewed from the guarantees of the fidelity estimator and the sign estimator separately.
This algorithm is also provably scalable for both quantum measurement complexity and classical computational complexity.
Moreover, the quantum complexity claimed above is for the worst case, and this complexity can be further reduced by merging all queries of qubitwise commutative Pauli fidelity, which relies on the choices of stabilizer groups elaborated in Appendix~\ref{sec:fidelity}.
This merging can usually reduce the needed rounds of fidelity detectors by one or two orders of magnitude.
Therefore, our protocol provides an implementable method for Hamiltonian learning.
\begin{theorem}\label{thm:main} 
Suppose Assumption A1 \& A2, \ref{assump:SPAM}, \ref{assump:gtm}, and \ref{assump:Pauli} hold. Run Algorithm~\ref{alg:main} with $l=\frac{2}{\epsilon^4}\log\left(\frac{4ns|\kappa|}{\delta}\right)$ sequences for each length where $\kappa$ is the set of variant sequence lengths, $B=2^b=\max\left\{O(s),O\left(\frac{\epsilon^4}{t_0^4\epsilon_0^2}\right)\right\}$, $C=O(1)$, $m=O(s)$, unit times $t_0$ satisfies $\|\mathcal{H}_{5t_0}-\mathcal{I}\|<\frac{1}{4}$ and  and $t_1$, and the offsets $D$ with size $P=O(n)$ for each subsampling group.
The Hamiltonian estimator will return all nonzero decomposition parameters with the perfect support estimation and $\|\hat{\bf s}^\star-{\bf s}^\star\|_\infty\leq O\left(\frac{\epsilon}{t_0\sqrt[4]{s}}\right)$, which succeeds with probability at least $1-\delta-O\left(\frac{1}{s}\right)$.
The circuit measurement complexity of this execution is $\tilde{O}\left(\frac{sn}{\epsilon^4}\right)$ where $\tilde{O}$ notation ignores the logarithmic terms.
The post-processing complexity is $\text{poly}(n,s)$, where poly denotes that the scaling is polynomial with the elements.
\end{theorem}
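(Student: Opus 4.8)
The plan is to assemble Theorem~\ref{thm:main} as a direct composition of Proposition~\ref{prop:Pauli} (Stage 1) and Proposition~\ref{prop:phase2} (Stage 2), then bookkeep the parameter choices, complexities, and failure probabilities. First I would invoke Proposition~\ref{prop:Pauli} with the stated choices $l=\frac{2}{\epsilon^4}\log\bigl(\frac{4ns|\kappa|}{\delta}\bigr)$, $B=2^b=\max\{O(s),O(\epsilon^4/(t_0^4\epsilon_0^2))\}$, $C=O(1)$, $P=O(n)$, and $t_0$ with $\|\mathcal{H}_{5t_0}-\mathcal{I}\|<\frac14$. This yields, with probability at least $1-\delta-O(1/s)$, the exact support $\mathrm{Supp}({\bf s}^\star)$ together with $\bigl\||\hat{s}^\star|-|s^\star|\bigr\|_\infty\le O\bigl(\epsilon/(t_0\sqrt[4]{s})\bigr)$. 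The key point to record here is that the event ``exact support recovery'' is precisely what is needed to feed Stage 2: Algorithm~\ref{alg:sign} only requires $\mathrm{Supp}({\bf s}^\star)$, not the absolute values themselves, so Stage 2's hypothesis ``the absolute value estimator perfectly returns the support information'' is met on this event.

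Next I would condition on that event and apply Proposition~\ref{prop:phase2} with $m=O(s)$ (checking that $O(s)\le m\le O(t_1^2\epsilon_0/\tau^2)$ is consistent with Assumption~\ref{assump:SPAM}, which forces $\tau\le O(t_1\sqrt{\epsilon_0/s})$ and hence $t_1^2\epsilon_0/\tau^2\ge O(s)$, so the window is nonempty). This returns the exact signs $\sgn{s_\alpha}$ on the support, with conditional failure probability at most $\mathrm{e}^{-O(m-s)}$; since $m=O(s)$ with a large enough constant, this is a second vanishing term. Then the reconstruction $\hat{s}_\alpha^\star=\sqrt{\hat{p}_\alpha}\cdot(-1)^{\sgn{s_\alpha}}$ in line~6 of Algorithm~\ref{alg:main} gives $\|\hat{\bf s}^\star-{\bf s}^\star\|_\infty\le O(\epsilon/(t_0\sqrt[4]{s}))$, inheriting the $\ell_\infty$ bound on the absolute values verbatim (the sign is exact, so no additional error enters). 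A union bound over the two stages' failure events gives overall success probability at least $1-\delta-O(1/s)-\mathrm{e}^{-O(m-s)}=1-\delta-O(1/s)$, absorbing the exponential into the polynomial term.

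For complexity, I would add up the circuit counts: Stage 1 runs the fidelity estimator over $C\cdot B\cdot P = O(1)\cdot O(s)\cdot O(n)=O(ns)$ bin-queries, each with $l=\tilde{O}(1/\epsilon^4)$ sequences of bounded length and a constant number ($5$) of evolving times plus the reference circuit, giving $\tilde{O}(ns/\epsilon^4)$ measurements; Stage 2 contributes $O(s)$ equations each with constantly many evolving times and $\tilde O(1)$ shots, i.e.\ a lower-order $\tilde{O}(s)$. So the dominant term is $\tilde{O}(sn/\epsilon^4)$. For post-processing, the peeling decoder (Algorithm~\ref{alg:peeling}) runs in $\mathrm{poly}(n,s)$ by the analysis behind Proposition~\ref{prop:Pauli} (each Walsh--Hadamard-to-bin conversion is over $2^b=O(s)$ terms, peeling touches $O(s)$ bins over $O(1)$ groups), and the Stage 2 optimization in Eq.~\eqref{eq:optimize} is an $\ell_1$-minimization over an $s$-dimensional vector with an $O(s)\times s$ matrix, solvable in $\mathrm{poly}(s)$; together $\mathrm{poly}(n,s)$.

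The main obstacle I expect is not any single inequality but the \emph{self-consistency of the parameter windows across stages}: the two propositions impose coupled constraints — Stage 1 wants $t_0$ small enough that $\|\mathcal{H}_{5t_0}-\mathcal{I}\|<\frac14$ and, via Remark~\ref{rm: pha1t}, $t_0\sim\sqrt[4]{O(\sigma\epsilon r_\Lambda)}$ to balance regression noise against shot noise, while Stage 2 needs $\tau$ (hence the SPAM budget in Assumption~\ref{assump:SPAM}) to lie in $[0,O(t_1\sqrt{\epsilon_0/s})]$ and $m$ in a window whose upper end is governed by that same $\tau$. I would need to verify these are simultaneously satisfiable for the claimed $\epsilon,\delta,s,n$ regime — i.e.\ that choosing $t_0$ as in Remark~\ref{rm: pha1t}, $t_1$ comparably, and $m=O(s)$ does not violate the SPAM constraint — and then confirm the final $\ell_\infty$ bound and the $\tilde O(sn/\epsilon^4)$ count are stated in terms of the \emph{external} parameters only, with all internal quantities ($\sigma$, $r_\Lambda$, $|\kappa|$, the $o(\cdot)$ remainders) either absorbed into constants, logarithms, or lower-order terms.
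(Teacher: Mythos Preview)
Your proposal is correct and follows essentially the same approach as the paper: combine Proposition~\ref{prop:Pauli} and Proposition~\ref{prop:phase2} via a union bound on the failure events, then add up the measurement and post-processing complexities of the two stages. Your bookkeeping of the circuit count (via $C\cdot B\cdot P$ queries times $l$ sequences) and your explicit check that the exponential Stage~2 failure term absorbs into $O(1/s)$ are in fact slightly more careful than the paper's own terse accounting, and your discussion of the coupled parameter windows (the $t_0$, $t_1$, $\tau$, $m$ compatibility) identifies a genuine consistency check that the paper leaves implicit.
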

\begin{proof}
This statement can be regarded as a  combination of Proposition~\ref{lm:f2bound},~\ref{prop:Pauli}, and~\ref{prop:phase2}.
During the combination, we need first to consider the overall precision resulting from subsequently invoking subroutines.
According to the target overall precision, we shall calculate the precision for each subroutine and then count and sum up the required complexities for all these subroutines.
The rigorous counting and derivation are given in Appendix~\ref{sec:mainproof}.
\end{proof}
\begin{remark}
It is true that we can choose other fitting models and strategies for extracting the corresponding information in both stages, and we leave a detailed study of the strategy for future studies.
In the execution of the algorithm, we need two inputs, $t_0$ and $t_1$, to serve as the fitting times of stages 1 and 2, respectively. 
Nevertheless, there is a natural trade-off toward the size of those evolving times $t_0$ and $t_1$.
If the evolving times are too short, this could suppress systematic errors from high-order terms in the expansion, while it would enlarge the errors from imperfect observations and measurements by fitting.
If the times are too long, this will lead to large systematic errors.
Therefore, the balanced choice of $t_0$ and $t_1$ can be viewed in Remark~\ref{rm: pha1t} and \ref{rm:pha2t}, respectively.
To further improve the performance, a possible modification is to consider those higher-order terms in the fitting model to mitigate the systematic errors.
In appendix~\ref{sec:fitting}, we exhibit different trials of executions by fitting higher-order terms and taking the desired coefficients.
This can vastly reduce the amplitude of the systematic errors and allow us to implement a longer time evolution.
Moreover, the idea from~\cite{yuan2016simulating} can help to construct an exponential suppression of the high-order terms with the number of data points.
In general, we expect the choices of evolving times to satisfy that the effects of fluctuation errors from observations dominate.
\end{remark}

\section{Numerical Results}\label{sec:numerical}
In this section, we exhibit several numerical results to show the validity and accuracy of the protocol for learning an unknown Hamiltonian with sparse decomposition parameters on the Pauli basis.
These numerical results work as strong evidence that verifies the foregoing intuitively illustrated protocol.
They also reveal some critical properties of the Hamiltonian learning method in the realistic implementation as we discuss shortly.

In order to simplify the numerical simulation without modifying the core components of the protocol, we make some mild assumptions in the simulation, which would not weaken the results arguably. 
Among all of these simulations, instead of running the cascading circuits to extract the fidelity information as in \Cref{sec:Pauli}, we assume there exists an oracle to return measurement outcomes of input states under the evolution with the input time of the unknown Hamiltonian, which helps to directly get the Pauli fidelity terms of evolution.
In this sense, we count each desired fidelity term as one query, and the number of calls for the oracle represents the total measurements.
This is based on the understanding that the complexity of the first stage is always dominant in the total number of circuit measurements, which makes the tracking of fidelity query a faithful indicator of the total quantum complexity.

As for the simulation of noise effects, we add zero-mean Gaussian noise $\mathcal{N}(0,10^{-3})$ to the queried Hamiltonian fidelity terms to imitate the fluctuation noise from the fidelity estimator.
Along the fitting process and the sparse transformation, the noise effects would be exhibited in the absolute value estimation.
Moreover, we add the Gaussian noise $\mathcal{N}(0,10^{-3})$ to the measurement outcomes in the sign estimation as a representation of the SPAM effects when constructing the process equations.
These errors will be propagated by the fitting and the optimization stated in stage 2, and the resulting signs reflect overall noise effects.
This makes it possible to observe the noise-resilience of our learning protocol from the accuracy of each part in the simulation.\\

\begin{figure}[t]
    \centering
    \includegraphics[width=\columnwidth]{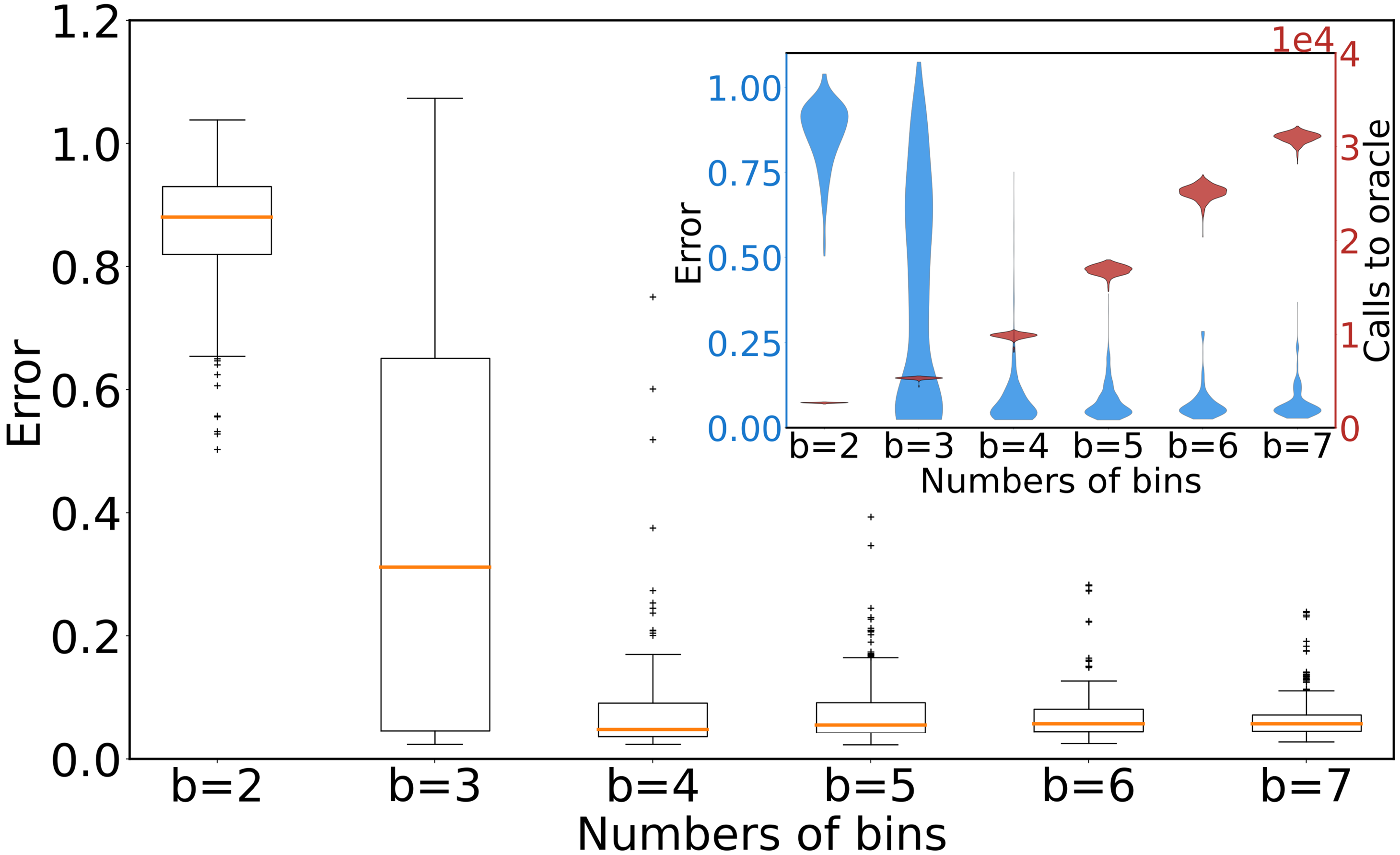}
    \caption{Estimation of the 6-qubit random transverse field Ising model with different choices of the bin numbers, $b$. In the box plot, we exhibit the statistical properties of errors from the outcomes using different $b$.
    In each case, the data is gained from the estimation of 50 random TFIM Hamiltonians with 10 rounds each.
    For every box, the orange line denotes the median of the error distribution, while the upper and lower bounds represent the $75\%$ and $25\%$, respectively.
    By increasing $b$, the procedure estimates all the decomposition parameters with smaller errors. A sharp vanishing over the distribution of the estimation error is witnessed when $b$ grows up to $4$.
    In the inset, we use the violin plot to show a complete view of the distribution of the query numbers and recovery errors.
    Every violin along its axis represents a distribution of the corresponding value, and the widths everywhere illustrate the probability density.
    There is also a sharp shrink of the error bar while the querying number is growing along with the increasing $b$.}
    \label{fig:threshold}
\end{figure}

\begin{figure*}[t]
    \centering
    \includegraphics[width=2\columnwidth]{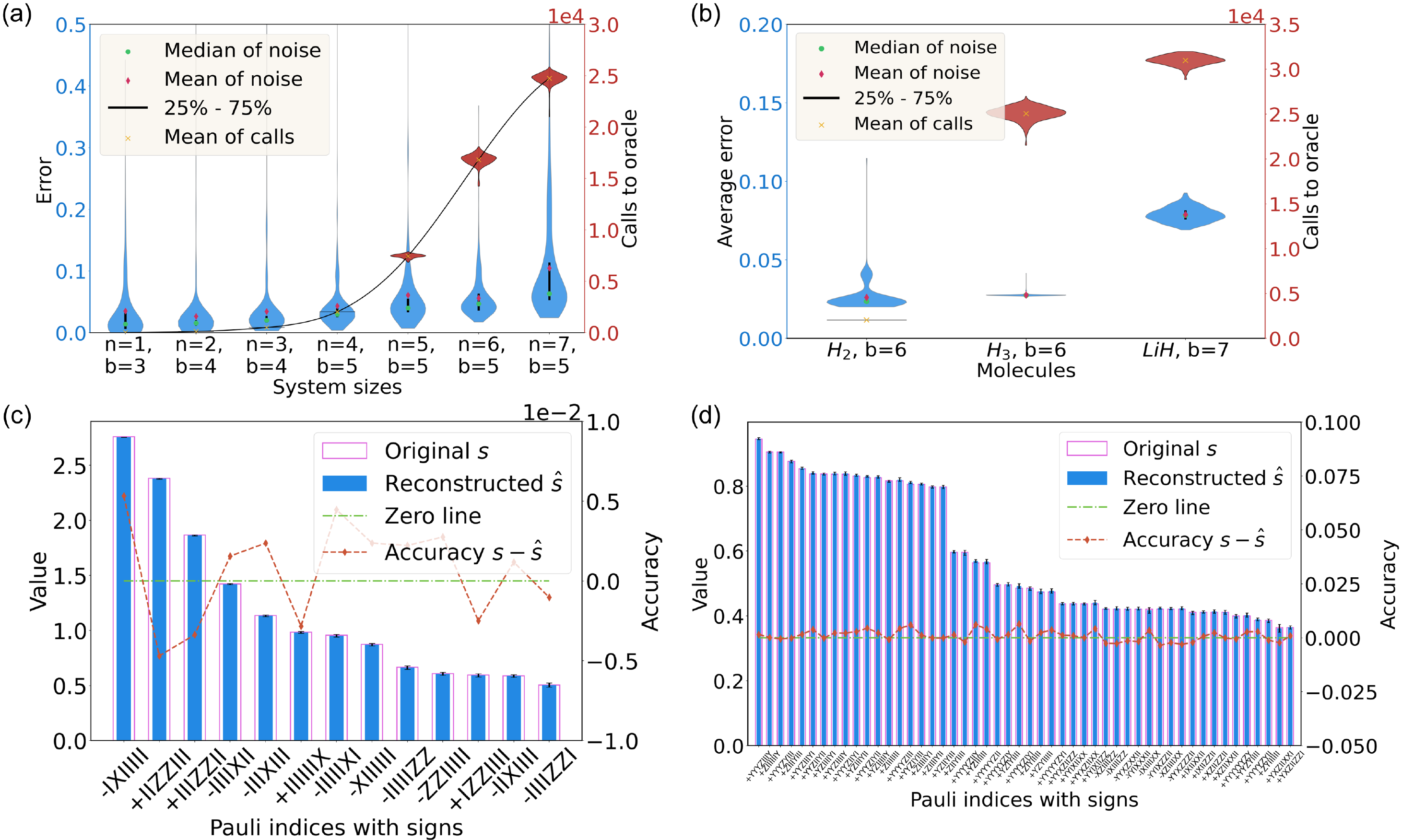}
    \caption{Numerical results for the transverse field Ising model and molecular Hamiltonians. In (a), we show the distribution of the results for random TFIM Hamiltonian with different system sizes varying from 1 to 7.
    For each $n$, the procedure runs for 50 random TFIM Hamiltonian operators with 10 rounds for each to detect the average reconstruction results.
    The parameter $b$ is chosen by witnessing the threshold behaviors by varying $b$ as a trial process. For example, we choose $b=5$ for 6-qubit random TFIM Hamiltonian learning, and all others can be viewed in the appendix.
    In (b), we show the ability to recover different molecular Hamiltonians, H$_2$ (4 qubits), H$_3$ (6 qubits), and LiH (6 qubits). 
    For each molecular system, we execute our method for 500 independent rounds to gather these statistical properties.
    The bin parameters are also chosen according to the threshold behaviors. 
    In (c), we show an average reconstruction among 20 rounds of the 7-qubit TFIM Hamiltonian, and we use the error bars to display the standard deviations of parameters. The terms are sorted by the absolute values of decomposition parameters.
    The labels of the $x$-axis indicate both the Pauli indices as well as the signs of the estimation (the signs are all corrected estimated, as one can see from (a)).
    In (d), the procedure estimates the Hamiltonian for the H$_4$~(8 qubits) molecule, and we exhibit the 50 dominant terms of both ideal and reconstructed parameters among 184 nonzero terms. 
    This result is an average of 8 rounds, and the error bars are constructed according to standard deviations.
    In each round, the simulation uses 5 blocks to vote for the final estimation of signs to suppress the failure probability of sign flips.
    Similar to (c), this plot also uses $x$-axis coordinates to denote the Pauli indices and signs, and the estimation shows great agreement with the ideal parameters.
    }
    \label{fig:numerical}
\end{figure*}

First, we study how different choices of the number of bins $b$ affect the efficacy of the protocol.
We consider the 6-qubit transverse field Ising model (TFIM) Hamiltonian with random interaction parameters,
\begin{gather}\label{eq:randtfim}
    H=\sum_{i=1}^{n-1}\alpha_iZ_iZ_{i+1}+\sum_{j=1}^n\beta_jX_j.
\end{gather}
For different choices of $b$, we run the protocol for 50 randomly chosen Hamiltonian operators, where each of them is estimated for 10 independent rounds. 
As defined in \Cref{sec:Pauli}, the parameter $b$ reflects the number of bin sets in every subsampling group.
With a larger $b$, the procedure hashes Pauli error rates more dispersively, meaning that every bin contains fewer underlying Pauli error rates as well as nonzero error rates.
Since the peeling process relies on the sparsity of nonzero terms in every bin, procedures with larger $b$ would generate more accurate estimations, where the reconstruction error is depicted by the \emph{relative 1-norm distance},
\begin{gather}\label{Eq:relativeerror}
    e_1\coloneqq\frac{\|s-\hat{s}\|_1}{\|s\|_1}.
\end{gather}
On the other hand, a larger $b$ is equivalent to querying more fidelity terms in the subsampling step, which leads to a trade-off of choosing a proper $b$.
In Figure~\ref{fig:threshold}, we can view in the box-plot that the recovery errors decay vastly when the procedure increases $b$ from $3$ to 4. 
The error is even smaller when it continues to increase $b$.
However, as shown in the inset of Figure~\ref{fig:threshold}, the execution overhead also increases with a $b$. 
Therefore, a wise choice of $b$ can be explored as the threshold under which the error is significantly large, where we choose it as $b=5$ in our following simulation.
This remains consistent with the previous analysis about the peeling step, where the procedure expects the bin number to be larger than the sparsity. 

However, we may not have access to the error in practice for an unknown quantum system. 
The problem then rises as to how to decide a proper $b$ in practice.
That relies on adaptively increasing $b$ from $b_0$. to choose the proper bin size $b$. 
For every $b$, we run the protocol multiple times to first observe the peeling process to find out whether some multi-ton bins get stuck after the peeling process.
If this happens with significant frequencies, it indicates that the current $b$ is not enough.
We can further observe the convergence of results from different rounds.
As shown in Figures~\ref{fig:threshold} and~\ref{fig:ising}, the variance between different rounds of estimation will decrease sharply when $b$ is large enough i.e., $b\geq\lceil\log s\rceil$, and we denote the first $b$ to observe this decaying as the threshold.
Since the execution complexity scales as $O(n2^b/\epsilon^4)$ when we choose $b$ arbitrarily, the overall adaptive trial of $b$ to $b=\lceil\log s\rceil$ only increases the complexity $O(ns/\epsilon^4)$ by a constant factor.

Next, we study the scaling of the protocol for different Hamiltonians with varying system sizes. 
We first consider the TFIM with different numbers of qubits. 
As in Figure~\ref{fig:numerical}(a), we show a complete view of the distribution of outcomes in different $n$. 
We choose $b$ for each $n$ according to the abovementioned threshold idea.
The errors are also measured by the relative 1-norm distance, and the major parts concentrate on small errors in each case.
While an interesting observation is that the lower bound of these errors grows with the system size, this is mainly due to the regression error as we execute the ordinary least-square linear regression in this simulation.
Even though we only consider the second-order effects here, We further analyze the effects of different fitting settings in the following and in Appendix~\ref{sec:fitting}, which suggests that considering the higher even-order terms in the fitting can significantly improve the estimation.

The TFIM model still has a simple structure with nearest-neighbour 1D interactions. 
We then consider more complicated examples for benchmark, i.e., molecular Hamiltonians, which have been demonstrated to be potential candidates for quantum computing and quantum simulation~\cite{troiani2005molecular,santini2011molecular}.
We encode the fermionic Hamiltonian with qubits, which generally has much more complicated Pauli decomposition parameters that are not local but have sophisticated multiqubit interaction. 
We note that though the nonzero terms are sparse compared to the exponential size of Pauli decomposition, the nonzero support is still quite large, which grows quartically with the system size. If we could measure each Pauli decomposition parameter to accuracy $\varepsilon$, the reconstruction error defined in Eq.~\eqref{Eq:relativeerror} also grows quartically. In order to show a system-size independent error rate,  we recruit the \emph{average 1-norm distance},
\begin{gather}
    e_a\coloneqq\frac{\|s-\hat{s}\|_1}{\|s\|_0},
\end{gather}
to depict the reconstruct errors in the molecular cases.
In Figure~\ref{fig:numerical}(d), we display the distribution of the estimation of several molecular Hamiltonians only considering the zero- and second-order terms during fitting.
Generally, both the quantum complexity increase when system sizes grow, while the average reconstruction errors remain comparably constant.

In the remaining two subplots, we use the bar plot to exhibit the estimation of every parameter.
In Figure~\ref{fig:numerical}(c), we exhibit the average estimation of 50 independent rounds of executions for a 7-qubit TFIM Hamiltonian with random parameters.
Furthermore, we use the higher-order fitting considering the fourth-order terms of the Hamiltonian expansion to extract a much more precise estimation.
The higher-order fitting will require an additional one or two times query complexity, while it brings a greatly precise estimation.
Therefore, as can be viewed in (c), the distances between pairs of original values and reconstructed values are generally close to zero.
In Figure~\ref{fig:numerical}(d), we estimate the Hamiltonian of the hydrogen chain $H_4$.
We show the 50 dominant terms with both the ideal and reconstructed decomposition parameters \footnote{There are $184$ nonzero terms in the $H_4$ Hamiltonian in total.}.
Since the number of nonzero terms is great and they vary from large terms to very small terms, we also consider the fourth-order terms' effects during fitting to improve the estimation of absolute values.
In order to provide the perfect sign estimation, we add a further majority voting from 5 blocks of resulting signs when deciding the final signs.
In this average estimation of 8 independent rounds, random noise causes little harm to the reconstruction, and our procedure returns precise estimations of the most significant terms among the nonzero parameters.
On the other hand, the error may grow when the underlying value turns small since the procedure cannot divide the noise effects from the small parameters accurately. 
This result sheds light on how the algorithm would execute on Hamiltonian with much more nonzero decomposition parameters.

We refer to Appendix~\ref{sec:addnum} for more discussions on the implementation and numerical results. The codes to generate these simulations are shared in~\cite{GITHUB}.

\section{Conclusion}\label{sec:con}
In this work, we have proposed a universal, robust, and efficient Hamiltonian learning protocol. The protocol is universal in the sense that it estimates any $n$-qubit Hamiltonians with sparse Pauli decomposition parameters without prior knowledge of the Hamiltonian structure. For example, the protocol can estimate Hamiltonians consisting of global interactions, or namely those high-weight Pauli terms. 
Next, our protocol is robust since it estimates the Hamiltonian noise-resiliently against SPAM errors and certain amounts of the circuit and shot noise.
The protocol is also efficient with polynomially scaling classical and quantum complexities.
Besides, the protocol directly exploits the time evolution of the Hamiltonian instead of requiring eigenstates or thermal states of the system, which make it reliable in the real-world implementation. 
All these results have been rigorously proven in our theoretical analysis in Appendix as well as numerically verified for different sizes of transverse field Ising models and different electronic Hamiltonians in the main text.

Our work provides a scalable and more practically-minded Hamiltonian learning method compared to the previous protocols based either on explicit structures or informative states.
This protocol could be applied for calibrating quantum computing hardware, studying the interaction structures of intricate many-body quantum systems, detecting interesting many-body physics phenomena, etc.
As our method exhibits a direct linkage between the fidelity estimation and Hamiltonian learning, we can expect that development in the fidelity estimation also improve the learning of Hamiltonian.
For example, the proposal in~\cite{chen2021quantum} implies that entanglement states can help to reduce the overheads for fidelity estimation, which can also be employed in the Hamiltonian learning. 
Our method only exploits the incoherent part of the Hamiltonian time evolution process to estimate the absolute value of the decomposition parameters, and one may expect to make use of the coherent part (the first-order expansion) to more efficiently extract the Hamiltonian information. Nevertheless, as we have discussed previously, this would require a new SPAM-robust process detection mechanism, which also deserves its own investigation. 
Another point full of practical interest is to combine the fitting method that holds a clear trade-off between the higher-order effects and the random noise with our protocol. 
As inspired by Figure~\ref{fig:numerical} and \ref{fig:varm}, a wise choice of fitting method can significantly improve the estimation, while we still need to balance the effects of the overall errors.
Moreover, a rigorous analysis of more general non-sparse Hamiltonian learning remains open. 
we are interested in the case when the parameters of the unknown Hamiltonian distribute unevenly rather than sparsely.
As a rough idea, compressed sensing might help in this circumstance.


\section*{Acknowledgement}
This work is supported by the National Natural Science Foundation of China Grant No.~12175003.
The numerics is supported by High-performance Computing Platform of Peking University.

\bibliographystyle{unsrtnat}
\bibliography{mybibli}

\onecolumn
\newpage
\appendix

\section{Preliminaries}
\subsection{Pauli Concept}\label{sec:PauliConcept}
In this section, we will introduce the commonly-used Pauli group and the label to represent every element.
We will start from the single-qubit case.

For the single-qubit Pauli group $\mathbb{P}$, it is generated by three Pauli operators, 
\begin{gather}
    \mathbb{P}=\langle X,Y,Z\rangle.
\end{gather}
The multiplication among these generators will cause some extra phase terms for some operators, so this group contains several elements that are not Hermitian.
On the other hand, all the Pauli elements serve either as observables or gates.
Therefore, we only take the \emph{refined Pauli group} $\sf P$ with all Hermitian elements into account.
In group $\sf{P}$, we treat elements with only differences on phase the same, which is equivalent to making a quotient over the phase element
\begin{gather}
    \sf{P}=\faktor{\mathbb{P}}{\langle iI\rangle}.
\end{gather}
Moreover, for every integer $n$, the \emph{refined $n$-qubit Pauli group} $\sf{P}^n$ is constructed from tensor products of $\sf{P}$, and we have
\begin{gather}
    \sf{P}^n=\bigotimes_{i=1}^n\sf{P}.
\end{gather}

In order to specify every element among $4^n$ in the refined $n$-qubit Pauli group, a canonical way will be using a $2n$-bit string to label every Pauli operator in the refined group.
We can start to illustrate this convention from the case of the single-qubit Pauli group $\sf P$.
This refined group contains four elements which are the three Pauli matrices and an identity operator.
As for the label, we refer to a bit string $\alpha$ with a length of two,
\begin{table}[H]
\centering
\begin{tabular}{lllll}
                      & $I$  & $X$  & $Y$  & $Z$  \\
$\alpha$ & 00 & 01 & 10 & 11
\end{tabular}.
\end{table}
Therefore, if we want to label an $n$-qubit Pauli element, we recruit $2n$-bit strings to record the Pauli element in each qubit.
For a $2n$-bit label $\alpha$, the $2i-1^{\text{th}}$ and $2i^{\text{th}}$ bits, $\alpha_{2i-1}\alpha_{2i}$, characterize the corresponding single Pauli operator of $P_\alpha$ on the $i$th qubit.
For example, if we want to represent the operator $XZIX$, then its label must be $\alpha=01110001$.

Since there exists an isomorphism between the $2n$-bit labels and the Pauli elements, we need to define the corresponding operations that use the label to express some properties of Pauli operators.
The basic operation is the addition.
We define the addition of Pauli labels following the multiplication of Pauli operators.
Regardless of phases, the summation is equal to the multiplication of every pair of Pauli operators so that
\begin{gather}
    P_\alpha\cdot P_\beta=P_{\alpha+\beta},\ \ \forall\,P_\alpha,P_\beta\in \sf{P}^n.
\end{gather}
Note that we will use $\alpha\in\sf{P}^n$ for short of denoting $P_\alpha\in\sf{P}^n$ when it would not incur any confusion.
We next introduce a new inner product between labels of two Pauli operators.
For two $n$-qubit Pauli labels $\alpha$ and $\beta$, we define the \emph{Pauli inner product} as follows
\begin{gather}
    \langle\alpha,\beta\rangle_p\coloneqq\sum_{i=1}^n(\alpha_{2i-1}\cdot\beta_{2i}+\beta_{2i-1}\cdot\alpha_{2i})\mod{2},
\end{gather}
where we use the subscript to distinguish the Pauli inner product from the ordinary inner product.
This operation actually describes the commutation of every pair of Pauli operators.
By calculating the qubitwise commutation and accumulating them, the resulting Boolean value indicates the commutation property as
\begin{align}
    [P_\alpha,P_\beta]=\left(1-(-1)^{\langle\alpha,\beta\rangle_p}\right)P_\alpha P_\beta.
\end{align}

\subsection{Pauli Stabilizer Group}\label{sec:Stabilizer}
There exists a kind of special subgroup in the refined Pauli group, namely the \emph{stabilizer group}.
\begin{definition}
    A stabilizer group $S$ is defined to be a subgroup of the refined Pauli group ${\sf P}^n$ such that $\langle a,b\rangle_p=0$ for all $P_a,P_b\in S$.
\end{definition}

It can be directly inferred from the definition that each stabilizer group is a normal subgroup of the refined Pauli group.
Therefore, we can find further the quotient group $A_S\coloneqq \faktor{{\sf P}^n}{S}$.
By group theory, we know that for an arbitrary $P_c\in{\sf P}^n$, there exists a unique pair of $e_c\in A_S$ and $s_c\in S$ so that $P_c=e_c\cdot s_c$.
Since only $e_c$ contributes to the syndrome measurement after implementing $P_c$ to a stabilizer state, we denote $e_c$ by the error syndrome of $P_c$.

It is easy to prove that the maximum size of a stabilizer group is $2^n$ for ${\sf P}^n$.
Therefore, the size of the quotient group of a maximum stabilizer group is $2^n$.
Consider a maximum stabilizer group generated by $n$ generator $G\coloneqq\langle g_1,\cdots,g_n\rangle$, and for any Pauli operator $P_c=e_c\cdot s_c$, we can get the full commuting relation between $P_c$ and $G$ by the commuting relation between $P_c$ and set $\{g_1,\cdots,g_n\}$, which a $n$-bit string can represent.
Therefore, each element in $A_S$ maps onto a certain $n$-bit string, and this serves the basics of the stabilizer algebra.
Consequently, if we implement a simultaneous measurement of all the generators referred to as the syndrome measurement, the results turn out to be the bit string representation of the error syndrome of the equivalent Pauli gate to the whole circuit.

During the analysis of the fidelity estimator in Sec.~\ref{sec:fidelityEstimation}, the protocol tracks and adds the syndromes from sampled Pauli gates and the final measurement.
The summation is processed by the corresponding bit strings by binary summation.
And the result is the overall syndrome of the circuit.

\subsection{Pauli Channel Characterizations}\label{sec:APauli}
The commonly used \emph{Pauli channel} can be regarded as a stochastic quantum channel.
The is due to the definition of Pauli channels,
\begin{gather}\label{eq:Paulichannel}
    \mathcal{E}^{\sf P}(\cdot)\coloneqq\sum_{\alpha\in{\sf P}^n}p_\alpha P_\alpha(\cdot)P_\alpha,
\end{gather}
where coefficients $\{p_\alpha\}$ are \emph{Pauli error rates}.
By the CPTP conditions for every physical channel, we have the following constraints over these Pauli error rates,
\begin{gather}
    \sum_{\alpha\in{\sf P}^n}p_\alpha=1,\ \ p_\alpha\geq0\ \forall\, \alpha\in{\sf P}^n.
\end{gather}
Therefore, the set $\{p_\alpha\}$ constitutes a probability distribution over $4^n$ possible Pauli operations.
Consequently, a Pauli channel with the form as Eq.~\eqref{eq:Paulichannel} can be interpreted as applying a Pauli gate $P_\alpha(\cdot)P_\alpha$ with probability $p_\alpha$.

Moreover, there is a second set of $4^n$ parameters that can fully characterize a Pauli channel $\mathcal{E}^{\sf P}$, which is defined in \Cref{sec:Pauli} as Pauli fidelity,
\begin{gather}\label{eq:Paulifidelity2}
    f_\alpha\coloneqq\frac{1}{2^n}\Tr(P_\alpha\mathcal{E}^{\sf P}(P_\alpha)),\ \ \forall\,\alpha\in{\sf P}^n.
\end{gather}
Note we can also deduce this value from the definition of Pauli channels.
From Eq.~\eqref{eq:Paulichannel}, every Pauli operator is an eigenoperator of an arbitrary Pauli channel, while the eigenvalue of $P_\alpha$ is just $f_\alpha$.
Thereby, some also refer values defined in Eq.~\eqref{eq:Paulifidelity} as \emph{Pauli eigenvalues}.

\section{Pauli Properties Estimation}\label{Estimator}
In this work, we estimate an arbitrary Hamiltonian operator with a sparse decomposition vector efficiently.
As stated in \Cref{sec:setting}, the decomposition vector $\bf s$ can faithfully describe a Hamiltonian operator.
Therefore, the problem of Hamiltonian estimation is, in principle, a scalable task.
According to Eq.~\eqref{eq:Hamiltonian}, the Hamiltonian channel contains the full information of the vector.
We choose the Hamiltonian evolution as the target, so the question is deduced to estimate a Hamiltonian channel.
In this section, we introduce the fidelity estimator in Appendix~\ref{sec:fidelity}. The estimator applies to the reference circuits and the circuit with the Hamiltonian evolution sequentially to extract the net Pauli fidelity of the Hamiltonian channel.
Moreover, we recruit the sparse Pauli error rates estimator in Appendix~\ref{sec:errorRate} to detect the error rates of the Hamiltonian channel, which are directly linked with the decomposition parameters.

\subsection{Pauli Fidelity Estimation}\label{sec:fidelity}
Here we first introduce the procedure and theoretical guarantees presented in \cite{Flammia2019}, and it will be recruited as a subroutine for the estimation of the Hamiltonian channel's Pauli fidelity.
This algorithm was first proposed to estimate the implementation quality of all Pauli gates.
To quantify this quality, we regard the practical implementation of a Pauli gate $P$ as $\tilde{P}=\Lambda\circ P$.
Therefore, Algorithm~\ref{alg:Fidelity} estimates the Pauli fidelity of this noise channel $\Lambda$ over a set of Pauli indices, $\sf X$, which belongs to a stabilizer subgroup $\sf G$ in the whole Pauli group $\sf{P}^n$.

For the implementation, we choose the input as a stabilizer state of stabilizer subgroup $\sf{G}$, and the final measurement will be a syndrome measurement of the group $\sf{G}$.
Note both the input states and the measurements contain some noise.
Hence the protocol must be noise-resilient of SPAM errors.
The execution starts with a series of random Pauli gates.
Intuitively, it first sets the length index $m$ as 0 as a reference, and $m$ increases exponentially from 1.
We denote this sequence of $m$ by $\kappa$.
The procedure then applies $m+1$ layers of uniformly random Pauli gates to the input state for each $m\in\kappa$.
Since the randomly twirled channel will be a diagonal channel in the superoperator representation, the expected channel of this circuit equals the composition of $m$ identically diagonal channels with some SPAM error effects, where the diagonal channel is equal to the diagonal part of the noise channel, $\Lambda$, namely, the Pauli fidelity terms of $\Lambda$.
By employing the stabilizer states and syndrome measurements, the results bring some diagonal information about the noise channel tarnished by SPAM errors.
More specifically, the results will be an exponential decay with the length index $m$ with factors to be the diagonal terms.
When a certain measurement result decays down to one-third of the reference, the algorithm then collects all results along $m\in\kappa$ to fit and get the diagonal information.
The Algorithm~\ref{alg:Fidelity} is shown in the following, where $l$ denotes the number of random sequences for every  length needed to construct the random estimator.
\begin{table}[thb]
\begin{algorithm}[H]
	\caption{\label{alg:Fidelity} FEstimator(${\sf G,X},l$)}
	\begin{algorithmic}[1]
	\State {\tt Input:} a target set $\sf X$ belonging to a stabilizer group $\sf{G}$, and a positive integer $l$
	\State Initialize an array $\hat{r}_\Lambda$ with size $|\sf X|$ to be all -1; Initialize an array $\bm{V}$ with size $|\sf X|$ to be all 0
	\State Input states will be positive stabilizer states $\tilde{\rho}_{\sf G}$; The measurement will be syndrome measurement $M_{\sf G}$ of $\sf G$  
	   \ForAll{$k\in[l]$}
	    \State Apply a random Pauli gate $P_{\alpha}\in\sf{P}^n$ to state $\tilde{\rho}_{\sf{G}}$, and perform measurement $M_{\sf{G}}$ with result $b$
        \ForAll{$x\in \sf{X}$}
         \State $\bm{V}[x]+=(-1)^{\langle x,\alpha+b\rangle_p}$
        \EndFor
	\EndFor
	\State $\hat{V} \leftarrow \frac{1}{l}\bm{V}$; $\bm{V}\leftarrow \bm{0}$; $m \leftarrow 1$
        \While {$\exists x \in \sf{X}$ such that $\hat{r}_\Lambda[x]= -1$}
        \ForAll{$k\in[l]$}
	    \State Apply $m+1$ random Pauli gates $\{P_{\alpha_0},\cdots,P_{\alpha_m}\}$ to state $\tilde{\rho}_{\sf G}$, and perform measurement $M_{\sf G}$ with result $b$
        \ForAll{$x\in \sf{X}$}
         \State $\bm{V}[x]+=(-1)^{\langle x,\sum_{i=0}^m \alpha_i+b\rangle_p}$
        \EndFor
	\EndFor
            \State $\hat{W} \leftarrow \frac{1}{l}\bm{V}$; $\bm{V}\leftarrow \bm{0}$
            \ForAll {$x\in \sf{X}$ such that $\hat{r}_\Lambda[x]=-1$}
                \State $v \leftarrow \hat{V}[x]$ and $w \leftarrow \hat{W}[x]$
    	        \State If $w \leq v / 3$ and $w,v>0$, $\hat{r}_\Lambda[x] \leftarrow 1 - \bigl(w/v\bigr)^{1/m}$
        	    \State Else If $w\le0$ or $v\le0$, $\hat{r}_\Lambda[x] \leftarrow 1$
        	\EndFor
        	\State $m\leftarrow2m$
        \EndWhile
	\State \Return the array $\hat{r}_\Lambda $
	\end{algorithmic}
\end{algorithm}
\end{table}

Even though the above algorithm works for a subset of a stabilizer group, we can simply generalize this method to detect a general subset $\sf{X}$ in the Pauli group.
To achieve that, we employ the concept of stabilizer covering.
For a set $\sf{X}$ of Pauli operators, its stabilizer covering is a collection of stabilizer groups, of which the union set covers $\sf{X}$.
Therefore, we can divide the target set exclusively into multiple stabilizer groups and execute the algorithm for each piece to estimate the whole set.
\begin{proposition}[Rephrasing Proposition 8 in~\cite{Flammia2019}]
\label{prop:RatioProp}
Let $\sf G$ be a stabilizer group in ${\sf P}^n$ and ${\sf X}\subset{\sf G}$.
Suppose the noise $\Lambda$ satisfies the assumption~\ref{assump:gtm} and $\|\Lambda-\mathcal{I}\|<\frac{1}{2}$, and the following holds with probability $1-\delta$ by choosing small $\epsilon,\delta>0$:
Running ${\bf FEstimator}({\sf G,X},l)$ with $l = \frac{2}{\epsilon^2} \log\bigl(\frac{2|\sf{X}||\kappa|}{\delta}\bigr)$ where $\kappa$ is the set of exponentially increasing sequence lengths. 
The procedure uses at most $O\left(\log{\frac{1}{\Delta_{\sf{X}}}}\right)$ sequence lengths with largest length $m$ to be $O\bigl( \tfrac{1}{\Delta_{\sf{X}}}\bigr)$, where $\Delta_{\sf{X}}$ denotes the smallest residual $r_\Lambda\coloneqq1-f_\Lambda$ among the set $\sf{X}$ regarding Pauli fidelity $f_\Lambda$ of noise channel $\Lambda$.
Moreover, the output $\hat{r}_\Lambda$ satisfies
\begin{align*}
    |\hat{r}_{\Lambda,x} - r_{\Lambda,x}| \leq O(\epsilon) r_{\Lambda,x}\ \ \forall x\in\sf{X}.
\end{align*}
\end{proposition}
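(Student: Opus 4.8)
\textbf{Proof plan for Proposition~\ref{prop:RatioProp} (the rephrased statement from \cite{Flammia2019}).}
Since this is a restatement of an existing result, the plan is to reconstruct the argument along the lines of the original Pauli-noise estimation analysis, adapting it to the present notation and to Assumption~\ref{assump:gtm}. The skeleton has three pieces: (i) show that the random-Pauli-gate sequences implement, in expectation, a composition of the diagonal (Pauli-twirled) part of $\Lambda$; (ii) show that the syndrome-measurement post-processing with the sign $(-1)^{\langle x,\sum_i\alpha_i+b\rangle_p}$ isolates a single exponentially decaying eigenvalue $f_{\Lambda,x}^{\,m}$ up to a constant (SPAM-dependent but $m$-independent) prefactor; (iii) use a Hoeffding/Chernoff bound over the $l$ sampled sequences, plus the stopping rule ($w\le v/3$), to convert the estimate of the ratio $\hat W/\hat V$ into the claimed multiplicative error on $r_{\Lambda,x}=1-f_{\Lambda,x}$.

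First I would set up the twirl: for a fixed sequence of Pauli gates $P_{\alpha_0},\dots,P_{\alpha_m}$ inserted around copies of $\Lambda$, use the identity $\mathcal{P}_\alpha^\dagger\circ\Lambda\circ\mathcal{P}_\alpha$ and the fact that averaging uniformly over $\alpha\in{\sf P}^n$ projects $\Lambda$ onto its Pauli-diagonal part $\Lambda^{{\sf P}^n}$, whose superoperator eigenvalues are exactly the Pauli fidelities $f_{\Lambda,y}$. Because the ideal gates compose to identity (they cancel in the syndrome bookkeeping, which is why the running sum $\sum_i\alpha_i$ is tracked), the expected channel of the depth-$(m+1)$ circuit is $(\Lambda^{{\sf P}^n})^{m}$ up to the fixed noisy-state and noisy-measurement endpoints. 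Then, projecting onto the stabilizer group $\sf G$ and reading out the syndrome, the quantity $\hat W[x]$ concentrates (over the random sequences) around $A_x f_{\Lambda,x}^{\,m}$, where $A_x=\langle\!\langle \tilde M_x|\cdot|\tilde\rho_x\rangle\!\rangle$ collects the SPAM factor and is independent of $m$; likewise $\hat V[x]$ concentrates around $A_x f_{\Lambda,x}$ (the $m=1$ reference). This is Lemma~6 of \cite{Flammia2019} in disguise, and I would cite it rather than re-derive the sign-averaging combinatorics in full.

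Next I would handle the statistics and the stopping rule. With $l=\tfrac{2}{\epsilon^2}\log(2|{\sf X}||\kappa|/\delta)$ sequences at each length, Hoeffding plus a union bound over the $|{\sf X}|$ targets and the $|\kappa|$ lengths gives $|\hat V[x]-\mathbb{E}\hat V[x]|\le O(\epsilon)$ and similarly for $\hat W$, simultaneously, with probability $1-\delta$. The condition $\|\Lambda-\mathcal{I}\|<\tfrac12$ guarantees $f_{\Lambda,x}>0$ and bounded away from $0$ in a controlled way, so $A_x$ and the true ratios are bounded below; the stopping criterion $w\le v/3$ ensures the estimator only fits once $f_{\Lambda,x}^{\,m}\le \tfrac13$, i.e. $m=\Theta(1/r_{\Lambda,x})$, which both controls the number of lengths at $O(\log(1/\Delta_{\sf X}))$ and keeps the additive sampling error small relative to the signal $f_{\Lambda,x}^{\,m}$. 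Taking the $m$-th root, $\hat r_{\Lambda,x}=1-(\hat W/\hat V)^{1/m}$, and propagating the $O(\epsilon)$ additive errors through the logarithm (using $m\,r_{\Lambda,x}=\Theta(1)$ so that $\log f_{\Lambda,x}^{\,m}=\Theta(1)$) yields the multiplicative bound $|\hat r_{\Lambda,x}-r_{\Lambda,x}|\le O(\epsilon)\,r_{\Lambda,x}$.

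\textbf{Main obstacle.} The delicate point is the error-propagation through the root-and-ratio step: a naive bound gives an additive $O(\epsilon)$ error on $r_{\Lambda,x}$ rather than the multiplicative $O(\epsilon)r_{\Lambda,x}$ claimed, and recovering the multiplicative form requires using that the fitting is performed precisely at the depth $m$ where the signal has decayed to a constant fraction (so that $\epsilon$-sized fluctuations are an $\epsilon$-fraction of the signal, and the $1/m$ root rescales the relative error back down to $O(\epsilon)$ on $r_{\Lambda,x}$ itself). Getting the constants and the interplay with the $w\le v/3$ threshold right is the crux; everything else is bookkeeping that can be offloaded to \cite{Flammia2019}.
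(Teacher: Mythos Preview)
Your reconstruction is correct and faithfully outlines the Flammia--Wallman argument: Pauli twirl to diagonalize $\Lambda$, sign-weighted syndrome averaging to isolate a single decay $A_x f_{\Lambda,x}^{\,m}$, Hoeffding plus union bound for concentration, and the stopping rule $w\le v/3$ to place $m$ at $\Theta(1/r_{\Lambda,x})$ so that the $m$-th root converts additive sampling error into the multiplicative bound. The obstacle you flag is exactly the nontrivial step.

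The only point of comparison is that the paper does not actually prove this proposition at all: it is stated purely as a rephrasing of Proposition~8 in \cite{Flammia2019} and used as a black box, with the proof deferred entirely to that reference. Your proposal therefore goes further than the paper does, supplying the argument the paper merely cites. One small slip: in Algorithm~\ref{alg:Fidelity} the reference value $\hat V$ is collected from a single random Pauli gate (the block before the \texttt{while} loop), so it corresponds to the SPAM factor $A_x$ alone rather than $A_x f_{\Lambda,x}$; this does not affect the structure of your argument but changes which power of $f_{\Lambda,x}$ appears in the ratio.
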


Beyond this standard Pauli fidelity estimation for the implementation noise of Pauli gates, we want further to detect the fidelity information of the Hamiltonian evolution.
The idea of this detection is similar to that of the standard procedure for the noise channel.
We will use random Pauli gates to twirl the desired Hamiltonian channel.

For convenience, we denote the extended algorithm by $\textbf{HFEstimator}({\sf G,X},l,t)$ where the additional parameter $t$ represents the evolving time of the Hamiltonian channel.
The major difference between this protocol and the original one is that the new procedure will estimate the composite channel of the implementation noise and the Hamiltonian evolution, which is $\mathcal{H}_t\circ \Lambda$.
After implementing a random Pauli gate, the new procedure appends the Hamiltonian channel $\mathcal{H}_t$ to the circuit, as shown in Algorithm~\ref{alg:HFidelity}.
In summary, the modified procedure will implement $m+1$ random gates along with the channel $\mathcal{H}_t$, alternatively.
The circuit is illustrated in Figure~\ref{fig:rbcircuit}(b).
\begin{table}[thb]
\begin{algorithm}[H]
	\caption{\label{alg:HFidelity} HFEstimator(${\sf G,X},l,t$)}
	\begin{algorithmic}[1]
	\State {\tt Input:} a target set $\sf X$ belonging to a stabilizer group $\sf{G}$, a positive integer $l$, and the evolving time $t$
	\State Initialize an array $\hat{r}_{\mathcal{H}\circ\Lambda}$ with size $|\sf X|$ to be all -1; Initialize an array $\bm{V}$ with size $|\sf X|$ to be all 0
	\State Input states will be positive stabilizer states $\tilde{\rho}_{\sf G}$; The measurement will be syndrome measurement $M_{\sf G}$ of $\sf G$  
	   \ForAll{$k\in[l]$}
	    \State Apply a random Pauli gate $P_{\alpha}\in\sf{P}^n$ and Hamiltonian evolution $\mathcal{H}_t$ to state $\tilde{\rho}_{\sf{G}}$; Perform measurement $M_{\sf{G}}$ with result $b$
        \ForAll{$x\in \sf{X}$}
         \State $\bm{V}[x]+=(-1)^{\langle x,\alpha+b\rangle_p}$
        \EndFor
	\EndFor
	\State $\hat{V} \leftarrow \frac{1}{l}\bm{V}$; $\bm{V}\leftarrow \bm{0}$; $m \leftarrow 1$
        \While {$\exists x \in \sf{X}$ such that $\hat{r}_{\mathcal{H}\circ\Lambda}[x]= -1$}
        \ForAll{$k\in[l]$}
	    \State Apply $m+1$ random Pauli gates $\{P_{\alpha_0},\cdots,P_{\alpha_m}\}$ interleaved with $\mathcal{H}_t$ to $\tilde{\rho}_{\sf G}$; Perform measurement $M_{\sf G}$ with result $b$
        \ForAll{$x\in \sf{X}$}
         \State $\bm{V}[x]+=(-1)^{\langle x,\sum_{i=0}^m \alpha_i+b\rangle_p}$
        \EndFor
	\EndFor
            \State $\hat{W} \leftarrow \frac{1}{l}\bm{V}$; $\bm{V}\leftarrow \bm{0}$
            \ForAll {$x\in \sf{X}$ such that $\hat{r}_{\mathcal{H}\circ\Lambda}[x]=-1$}
                \State $v \leftarrow \hat{V}[x]$ and $w \leftarrow \hat{W}[x]$
    	        \State If $w \leq v / 3$ and $w,v>0$, $\hat{r}_{\mathcal{H}\circ\Lambda}[x] \leftarrow 1 - \bigl(w/v\bigr)^{1/m}$
        	    \State Else If $w\le0$ or $v\le0$, $\hat{r}_{\mathcal{H}\circ\Lambda}[x] \leftarrow 1$
        	\EndFor
        	\State $m\leftarrow2m$
        \EndWhile
	\State \Return the array $\hat{r}_{\mathcal{H}\circ\Lambda} $
	\end{algorithmic}
\end{algorithm}
\end{table}

Similar to Algorithm~\ref{alg:Fidelity}, this algorithm actually replaces the noise channel $\Lambda$ by the composite channel $\mathcal{H}_t\circ\Lambda$.
Therefore, the resulting estimation from Algorithm~\ref{alg:HFidelity} keeps very similar guarantees to Proposition~\ref{prop:RatioProp}.

\begin{corollary}
\label{lm:hfPrecision}
Let $\sf{X}\subseteq\sf{P}^n$ and $\sf G$ be a stabilizer group containing $\sf X$.
Suppose the noise $\Lambda$ satisfies assumption~\ref{assump:gtm} and assumption~\ref{assump:Pauli}, and the following holds with probability $1-\delta$ by choosing small $\epsilon,\delta>0$:
Running ${\bf HFEstimator}({\sf G,X},l,t)$ with $t$ satisfies $\|\mathcal{H}_t-\mathcal{I}\|<\frac{1}{4}$ and $l = \frac{2}{\epsilon^2} \log\bigl(\frac{2|\sf{X}||\kappa|}{\delta}\bigr)$ where $\kappa$ is the set of exponentially increasing sequence lengths. The procedure uses at most $O\left(\log{\frac{1}{\Delta_{\sf{X}}}}\right)$ sequence lengths with largest length $m$ to be $O\bigl( \tfrac{1}{\Delta_{\sf{X}}}\bigr)$, where $\Delta_{\sf{X}}$ denotes the smallest residual $r_{\mathcal{H}\circ\Lambda}\coloneqq1-f_{\mathcal{H}\circ\Lambda}$ among the set $\sf{X}$ regarding Pauli fidelity $f_{\mathcal{H}\circ\Lambda}$ of the composite channel $\mathcal{H}_t\circ\Lambda$.
Moreover, the output $\hat{r}_{\mathcal{H}\circ\Lambda}$ satisfies
\begin{align*}
    |\hat{r}_{\mathcal{H}\circ\Lambda,x} - r_{\mathcal{H}\circ\Lambda,x}| \leq O(\epsilon) r_{\mathcal{H}\circ\Lambda,x}\ \ \forall x\in\sf{X}.
\end{align*}
\end{corollary}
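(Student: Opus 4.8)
The plan is to reduce the statement to Proposition~\ref{prop:RatioProp} by identifying the correct \emph{effective noise channel}. Comparing Algorithm~\ref{alg:HFidelity} with Algorithm~\ref{alg:Fidelity}, the only change is that every noisy Pauli gate $\tilde{\mathcal{P}}_\alpha=\Lambda\circ\mathcal{P}_\alpha$ is now immediately followed by the fixed short-time evolution $\mathcal{H}_t$, so each block of the benchmarking sequence reads $\mathcal{H}_t\circ\Lambda\circ\mathcal{P}_\alpha=(\mathcal{H}_t\circ\Lambda)\circ\mathcal{P}_\alpha$. Setting $\Lambda'\coloneqq\mathcal{H}_t\circ\Lambda$, the circuit run by \textbf{HFEstimator} is exactly the one analyzed in \cite{Flammia2019} (and in Proposition~\ref{prop:RatioProp}) with the noise channel $\Lambda$ replaced by $\Lambda'$, up to the harmless bookkeeping difference that a sequence of depth index $m$ now carries $m+1$ copies of the effective noise rather than $m$. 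So the whole argument behind Proposition~\ref{prop:RatioProp} transfers once $\Lambda'$ is shown to satisfy the properties that argument uses.

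First I would check those properties. $\Lambda'$ is time-independent, Markovian and gate-independent because $\mathcal{H}_t$ is a single fixed channel and $\Lambda$ obeys Assumption~\ref{assump:gtm}; composing a gate-independent noise with a fixed gate-free channel keeps it gate-independent, which is precisely what makes the Pauli twirl in \cite{Flammia2019} collapse the random sequence into a diagonal channel whose $x$-th eigenvalue is the Pauli fidelity $f_{\Lambda',x}=f_{\mathcal{H}_t\circ\Lambda,x}$, tracked via the total sampled Pauli plus the syndrome sign exactly as in Lemma~6 of \cite{Flammia2019}. The remaining hypothesis of Proposition~\ref{prop:RatioProp} is the smallness condition $\|\Lambda-\mathcal{I}\|<\tfrac12$, which is really only used there to guarantee that all Pauli fidelities are strictly positive, so that the exponential decay is well defined and the ratio $w/v$ in the fitting loop makes sense. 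Here one cannot quote that bound literally, because $\|\mathcal{H}_t\circ\Lambda-\mathcal{I}\|\le\|\mathcal{H}_t-\mathcal{I}\|+\|\mathcal{H}_t\|\,\|\Lambda-\mathcal{I}\|$ can be as large as $\tfrac14+\tfrac13=\tfrac{7}{12}>\tfrac12$. Instead I would re-establish the property it stands in for: using Assumption~\ref{assump:Pauli} ($\|\Lambda-\mathcal{I}\|\le\tfrac13$) together with the short-time condition $\|\mathcal{H}_t-\mathcal{I}\|<\tfrac14$, the Pauli-transfer matrices of $\mathcal{H}_t$ and $\Lambda$ are $\tfrac14$- and $\tfrac13$-close to the identity, so in the product the diagonal entry obeys $(\mathcal{H}_t\Lambda)_{xx}\ge(\mathcal{H}_t)_{xx}(\Lambda)_{xx}-\sum_{y\ne x}|(\mathcal{H}_t)_{xy}||(\Lambda)_{yx}|$, and a short Cauchy--Schwarz estimate bounds the subtracted sum by $\tfrac14\cdot\tfrac13=\tfrac1{12}$ while the first term exceeds $\tfrac12$, giving $f_{\mathcal{H}_t\circ\Lambda,x}>0$ for every $x\in{\sf P}^n$.

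With these checks in hand I would simply apply Proposition~\ref{prop:RatioProp} to $\Lambda'$: choosing $l=\tfrac{2}{\epsilon^2}\log\!\bigl(\tfrac{2|{\sf X}||\kappa|}{\delta}\bigr)$, the output array $\hat{r}_{\mathcal{H}\circ\Lambda}$ (which is $\hat{r}_{\Lambda'}$ in the notation of that proposition) satisfies $|\hat{r}_{\mathcal{H}\circ\Lambda,x}-r_{\mathcal{H}\circ\Lambda,x}|\le O(\epsilon)\,r_{\mathcal{H}\circ\Lambda,x}$ with probability $1-\delta$, and the sequence-length statements (at most $O(\log\tfrac{1}{\Delta_{\sf X}})$ lengths, largest $m=O(\tfrac1{\Delta_{\sf X}})$) carry over verbatim with $\Delta_{\sf X}$ reinterpreted as the smallest residual of $\mathcal{H}_t\circ\Lambda$ over $\sf X$. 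The extra ``$+1$'' in the count of effective-noise copies per sequence, and the fact that $\Lambda'$ is not assumed close to $\mathcal{I}$ (only to have positive Pauli fidelities), affect only constants absorbed into the $O(\epsilon)$ and into the length estimates.

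I expect the only delicate point to be the positivity verification in the middle paragraph: one must ensure that inserting the Hamiltonian evolution between benchmarking gates cannot push any Pauli fidelity of the composite channel down to (or below) zero, which would make the estimator ill-posed; this is exactly why the corollary strengthens the ambient hypothesis from $\|\Lambda-\mathcal{I}\|<\tfrac12$ to Assumption~\ref{assump:Pauli} combined with the short-time bound $\|\mathcal{H}_t-\mathcal{I}\|<\tfrac14$. Everything else is a direct transcription of the analysis underlying Proposition~\ref{prop:RatioProp}.
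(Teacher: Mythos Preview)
Your overall strategy---reduce to Proposition~\ref{prop:RatioProp} with the effective noise $\Lambda'=\mathcal{H}_t\circ\Lambda$ and then verify that $\Lambda'$ satisfies the needed smallness condition---is exactly the paper's strategy. The difference is in how the smallness condition is checked.

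The paper exploits the \emph{full} content of Assumption~\ref{assump:Pauli}: not just the norm bound $\|\Lambda-\mathcal{I}\|\le\tfrac13$, but the fact that $\Lambda$ is itself a Pauli channel. Because $\Lambda$ is diagonal in the Pauli transfer representation, $\Lambda(P_x)=f_{\Lambda,x}P_x$, so the Pauli fidelity of the composite factorizes exactly as $f_{\mathcal{H}_t\circ\Lambda,x}=f_{\mathcal{H}_t,x}\,f_{\Lambda,x}$. Then
\[
\bigl\|(\mathcal{H}_t\circ\Lambda)^{{\sf P}^n}-\mathcal{I}\bigr\|=\max_x\,\bigl|1-f_{\mathcal{H}_t,x}f_{\Lambda,x}\bigr|\le 1-\Bigl(\min_x f_{\mathcal{H}_t,x}\Bigr)\Bigl(\min_x f_{\Lambda,x}\Bigr)<1-\tfrac34\cdot\tfrac23=\tfrac12,
\]
which matches the hypothesis of Proposition~\ref{prop:RatioProp} on the nose, with no need to reopen its proof. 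Your Cauchy--Schwarz detour is correct and in fact more general (it would survive a non-Pauli $\Lambda$), but it only yields $f_{\mathcal{H}_t\circ\Lambda,x}>\tfrac{5}{12}$, so you are forced to argue separately that Proposition~\ref{prop:RatioProp} really only needs a uniform positive lower bound on the fidelities rather than the specific threshold $\tfrac12$. That claim is true up to constants, but it is an extra step you can avoid: once you remember that Assumption~\ref{assump:Pauli} makes $\Lambda$ diagonal, the cross-term sum $\sum_{y\ne x}|(\mathcal{H}_t)_{xy}||(\Lambda)_{yx}|$ in your bound is identically zero and you recover the paper's one-line computation.
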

\begin{proof}
Since the Hamiltonian channel is defined to be $\mathcal{H}_t(\rho)=e^{-iHt}\rho e^{iHt}$, the channel is close to the identity channel when the evolving time $t$ is small.
With the assumption stated above, the composite channel $\mathcal{H}_t\circ\Lambda$ is close to the identity under the Pauli twirling.
Specifically, we have
\begin{align}\label{eq:closetoi}
    \|(\mathcal{H}_t\circ\Lambda)^{{\sf P}^n}-\mathcal{I}\|=&\max_{\alpha\in{\sf P}^n}\abs{1-f_{\mathcal{H}_t,\alpha}f_{\Lambda,\alpha}}\notag\\
    \leq&1-\min_{\alpha\in{\sf P}^n}f_{\mathcal{H}_t,\alpha}\min_{\beta\in{\sf P}^n}f_{\Lambda,\beta}\notag\\
    <&\frac{1}{2},
\end{align}
where the first inequality comes from the fact that all Pauli fidelity is upper bounded by 1, and the last inequality comes from the requirement of $t$ and the assumption~\ref{assump:Pauli}.
Therefore, according to the Proposition~\ref{prop:RatioProp}, \textbf{HFEstimator} can estimate the composite channel precisely as stated.
\end{proof}

With these two estimators, the full detection of the Hamiltonian's Pauli fidelity can be executed as in Algorithm~\ref{alg:query}.
The procedure first employs \textbf{FEstimator} to estimate information on the implementation noise of Pauli gates.
We call this detection the reference circuit and denote the detected fidelity by $f_\Lambda$.
Since we will also implement Pauli gates in the following circuits, this reference knowledge helps to rule out the effect of this noise.
In the second step, the procedure invokes \textbf{HFEstimator} to estimate circuits with the unknown Hamiltonian channel.
By definition, this subroutine returns the fidelity $f_{\mathcal{H}\circ\Lambda}$ about the composite of implementation noise and Hamiltonian channels.
By Assumption~\ref{assump:Pauli}, the implementation noise is the Pauli channel,  or namely, it is a diagonal matrix.
Thus, with a direct ratio estimator, the procedure eliminates the effects of noise and keeps the net estimation of Pauli fidelity of the Hamiltonian channel.
We denote this fidelity by $f_{\mathcal{H}_t}$ with residual $r_{\mathcal{H}_t}$.
\begin{lemma}\label{lm:fe}
Let $\sf{X}\subseteq\sf{P}^n$ and $\sf G$ be a stabilizer group containing $\sf X$.
For any small $\epsilon,\delta>0$ and noise under Assumption~\ref{assump:gtm} and \ref{assump:Pauli}, the following holds with probability $1-\delta$:
Run Algorithm~\ref{alg:query} with $t_0$ satisfies $\|\mathcal{H}_{5t_0}-\mathcal{I}\|<\frac{1}{4}$ and $l = \frac{2}{\epsilon^2} \log\bigl(\frac{2|\kappa|}{\delta}\bigr)$ where $\kappa$ is the set of variant sequence lengths.
The ratio estimator can estimate the Pauli fidelity information of the desired $\mathcal{H}_t$ satisfying that
\begin{gather*}
    \abs{\hat{f}_{\mathcal{H}_t,x}-f_{\mathcal{H}_t,x}}\leq\frac{O(\epsilon)(f_{\mathcal{H}_t,x}r_{\Lambda,x}+r_{\mathcal{H}_t\circ\Lambda,x})}{f_{\Lambda,x}-O(\epsilon)r_{\Lambda,x}},\ \forall\,x\in\sf{X}.
\end{gather*}
\end{lemma}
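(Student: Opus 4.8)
The plan is to push the multiplicative error guarantees of the two subroutines \textbf{FEstimator} and \textbf{HFEstimator} through the ratio estimator in Algorithm~\ref{alg:query}. The first and most important step is the structural identity that makes the ratio estimator correct in the noiseless limit. Since $\Lambda$ is a genuine Pauli channel by Assumption~\ref{assump:Pauli}, its Pauli twirl is itself, so $(\mathcal{H}_t\circ\Lambda)^{{\sf P}^n}=\mathcal{H}_t^{{\sf P}^n}\circ\Lambda$, and because both factors are then diagonal in the Pauli superoperator basis, the diagonal entries of their composition are the products of the individual diagonal entries. Hence $f_{\mathcal{H}_t\circ\Lambda,x}=f_{\mathcal{H}_t,x}\,f_{\Lambda,x}$ for every $x$, so the exact target satisfies $f_{\mathcal{H}_t,x}=f_{\mathcal{H}_t\circ\Lambda,x}/f_{\Lambda,x}$ --- precisely the quantity Algorithm~\ref{alg:query} forms from its two estimates. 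Assumption~\ref{assump:Pauli} also gives $f_{\Lambda,x}\ge 2/3>0$, so this quotient and all the bounds below are well defined.

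Second, I would invoke the accuracy guarantees of the subroutines: Proposition~\ref{prop:RatioProp} for the reference call \textbf{FEstimator}, and Corollary~\ref{lm:hfPrecision} for each of the five \textbf{HFEstimator} calls. The hypothesis $\|\mathcal{H}_{5t_0}-\mathcal{I}\|<1/4$ ensures $\|\mathcal{H}_{it_0}-\mathcal{I}\|<1/4$ for $i=1,\dots,5$ in the short-time regime, so Corollary~\ref{lm:hfPrecision} indeed applies. After a union bound over the reference call, the interleaved calls, and the indices $x\in{\sf X}$ (absorbed into the constants and the stated choice of $l$), the following hold simultaneously with probability $\ge 1-\delta$: $|\hat{r}_{\Lambda,x}-r_{\Lambda,x}|\le O(\epsilon)r_{\Lambda,x}$ and $|\hat{r}_{\mathcal{H}_t\circ\Lambda,x}-r_{\mathcal{H}_t\circ\Lambda,x}|\le O(\epsilon)r_{\mathcal{H}_t\circ\Lambda,x}$. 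Rewriting via $\hat{f}=1-\hat{r}$ gives $|\hat{f}_{\Lambda,x}-f_{\Lambda,x}|\le O(\epsilon)r_{\Lambda,x}$ and $|\hat{f}_{\mathcal{H}_t\circ\Lambda,x}-f_{\mathcal{H}_t\circ\Lambda,x}|\le O(\epsilon)r_{\mathcal{H}_t\circ\Lambda,x}$, and in particular $\hat{f}_{\Lambda,x}\ge f_{\Lambda,x}-O(\epsilon)r_{\Lambda,x}>0$.

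Third comes the routine error-of-a-quotient bound. Starting from the algebraic decomposition
\[ \hat{f}_{\mathcal{H}_t,x}-f_{\mathcal{H}_t,x} = \frac{(\hat{f}_{\mathcal{H}_t\circ\Lambda,x}-f_{\mathcal{H}_t\circ\Lambda,x})\,f_{\Lambda,x}+f_{\mathcal{H}_t\circ\Lambda,x}\,(f_{\Lambda,x}-\hat{f}_{\Lambda,x})}{\hat{f}_{\Lambda,x}\,f_{\Lambda,x}}, \]
I would bound the two numerator terms with the fidelity error bounds from the previous step, substitute $f_{\mathcal{H}_t\circ\Lambda,x}=f_{\mathcal{H}_t,x}f_{\Lambda,x}$ so that the numerator becomes $O(\epsilon)\,f_{\Lambda,x}\bigl(r_{\mathcal{H}_t\circ\Lambda,x}+f_{\mathcal{H}_t,x}r_{\Lambda,x}\bigr)$, and bound the denominator below by $(f_{\Lambda,x}-O(\epsilon)r_{\Lambda,x})\,f_{\Lambda,x}$; cancelling one power of $f_{\Lambda,x}$ leaves exactly
\[ |\hat{f}_{\mathcal{H}_t,x}-f_{\mathcal{H}_t,x}|\le\frac{O(\epsilon)\bigl(f_{\mathcal{H}_t,x}\,r_{\Lambda,x}+r_{\mathcal{H}_t\circ\Lambda,x}\bigr)}{f_{\Lambda,x}-O(\epsilon)r_{\Lambda,x}}. \]

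The arithmetic is entirely routine; the parts deserving care are (i) the multiplicativity $f_{\mathcal{H}_t\circ\Lambda,x}=f_{\mathcal{H}_t,x}f_{\Lambda,x}$, which is exactly where the full strength of Assumption~\ref{assump:Pauli} --- that $\Lambda$ is a bona fide Pauli channel, not merely close to one --- is used, and (ii) the bookkeeping of the union bound across the reference circuit, the five interleaved circuits, and all $x\in{\sf X}$, so that the stated $l$ really buys overall failure probability $\delta$. It is also worth noting that $r_{\Lambda,x}$ and $r_{\mathcal{H}_t\circ\Lambda,x}$ may be of very different orders, which is why the error is genuinely a sum of two separately controlled pieces rather than a single multiplicative estimate, and hence why the lemma is stated in this mixed form.
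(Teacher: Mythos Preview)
Your proposal is correct and follows essentially the same approach as the paper: establish the multiplicativity $f_{\mathcal{H}_t\circ\Lambda,x}=f_{\mathcal{H}_t,x}f_{\Lambda,x}$ from Assumption~\ref{assump:Pauli}, import the multiplicative error bounds from Proposition~\ref{prop:RatioProp} and Corollary~\ref{lm:hfPrecision}, and propagate these through the ratio. The only cosmetic difference is that the paper bounds the upper and lower deviations of the quotient separately (numerator at its largest over denominator at its smallest, and vice versa), whereas you use the single add--subtract identity for $\hat a/\hat b - a/b$ and bound the absolute value directly; both yield the same final expression.
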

\begin{proof}
The guarantees from Proposition~\ref{prop:RatioProp} and Corollary~\ref{lm:hfPrecision} can help to provide this accuracy bound.
In the first round, the procedure estimates the fidelity information of the native noise $\Lambda$ by the stated complexity up to an accuracy
\begin{gather*}
    \abs{\hat{f}_{\Lambda,x}-f_{\Lambda,x}}\leq O(\epsilon) r_{\Lambda,x}.
\end{gather*}
Similarly, in the second round, the procedure executes the circuit including the composite $\mathcal{H}_t\circ\Lambda$ and returns another accuracy with a very similar complexity
\begin{gather*}
    \abs{\hat{f}_{\mathcal{H}_t\circ\Lambda,x}-f_{\mathcal{H}_t\circ\Lambda,x}}\leq O(\epsilon) r_{\mathcal{H}_t\circ\Lambda,x}.
\end{gather*}
According to Assumption~\ref{assump:Pauli}, we can show the fidelity is $f_{\mathcal{H}_t\circ\Lambda,x}=f_{\Lambda,x}f_{\mathcal{H}_t,x}$.
As $\Lambda$ is itself a Pauli channel, we have
\begin{align}
    f_{\mathcal{H}_t\circ\Lambda,x}=&\frac{1}{2^n}\Tr(P_i(\mathcal{H}_t\circ\Lambda)^{\sf P}(P_i))=\frac{1}{2^n}\Tr(P_i\mathcal{H}_t\circ\Lambda(P_i))\notag\\
    =&\frac{f_{\Lambda,x}}{2^n}\Tr(P_i\mathcal{H}_t(P_i))=\frac{f_{\Lambda,x}}{2^n}\Tr(P_i\mathcal{H}_t^{\sf P}(P_i))\notag\\
    =&f_{\Lambda,x}f_{\mathcal{H}_t,x}.
\end{align}
Thus we can calculate the estimation by the ratio estimator.
The upper and lower bounds of this estimation are therefore calculated as follows,
\begin{align}
    \frac{\hat{f}_{\mathcal{H}_t\circ\Lambda,x}}{\hat{f}_{\Lambda,x}}-f_{\mathcal{H}_t,x}&\leq\frac{f_{\Lambda,x}f_{\mathcal{H}_t,x}+O(\epsilon) r_{\mathcal{H}_t\circ\Lambda,x}}{f_{\Lambda,x}-O(\epsilon) r_{\Lambda,x}}-f_{\mathcal{H}_t,x}\notag\\
    &=\frac{O(\epsilon)(f_{\mathcal{H}_t,x}r_{\Lambda,x}+r_{\mathcal{H}_t\circ\Lambda,x})}{f_{\Lambda,x}-O(\epsilon)r_{\Lambda,x}}.
\end{align}
For the lower bound, we have a similar way of quantifying that
\begin{align}
    f_{\mathcal{H}_t,x}-\frac{\hat{f}_{\mathcal{H}_t\circ\Lambda,x}}{\hat{f}_{\Lambda,x}}&\leq f_{\mathcal{H}_t,x}-\frac{f_{\Lambda,x}f_{\mathcal{H}_t,x}-O(\epsilon) r_{\mathcal{H}_t\circ\Lambda,x}}{f_{\Lambda,x}+O(\epsilon) r_{\Lambda,x}}\notag\\
    &=\frac{O(\epsilon)(f_{\mathcal{H}_t,x}r_{\Lambda,x}+r_{\mathcal{H}_t\circ\Lambda,x})}{f_{\Lambda,x}-O(\epsilon)r_{\Lambda,x}}.
\end{align}
\end{proof}
\begin{remark}
\rm In the statement, we claim the errors of estimations will be bounded by a multiplicative accuracy.
Although there are different components involved in the relative part, this is still precise since the residual is small and all the fidelity terms are less than 1.
As we have introduced in the main text, the local gate noise is usually very close to the identity channel~\cite{harper2020efficient,Barends2014,somoroff2021millisecond}, which means $f_{\Lambda}$ is close to 1, and $r_{\Lambda}$ is negligible.
Likewise, the Hamiltonian channels are assigned with short evolving times in our learning method.
Thus, these channels are also well-behaved with large $f_{\mathcal{H}_t}$ and vanishing $r_{\mathcal{H}_t\circ\Lambda}$.
Therefore, our estimation will be of a multiplicative precision, and this error is much less than the predetermined $\epsilon$.
\end{remark}

In this analysis, we do not rule out the effects of evolving times, so this bound of estimation may vary from different Hamiltonian and evolving times.
To resolve this dependence, the procedure \textbf{FE} increases the evolving times evenly among multiple calls of \textbf{HFEstimator}.
With a regression over these multiple $t$, the procedure can estimate the second-order fidelity $f^{(2)}$ as stated in Eq.~\eqref{eq:chi}.
With the bounded noise proved in Lemma~\ref{lm:fe}, the regression parameter can be estimated with high accuracy.
\begin{propbis}{lm:f2bound}
Let $\sf{X}\subseteq\sf{P}^n$ and $\sf G$ be a stabilizer group containing $\sf X$, and suppose Assumption~\ref{assump:gtm} and~\ref{assump:Pauli} hold. 
For any small $\epsilon,\delta>0$, the following happens with probability $1-\delta$:
Run Algorithm~\ref{alg:query} $\textbf{FE}({\sf X,G},l,t_0)$ with $t_0$ satisfies $\|\mathcal{H}_{5t_0}-\mathcal{I}\|<\frac{1}{4}$ and  $l = \frac{2}{\epsilon^2} \log\bigl(\frac{2|\kappa|}{\delta}\bigr)$ where $\kappa$ is the set of variant sequence lengths. regression, and the resulting regression array $\hat{f}^{(2)}$ satisfies
\begin{gather*}
    \abs{\hat{f}^{(2)}_x-f^{(2)}_x}\leq\frac{\sigma O(\epsilon)(f_{\mathcal{H}_t,x}r_{\Lambda,x}+r_{\mathcal{H}_t\circ\Lambda,x})}{\left(f_{\Lambda,x}-O(\epsilon)r_{\Lambda,x}\right)t_0^2}+o(t_0^2),
\end{gather*}
where the upper bound term adopts the largest bound term referred from Lemma~\ref{lm:fe} among multiple evolution with different $t$, and $\sigma$ is a constant regrading the regression method.
\end{propbis}
\begin{proof}
According to the ordinary least squares, the parameters $\hat{f^{(2)}}$ can be calculated from the noisy observations and ideal predictors,
\begin{gather}\label{eq:OLS}
    \hat{f}^{(2)}_x\coloneqq\frac{\sum_{j=1}^k(t_j^2-\bar{t^2})\hat{f}_{\mathcal{H}_{t_j},x}}{\sum_{j=1}^k(t_j^2-\bar{t^2})^2},
\end{gather}
where $k$ denotes the number of observations in the regression.
According to Lemma~\ref{lm:fe}, the errors of estimation of $f_{\mathcal{H}_t}$ are also bounded by the predetermined precision $\epsilon$ multiplying the size of residual.
In Eq.~\eqref{eq:OLS}, the estimator is a linear summation.
Thus we choose the largest bias among these fidelity terms with different evolving times $t$ and use it to replace all other bounds,
\begin{gather}
    \abs{\hat{f}^{(2)}_x-f^{(2)}_x}\leq\frac{\sum_{j=1}^k\abs{t_j^2-\bar{t^2}}}{\sum_{j=1}^k(t_j^2-\bar{t^2})^2}\cdot\max_j\{\Delta f_{\mathcal{H}_{t_j},x}\}+o(t^2).
\end{gather}
Note that the higher-order errors come from the systematic errors in the estimation of $f^{(2)}$ as we  only consider the second-order terms of Pauli fidelity.
Furthermore, $o(t^2)$ is due to the observation that only even number order terms contribute to the fidelity.
In the execution of \textbf{FE}, we implement multiple evolutions with $t$ varies from $t_0$ up to $5t_0$ evenly.
Therefore, we have the following constant
\begin{gather}\label{eq:linearreg}
    \frac{\sigma}{t_0^2}=\frac{\sum_{i=1}^5\abs{t_i^2-\bar{t^2}}}{\sum_{i=1}^5(t_i^2-\bar{t^2})^2}\approx\frac{1}{10t_0^2}.
\end{gather}
Combining the above two equations, we can get the claimed bound.
\end{proof}

\subsection{Sparse Pauli Error Rates}\label{sec:errorRate}
In this section, we focus on the post-processing of the fidelity information detected in the last part.
Even though the whole set of fidelity can fully characterize an unknown Pauli channel, it costs an exponential-size calculation to transform the fidelity to Pauli error rates, as stated in Eq.~\eqref{eq:PauliWHT}.
These error rates imply the absolute values of our desired parameters according to Eq.~\eqref{eq:chi_wot}.
Fortunately, the proposal in \cite{harper2020fast} offers an efficient method to complete this transformation given the target channel has sparse nonzero Pauli error rates.
We will illustrate this method with some reorganized algorithms and guarantees.

The protocol first constructs \emph{bins} to collect the linear combinations of certain fidelity terms.
\begin{table}[t!hb]
 \begin{algorithm}[H] 
  \caption{Subsampling and Bin Construction: $\textbf{BC}(\{\mathbf{d}_{c;t}\},l,b,C,P,t_0)$}\label{alg:subsampling}
  \begin{algorithmic}[1]
    \State \texttt{Input:} Number of sampling groups $C$, number of observations P, offsets $\{\mathbf{d}_{c;t}\}_{CP}$, positive integers $l$ and $b$, and the unit time length $t_0$ \State \texttt{Initialization:} a $4^n$ array $\hat{f}^{(2)}$ with all -1 
    \State \texttt{Generation:} Random subsampling matrices $\mathbf{M}_c\in\mathbb{F}_2^{2n\times b}$ for $c \in [C]$. 
    \State \texttt{Modify:} $\mathbf{M}_c'\leftarrow J_n \mathbf{M}_c$ $\forall\,c\in[C]$.
    \ForAll{ $c\in[C]$, $t\in[P]$, and $\ell\in\mathbb{F}_2^{b}$}
        \State $k \leftarrow \mathbf{M}_c' \ell+\mathbf{d}_{c;t}$
        \If{$\hat{f}_k^{(2)}=-1$}
        \State Query Algorithm~\ref{alg:query} with $(G_k,\{P_k\},l,t_0)$
        \State $\hat{f}_k^{(2)}\leftarrow\textbf{FE}(G_k,\{P_k\},l,t_0)$
        \EndIf
    \EndFor
    \State $B \leftarrow 2^{b}$
    \ForAll{ $c\in[C]$ and $t\in[P]$}
    	\State \label{algline:W} 
	$U_{c;t}[j] \leftarrow\frac{1}{B} \sum_{\ell\in\mathbb{F}_2^{b}} (-1)^{\langle j,\ell\rangle}\hat{f}^{(2)}_{\mathbf{M}_c' \ell+\mathbf{d}_{c;t}}$
	\State Return $U_{c;t}[j]$
	\EndFor	
  \end{algorithmic}
\end{algorithm}
\end{table}
As shown in Algorithm~\ref{alg:subsampling}, the procedure chooses fidelity according to the random indices and queries Algorithm~\ref{alg:query} for the fidelity value.
According to the Pauli index defined in Appendix~\ref{sec:PauliConcept}, the stabilizer group $G_k$ is generated by $\langle P_{\phi(\overline{k_1k_2},1)},P_{\phi(\overline{k_1k_2},1)},\cdots,P_{\phi(\overline{k_1k_2},n)}\rangle$, where result of $\phi: \mathbb{F}_2^2\times \{1,\cdots,n\}\rightarrow\mathbb{F}_2^{2n}$ is 0-extended to $2n$-bit string:
\begin{gather}
    \phi(\alpha,i)=\begin{cases}
       \overline{\alpha}\times 2^{2(n-i)} & \alpha\neq00\\
       \overline{01}\times2^{2(n-i)} & \alpha=00.
    \end{cases}
\end{gather}
Intuitively, that means we extend the original Pauli to 1-weight Pauli on every qubit as a generator and use $X$ when encountering $I$ on some qubits.

$\mathbf{M}$ here is some random Boolean matrix, and the procedure reshuffles the fidelity by this matrix.
In order to make bins represented in the same form of a hash function, the procedure further introduces the additional $J_n\coloneqq  I_n\otimes X$ to modify random matrices.
Vector $\mathbf{d}$ works as offsets to create variant phases.
Therefore, every bin is labeled by the matrix index $c$, the offset index $t$, and the intrinsic bit string $j$ with length $b$.
We denote the bins with the same $j$ and $c$ to be in a single \emph{bin set}.
Therefore, there are $B=2^b$ bin sets regarding a random matrix.
With the summation and Walsh-Hadamard Transform, the bin set $\{U_{c;t}[j]\}_t$ actually constructs a partition over the complete Pauli error rates as stated in Lemma~\ref{lm:prop_hashing_obs}.
\begin{lemma}[Rephrased from Lemma 1 in~\cite{harper2020fast}]\label{lm:prop_hashing_obs}
The $B$-point WHT subsampled bin coefficients with index $j\in\mathbb{F}_2^{b}$ can be written as:
\begin{align}\label{eq:U_cp}
	{U}_{c;t}[j] &= \sum_{\alpha:\,\mathbf{M}_c^T\alpha = j} p_\alpha(-1)^{\langle\mathbf{d}_{c;t},\alpha\rangle_p}+{W}_{c;t}[j], \,\forall t\in[P].
\end{align}
 Moreover, the sampling error is as follows
\begin{align*}
    {W}_{c;t}[j] &= \sum_{\alpha:\,\mathbf{M}_c^T\alpha = j} {W}_\alpha(-1)^{\langle\mathbf{d}_{c;t},\alpha\rangle_p},
\end{align*}
where ${W}_\alpha$ is the noise of the Pauli error rate $p_\alpha$. 
\end{lemma}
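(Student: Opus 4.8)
The plan is to verify Eq.~\eqref{eq:U_cp} by direct substitution of the definition of $U_{c;t}[j]$ from line~\ref{algline:W} of Algorithm~\ref{alg:subsampling} together with the Walsh--Hadamard transform relating Pauli fidelity to Pauli error rates, Eq.~\eqref{eq:PauliWHT}. The only nonstandard ingredient is the role of the modified matrix $\mathbf{M}_c' = J_n\mathbf{M}_c$, which is designed precisely so that the composition of "extend-to-weight-one" encoding $\phi$ and reshuffling produces a clean indicator constraint $\mathbf{M}_c^T\alpha = j$; I would isolate that as the one genuinely fiddly point.

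First I would write $U_{c;t}[j] = \frac{1}{B}\sum_{\ell\in\mathbb{F}_2^b}(-1)^{\langle j,\ell\rangle}\,\tilde f^{(2)}_{\mathbf{M}_c'\ell+\mathbf{d}_{c;t}}$ and split the noisy fidelity as $\tilde f^{(2)}_k = f^{(2)}_k + (\text{noise})_k$, so that $U_{c;t}[j]$ decomposes additively into an ideal part and a sampling-error part $W_{c;t}[j]$; since the transform below is linear in the fidelity, it suffices to prove the identity for the ideal fidelities and then read off the $W$-formula by the same computation applied to the noise terms. Next I would substitute the second line of Eq.~\eqref{eq:PauliWHT}, $f_k = \sum_{\alpha}(-1)^{\langle k,\alpha\rangle_p}p_\alpha$, giving a double sum over $\ell$ and $\alpha$; exchanging the order of summation, the inner sum over $\ell$ becomes $\frac{1}{B}\sum_{\ell}(-1)^{\langle j,\ell\rangle + \langle \mathbf{M}_c'\ell + \mathbf{d}_{c;t},\,\alpha\rangle_p}$. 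The offset $\mathbf{d}_{c;t}$ factors out as the phase $(-1)^{\langle\mathbf{d}_{c;t},\alpha\rangle_p}$, exactly as in the claimed formula.

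The core step is then to show $\frac{1}{B}\sum_{\ell\in\mathbb{F}_2^b}(-1)^{\langle j,\ell\rangle + \langle \mathbf{M}_c'\ell,\alpha\rangle_p}$ equals $1$ when $\mathbf{M}_c^T\alpha = j$ and $0$ otherwise. Here I would use the bilinearity of the Pauli inner product to rewrite $\langle \mathbf{M}_c'\ell,\alpha\rangle_p = \langle \ell,\,(\mathbf{M}_c')^T\lambda(\alpha)\rangle$ for an appropriate symplectic/linear action $\lambda$ coming from the $X$-twist $J_n$ in $\mathbf{M}_c' = J_n\mathbf{M}_c$ — this is precisely the reason the algorithm modifies $\mathbf{M}_c$ rather than using it raw, so that the Pauli inner product against $\mathbf{M}_c'$ collapses to an ordinary inner product against $\mathbf{M}_c^T\alpha$. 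Once that rewriting is in place, the exponent becomes $\langle \ell, j + \mathbf{M}_c^T\alpha\rangle$, and the standard character-sum identity over $\mathbb{F}_2^b$ gives the Kronecker delta $\mathbf{1}[\mathbf{M}_c^T\alpha = j]$, collapsing the $\alpha$-sum to $\sum_{\alpha:\mathbf{M}_c^T\alpha = j} p_\alpha(-1)^{\langle\mathbf{d}_{c;t},\alpha\rangle_p}$, which is the ideal part of Eq.~\eqref{eq:U_cp}. Applying the identical manipulation to the noise contributions $W_\alpha$ yields the stated expression for $W_{c;t}[j]$, completing the argument. The main obstacle I anticipate is bookkeeping the $J_n$-twist carefully enough that the Pauli inner product against $\mathbf{M}_c'$ really does reduce to the plain inner product against $\mathbf{M}_c^T\alpha$; everything else is a routine character-sum calculation, and in fact the whole statement is a rephrasing of Lemma~1 of~\cite{harper2020fast}, so I would also cross-check signs and conventions against that reference.
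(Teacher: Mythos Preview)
Your proposal is correct and is exactly the standard argument; the paper itself does not supply a proof of this lemma but simply defers to Lemma~1 of~\cite{harper2020fast}, of which it is a rephrasing. Your handling of the one nontrivial point is right: with $J_n = I_n\otimes X$ one has $\langle x,\alpha\rangle_p = x^T J_n \alpha$ over $\mathbb{F}_2$, so $\langle \mathbf{M}_c'\ell,\alpha\rangle_p = \ell^T \mathbf{M}_c^T J_n^T J_n \alpha = \langle \ell,\mathbf{M}_c^T\alpha\rangle$, after which the character sum over $\ell$ collapses to the indicator $\mathbf{1}[\mathbf{M}_c^T\alpha=j]$ as you describe.
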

From Lemma~\ref{lm:prop_hashing_obs}, matrix $\mathbf{M}$ determines the hash function to partition the whole set of Pauli error rates into these bin sets.
We thus denote these matrices by \emph{subsampling matrices}.
Since each bin set is a partitioned piece and there are only sparse nonzero error rates, it will be likely that every bin set contains only a single nonzero error rate, which allows us to learn the value directly.
To achieve this, the procedure must recruit bin sets number $B=O(s)$.

The procedure also constructs methods to determine the precise status of every bin set to verify we can acquire the desired error rate directly.
The following Algorithm~\ref{alg:bin_detect} gives a detailed illustration of the construction of this detector.
We can shed the most light on the case of noiseless processing.
According to Lemma~\ref{lm:prop_hashing_obs}, an offset vector decides the sign for every term in a bin.
Because every bin set employs $P$ different offsets, if all the terms are zero, the absolute values of these bins would be all zero.
In a similar manner, when there exists one nonzero term, all the absolute values would be the same.
In contrast, if these absolute values vary from different bins, the procedure claims there are several nonzero terms in a bin set.
For convenience, we denote the above three types of bins by zero-ton, single-ton, and multi-ton bins, respectively.
Obviously, bins in a bin set remain in the same type, so we may also use these types to label a whole bin set without leading to ambiguity.

Besides, this detector aims to extract both the value and the index of the nonzero error rate when it is the only nonzero term in a bin set.
The idea of index estimation relies on the choice of offsets $\mathbf{d}$.
Ideally, if there is one nonzero term without noise, the procedure would choose the set of offsets $\{d_{c,1},\cdots d_{c,P}\}$ to contain all the linear independent unit vectors in the binary space of length $2n$.
Therefore, the sign of each of the bins suggests the binary value of $\alpha$ on the corresponding bit according to Lemma~\ref{lm:prop_hashing_obs}. 
As for the value of the corresponding error rate, the procedure arbitrarily fetches a bin in the bin set and eliminates the effect of the random sign to get the rate since we have known the index of this rate.

Suppose there exists some noise during the construction of bins.
The idea of distinguishing a bin set among the aforementioned three types keeps the same.
The procedure checks bins with the first $P_1$ random offsets and determines the type of this bin set.
Given a single-ton bin set with the error rate $p_\alpha$, we have the following
\begin{gather}
    \sgn{\mathbf{U}_c[j]}=\mathbf{D}_c^T\alpha+W,
\end{gather}
where $\mathbf{D}_c$ is offset matrix from $P=P_1+P_2$ $\mathbf{d}$s, and $W$ represents the effects from noise.
In order to protect the information about the index $\alpha$ of the only nonzero term against noise, the procedure employs some linear error-correcting codes and implements each offset $\bf d$ among the last $P_2$ as a column of the code generating matrix.
Therefore, the signs of the last $P_2$ bins serve as an element of codewords that remains resilient against the noise from corrupting the signs. 
In the detection, the procedure thus decodes signs back to index $\alpha$ if the code distance is large enough.
Given the index of the nonzero term, the procedure will eliminate signs in the first $P_1$ bins and average over these bins.
This helps to suppress the statistical fluctuations in these bins and get a much more precise estimation of the nonzero error rate.
\begin{table}[thb]
  \begin{algorithm}[H]
  \caption{Bin Detector: $\mathrm{BD}(\mathbf{U}_c,\mathbf{D}_c,T)$}\label{alg:bin_detect} 
  \begin{algorithmic}[1]
  \State {\tt Input}: bin $\mathbf{U}_c$, offsets $\mathbf{D}_c$ and the number $T$ to indicate error size;
  \State\label{algline:zero-ton} {\tt Parameter}: real numbers\label{algline:parameter} $\gamma_1,\gamma_2\in(0,1)$;
  \If{$\frac{1}{P_1}\sum_{t=0}^{P_1-1}U_{c;t}^2\leq T(1+\gamma_1)\nu^2$}
    \State $\widehat{\mathfrak{B}} \leftarrow $\texttt{zero-ton}
    \State Return ($\widehat{\mathfrak{B}}$, \textit{nil, nil})    \Comment \textit{zero-ton verification}
  \EndIf
  \State $\widehat{\alpha}\leftarrow \texttt{Decode}([\sgn {U_{P_1}},\cdots,\sgn{U_{P-1}}]^T)$
  \State $\widehat{p}_{\widehat{\alpha}}\leftarrow\frac{1}{P_1}\sum_{t=0}^{P_1-1}(-1)^{\langle\mathbf{d}_{c;t},\widehat{\alpha}\rangle_p}U_{c;t}$ \Comment \textit{single-ton search}
  \If{$\frac{1}{P_1}\sum_{t=0}^{P_1-1}(U_{c;t}-(-1)^{\langle\mathbf{d}_{c;t},\widehat{\alpha}\rangle_p}\widehat{p}_{\widehat{\alpha}})^2\leq T(1+\gamma_2)\nu^2$}
    \State $\widehat{\mathfrak{B}} \leftarrow$ \texttt{single-ton}
    \State\label{algline:single-ton_verification} Return ($\widehat{\mathfrak{B}}$, $\widehat{\alpha}$, $\widehat{p}_{\widehat{\alpha}}$) \Comment \textit{single-ton verification} 
  \Else 
  \State $\widehat{\mathfrak{B}} \leftarrow$ \texttt{multi-ton}
  \State Return ($\widehat{\mathfrak{B}}$,\textit{nil, nil}) 
  \EndIf
  \end{algorithmic}
\end{algorithm}
\end{table}

However, the procedure cannot extract all the nonzero error rates in this way since it is highly possible that there are some bins including several nonzero terms.
The protocol implements the peeling among different bin sets using the single-ton bin set with the support of the bin detector.
As shown in Algorithm~\ref{alg:peelingIn}, the procedure will fetch the single-ton bin set to peel out the error rate of all other bin sets in different sampling groups which contain the same error rate.
By constantly peeling, the procedure generates more artificial single-ton bins, and the peeling will keep until there are no remaining nonzero error rates.
In the following, we give the full version of the peeling process with detailed parameters and calculations.
The procedure also maintains an array $\mathbf{T}$ to keep track of noise propagated among bins, which helps the bin detector return a reliable detection even with the occurrence of noise. 
\begin{table}[thb]
\begin{algorithm}[H] 
  \caption{Peeling Decoder\label{alg:peeling}}
  \begin{algorithmic}[1]
    \State ${\tt Input}:$ Number of sampling groups $C$, number of observations P, offsets $\{\mathbf{d}_{c;t}\}_{CP}$, positive integers $l$ and $b$, and the unit time length $t_0$ 
    \State ${\tt Initialize}:$ $\mathbf{T}_{c}[j]\gets$ array for all bins with element to be all 1 
    \State ${\tt Initialize}:$ $\mathcal{P}\gets$ empty list of second-order Pauli error rates ($\widehat{\alpha},\widehat{{p}}^{(2)}_{\widehat{\alpha}}$)
    \State Run $\textbf{BC}(\{\mathbf{d}_{c;t}\},l,b,C,P,t_0)$ to construct sampling matrices $\{\mathbf{M}_c\}$ and bins $\{\mathbf{U}_c[j]\}$
    \For{$i=1,\cdots$}
    \State $\text{hassingle}\gets$  false
    	\ForAll{$c\in[C]$ and $\emph{j}\in\mathbb{F}_2^{b}$}
			\State $(\widehat{\mathfrak{B}},\widehat{\alpha},\widehat{{p}}^{(2)}_{\widehat{\alpha}}) \leftarrow \textbf{BD}(\mathbf{U}_c[\emph{j}],\mathbf{D}_c,\mathbf{T}_{c}[\emph{j}])$
			\If{$\widehat{\mathfrak{B}}= \textrm{single-ton}$}
			\State $\text{hassingle}\gets$ true
			\State $\mathcal{P}\gets$ ($\widehat{\alpha},\widehat{{p}}^{(2)}_{\widehat{\alpha}}$)
			\ForAll{$c'\in[C]$ and $c'\neq c$}
			    \State ${\tt Locate~bin~index}$ $\emph{j}_{c'} \leftarrow \mathbf{M}_{c'}^T \widehat{\alpha}$
			    \State \label{algline:T} $\mathbf{T}_{c'}[j_{c'}]\leftarrow\mathbf{T}_{c'}[j_{c'}]+\frac{\mathbf{T}_c[j]}{P_1}+\frac{(P_1-1)B}{P_1N}$
			        \State \label{algline:U} $\mathbf{U}_{c'}[\emph{j}_{c'}] \leftarrow \mathbf{U}_{c'}[\emph{j}_{c'}] - \widehat{p}_{\widehat{\alpha}}(-1)^{\langle\mathbf{D}_{c'},\widehat{\alpha}\rangle_p}$
			 \EndFor
			\ElsIf {${\widehat{\mathfrak{B}}}\neq \textrm{single-ton}$}
				 \State continue to next $\emph{j}$
			\EndIf	
			
		\EndFor
		\If{$\text{hassingle}=false$}
			\State Break
			\EndIf
    \EndFor
    \State Return: $\mathcal{P}$
  \end{algorithmic}
\end{algorithm}
\end{table}

Even though the algorithms are rephrased from \cite{harper2020fast}, the model of noise from desired fidelity terms varies in our implementation of this Pauli error rate estimator.
In this work, this estimator plays as a subroutine utilized to process the estimated fidelity from the fidelity estimation.
Therefore, we cannot impose a similar assumption as stated in \cite{harper2020fast} on the estimated fidelity.
On the other hand, the theorem and corresponding proofs can be modified in our implementation to coordinate with the fidelity estimator's form.

According to the fidelity estimator, there exist some instructions, such as the ratio estimation, that are inherently biased.
Generally, we denote the noise of the estimated fidelity $f^{(2)}$ by $w=\Delta+\omega$, where $\Delta=\mathbb{E}(w)$ is a constant that represents the bias of the estimation.
In this sense, the noise $\omega$ is random with zero mean, and we will further assume that $\omega$ is zero-mean Gaussian noise.
This is due to the central limit theorem as there include many samples of fidelity estimation in the first step.
Even though we have little knowledge about this noise, the whole noise $w$ is bounded according to Lemma~\ref{lm:f2bound}.
For convenience, we denote the bound by $\mathcal{B}>0$.
Thus we model this noise as a constant belonging to an interval $\Delta\in[-\mathcal{B},\mathcal{B}]$ along with zero-mean Gaussian noise $\omega\sim \mathcal{N}(0,\mathcal{B}^2)$.

We must consider even further how this form of noise behaves in every bin.
A bin is constructed from up to $B$ fidelity terms, and all these fidelity terms contain independent random noise.
Thus the Gaussian part of bins' noise that comes from the linear combination of different $\omega$ is $W\sim\mathcal{N}(0,\frac{\mathcal{B}^2}{B})$.
As for the constant parts, we can treat them as biases of the underlying error rates. 
From the Walsh-Hadamard transform, the size of bias $\Delta^{(p)}_\alpha$ for an error rate $p_\alpha$ is no larger than $\frac{B\mathcal{B}}{4^n}$.
In this case, we divide the general noise from the noisy estimation of Pauli fidelity into the two parts mentioned above.

In our execution of this error rate estimator, the main challenge comes from the noisy inputs.
The major part of dealing with the ubiquitous noise is the robust bin detector.
Based on this, the analysis of the failure or exception for running this estimator will focus on the resilience of the bin detector.
Primarily, we rephrase the tail bounds in \cite{Li2015} since we will employ this bound heavily in our proof.
\begin{lemma}[{Tail bound~\cite[Lemma 11]{Li2015}}]\label{lem:tailbound}
Given $\textbf{g},\textbf{k} \in \mathbb{R}^N$ where $\textbf{k}$ is an isotropic Gaussian random variable $\textbf{k}\sim\mathcal{N}(0,\nu^2\mathbbm{1}_N)$, then the following tail bound holds:
\begin{align}
    \Pr\left(\frac{1}{N}\|\textbf{g}+\textbf{k}\|^2\geq\tau_1\right)\leq&\mathrm{e}^{-\frac{N}{4}\left(\sqrt{2\tau_1/\nu^2-1}-\sqrt{1+2\theta_0}\right)^2}\label{tailbound1}\\
    \Pr\left(\frac{1}{N}\|\textbf{g}+\textbf{k}\|^2\leq\tau_2\right)\leq&\mathrm{e}^{-\frac{N}{4}\frac{\left(1+\theta_0-\tau_2/\nu^2\right)^2}{1+2\theta_0}}\label{tailbound2},
\end{align}
for $\tau_1, \tau_2$, and $\theta_0$ satisfying
\begin{gather*}
    \tau_1\geq\nu^2(1+\theta_0),\ \tau_2\leq\nu^2(1+\theta_0),\ \theta_0 = \frac{\|\textbf{g}\|^2}{N\nu^2}.
\end{gather*}
\end{lemma}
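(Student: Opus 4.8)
The plan is to prove \eqref{tailbound1} and \eqref{tailbound2} by a Chernoff (exponential--moment) argument, after identifying the left--hand quantity as a noncentral chi--square variate. Since $\textbf{k}\sim\mathcal{N}(0,\nu^2\mathbbm{1}_N)$ is rotationally invariant, the law of $\|\textbf{g}+\textbf{k}\|^2$ is unchanged when $\textbf{g}$ is replaced by $\|\textbf{g}\|\,\textbf{e}_1$; hence $X\coloneqq\|\textbf{g}+\textbf{k}\|^2/\nu^2$ is distributed as $\chi_N^2(\lambda)$ with noncentrality $\lambda=\|\textbf{g}\|^2/\nu^2=N\theta_0$, so that its moment generating function is $\mathbb{E}[e^{tX}]=(1-2t)^{-N/2}\exp\!\big(\lambda t/(1-2t)\big)$ for $t<1/2$, which I would obtain by evaluating the Gaussian integral coordinatewise (or cite directly). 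I abbreviate $u_1\coloneqq\tau_1/\nu^2$ and $u_2\coloneqq\tau_2/\nu^2$, so the hypotheses read $u_1\ge 1+\theta_0$ (whence $2u_1-1\ge1$) and $u_2\le 1+\theta_0$.

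For the upper tail, apply Markov's inequality to $e^{tX}$ with $t\in(0,1/2)$ and write $\beta=2t\in(0,1)$; this gives $\Pr(X/N\ge u_1)\le\exp\!\big(N\psi(\beta)\big)$ with
\begin{equation*}
\psi(\beta)=-\frac{\beta u_1}{2}-\frac12\log(1-\beta)+\frac{\theta_0\beta}{2(1-\beta)}.
\end{equation*}
The key elementary bound is $-\tfrac12\log(1-\beta)\le\beta(2-\beta)/\big(4(1-\beta)\big)$ on $(0,1)$, which holds because the difference of the two sides vanishes at $\beta=0$ and has derivative $\beta^2/\big(4(1-\beta)^2\big)\ge0$. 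Substituting it and changing variables to $\gamma=1-\beta\in(0,1)$ collapses the exponent to
\begin{equation*}
\psi(\beta)\ \le\ -\frac{u_1+\theta_0}{2}+\frac{\gamma(2u_1-1)}{4}+\frac{1+2\theta_0}{4\gamma},
\end{equation*}
whose infimum over $\gamma\in(0,1)$ is, by AM--GM (using $\sqrt{(1+2\theta_0)/(2u_1-1)}\le1$, i.e.\ $u_1\ge1+\theta_0$), at most $-\tfrac{u_1+\theta_0}{2}+\tfrac12\sqrt{(2u_1-1)(1+2\theta_0)}$. Expanding $-\tfrac14\big(\sqrt{2u_1-1}-\sqrt{1+2\theta_0}\big)^2$ shows this is exactly the stated exponent, which yields \eqref{tailbound1}.

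For the lower tail, run the same Chernoff step with $t<0$, i.e.\ $\beta=2t<0$, so that $\Pr(X/N\le u_2)\le\exp\!\big(N\psi_2(\beta)\big)$ with $\psi_2(\beta)=-\tfrac{\beta u_2}{2}-\tfrac12\log(1-\beta)+\tfrac{\theta_0\beta}{2(1-\beta)}$. Here I would invoke the two bounds $-\log(1-\beta)\le\beta+\tfrac{\beta^2}{2}$ and $\beta/(1-\beta)\le\beta(1+\beta)$, both valid for $\beta\le0$ by an elementary sign check, to get $\psi_2(\beta)\le\tfrac{\beta(1+\theta_0-u_2)}{2}+\tfrac{\beta^2(1+2\theta_0)}{4}$; the choice $\beta=-(1+\theta_0-u_2)/(1+2\theta_0)\le0$ (legitimate since $u_2\le1+\theta_0$) makes this equal $-(1+\theta_0-u_2)^2/\big(4(1+2\theta_0)\big)$, which is the exponent in \eqref{tailbound2}. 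I do not anticipate a genuine obstacle, since the statement is classical --- it is Lemma~11 of \cite{Li2015} --- and the only delicate point is selecting the \emph{tight} elementary bounds on $-\log(1-\beta)$ and $\beta/(1-\beta)$ on either side of the origin: coarser inequalities still give sub--exponential tails but degrade the sharp constant $\tfrac14$ in the exponents recorded above.
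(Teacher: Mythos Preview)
Your argument is correct. You correctly identify $\|\textbf{g}+\textbf{k}\|^2/\nu^2$ as a noncentral $\chi^2_N(\lambda)$ variable with $\lambda=N\theta_0$, and both Chernoff steps go through: the logarithm bound $-\tfrac12\log(1-\beta)\le\beta(2-\beta)/(4(1-\beta))$ on $(0,1)$ is verified by the derivative computation you give, the substitution $\gamma=1-\beta$ and the AM--GM optimization land exactly on the exponent in \eqref{tailbound1}, and the two elementary inequalities you invoke for $\beta\le0$ (both checked by the sign of $\beta^3/(1-\beta)$ and of $h''$) reduce $\psi_2$ to the quadratic whose minimum gives the exponent in \eqref{tailbound2}. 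One small remark: at the boundary case $u_1=1+\theta_0$ the optimal $\gamma^*$ equals $1$, which lies on the edge of your open interval; this is harmless since the Chernoff bound also holds at $t=0$ (trivially), so you may work on $\gamma\in(0,1]$ and the infimum is attained.

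As for comparison with the paper: the paper does \emph{not} prove this lemma at all. It is quoted verbatim as Lemma~11 of \cite{Li2015} and used as a black box in the subsequent analysis of the bin detector. Your write-up therefore supplies a self-contained proof that the paper omits by citation; there is no methodological difference to compare, since the paper offers no method here.
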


As stated in Algorithm~\ref{alg:bin_detect}, the procedure aims to distinguish every input bin set as zero-ton, single-ton, or multi-ton.
Thus the number of type errors is six by enumerating all pairs between the preceding three.
To be more clear, we follow the notation in \cite{harper2020fast}, and let $\mathfrak{B}$ be the real type and $\hat{\mathfrak{B}}$ to be the detected type.
Besides that, we also need to consider the case in which a bin is detected to be single-ton correctly while the detector estimates the error rate with either a value with a huge deviation or a wrong index.
Let us define these failures explicitly.
\begin{definition}\label{df:failure}
The execution of the bin detector (Algorithm~\ref{alg:bin_detect}) is regarded  as  a failure if the output falls into one of the following cases,
\begin{itemize}
    \item Confusion between single-ton and zero-ton bins
    \begin{align*}
        f_1 = \Prob{\widehat{\mathfrak{B}}=\texttt{z}\,|\,\mathfrak{B}=\texttt{s}}\ \ f_2=\Prob{\widehat{\mathfrak{B}}=\texttt{s}\,|\,\mathfrak{B}=\texttt{z}}
    \end{align*}
    \item Confusion between single-ton and multi-ton bins
    \begin{align*}
        f_3=\Prob{\widehat{\mathfrak{B}}=\texttt{m}\,|\,\mathfrak{B}=\texttt{s}}\ \ f_4=\Prob{\widehat{\mathfrak{B}}=\texttt{s}\,|\,\mathfrak{B}=\texttt{m}}
    \end{align*}
    \item Confusion between zero-ton and multi-ton bins
    \begin{align*}
        f_5=\Prob{\widehat{\mathfrak{B}} = \texttt{m}\, |\, \mathfrak{B} = \texttt{z}}\ \ f_6=\Prob{\widehat{\mathfrak{B}} = \texttt{z}\, |\, \mathfrak{B} = \texttt{m}}
    \end{align*}
    \item Errors on the estimated value or index
    \begin{align*}
        f_7=\Prob{\widehat{\mathfrak{B}}=\texttt{s},\widehat{\alpha}\not=\alpha\, |\,\mathfrak{B}=\texttt{s}, \alpha}\ \ f_8=\Prob{\widehat{\mathfrak{B}}=\texttt{s},\alpha,|\hat{p}_\alpha - p_\alpha|> \frac{2\mathcal{B}}{\sqrt{B}} \,|\,\mathfrak{B}=\texttt{s}, \alpha, p_\alpha}
    \end{align*}
\end{itemize}
\end{definition}

We then show a comprehensive lemma to illustrate the robustness in most executions of this detector.
The lemma states an exponentially vanishing failure probability with regard to some mild conditions that we expect to rule out later by the total probability formula.
In the proof, we will separately analyze the probability of all these eight types of failures.
\begin{lemma}\label{lm:failure_bound}
Suppose Assumption A1 \& A2 are satisfied, and suppose $B\geq O\left(\frac{\mathcal{B}^2}{\epsilon_0^2}\right)$.
With the fidelity extracted from Algorithm~\ref{alg:query} and bins constructed from Algorithm~\ref{alg:subsampling}, the detector will deal with noisy bins.
Let $E$ denote the event that an arbitrary bin detection with inputs as those in \Cref{alg:peeling} get failed as defined in Definition~\ref{df:failure}.
Let $D$ be the event that every bin contains at most $P_1$ nonzero terms
Let $V$ be the event that all the prior bin detection executes successfully. 
Let $H$ be the event that the peeling graph is cycle-free.
Let $K$ denote that all the peelings do not rule out the randomness and that the bias parts remain limited dependence.
Then
\begin{gather}
    \Prob{E|D,V,H,K} \leq  \mathrm{e}^{-O(n)}.
\end{gather}
\end{lemma}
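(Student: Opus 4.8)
The plan is to union-bound the failure event $E$ over the eight elementary failures $f_1,\dots,f_8$ of Definition~\ref{df:failure} and to bound each one with the tail inequalities of Lemma~\ref{lem:tailbound}. The conditioning is exactly what makes this possible: event $D$ guarantees that at the moment of detection the bin carries at most $P_1$ nonzero second-order error rates, so the error-correcting decode in Algorithm~\ref{alg:bin_detect} is well posed; events $V$ and $H$ guarantee that all earlier single-tons were peeled correctly and along an acyclic graph, so the noise currently sitting in the bin is a faithful superposition of at most $\mathbf{T}_c[j]$ independent contributions; and event $K$ guarantees that this superposition has not collapsed the Gaussian randomness and that the accumulated bias stays controlled. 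Concretely, under $D,V,H,K$ the content of the bin over the offsets $t\in[P_1]$ is a deterministic ``signal'' vector $\mathbf{g}$ (the sign-modulated sum of the surviving rates, plus a bias of size $O(B\mathcal{B}/4^n)$ per coordinate) plus an isotropic Gaussian $\mathbf{k}\sim\mathcal{N}(0,\mathbf{T}_c[j]\nu^2\mathbbm{1})$ with $\nu^2=\mathcal{B}^2/B$, and the update on line~\ref{algline:T} together with $K$ keeps $\mathbf{T}_c[j]=O(1)$ throughout; hence the noise floor of every test is $\Theta(\nu^2)$, while Assumption~A2 forces every present rate to satisfy $|p_\alpha|=s_\alpha^2\ge\epsilon_0$ and the hypothesis $B\ge O(\mathcal{B}^2/\epsilon_0^2)$ makes $\epsilon_0^2\ge C_0\nu^2$ for an arbitrarily large constant $C_0$. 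Throughout I use the natural choice $P_1=\Theta(n)$, $P_2=\Theta(n)$ of Algorithm~\ref{alg:peeling}.

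First I would handle the zero-ton confusions. For a true zero-ton, $\mathbf{g}$ has only bias entries, so $\theta_0=\|\mathbf{g}\|^2/(P_1\nu^2)=o(1)$ and $\frac{1}{P_1}\sum_t U_{c;t}^2$ concentrates at $(1+o(1))\mathbf{T}_c[j]\nu^2$; choosing the threshold constant $\gamma_1$ slightly above $1$, \eqref{tailbound1} gives $f_2,f_5\le\mathrm{e}^{-\Omega(P_1)}$. For a true single-ton, $\|\mathbf{g}\|^2\ge P_1\epsilon_0^2\ge C_0P_1\nu^2$, so by \eqref{tailbound2} the statistic stays above $\mathbf{T}_c[j](1+\gamma_1)\nu^2$ except with probability $\mathrm{e}^{-\Omega(P_1)}$, i.e.\ $f_1\le\mathrm{e}^{-\Omega(P_1)}$. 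A true multi-ton has at least two such rates and, because the offsets randomize their relative signs, $\|\mathbf{g}\|^2\ge(2-o(1))P_1\epsilon_0^2$ (the $o(1)$ deviation of the cross-terms is part of event $K$), so $f_6\le\mathrm{e}^{-\Omega(P_1)}$ in the same way.

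Next come the residual tests and the index/value errors. If the true type is single-ton and the decoded index is correct, the residual $U_{c;t}-(-1)^{\langle\mathbf{d}_{c;t},\widehat\alpha\rangle_p}\widehat p_{\widehat\alpha}$ is pure noise of level $\mathbf{T}_c[j]\nu^2$, so it passes the $\gamma_2$-threshold except with probability $\mathrm{e}^{-\Omega(P_1)}$ by \eqref{tailbound1}, bounding $f_3$. If the true type is multi-ton, then whatever index $\widehat\alpha$ is returned, subtracting a single sign pattern from a sum of at least two of them leaves, by near-orthogonality of the Rademacher-like offset patterns, a residual of norm $\ge(1-o(1))P_1\epsilon_0^2$; since $\widehat\alpha$ is produced adaptively from the data this requires a union bound over the at most $4^n$ candidate indices, which is harmless because each per-$\widehat\alpha$ bound is exponential in $P_1=\Theta(n)$ and beats $4^n$. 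Hence $f_4\le\mathrm{e}^{-\Omega(n)}$. For $f_7$, a true single-ton is decoded wrongly only if more than $(d-1)/2$ of the last $P_2$ syndrome signs flip, where the chosen linear code has $P_2=O(n)$ and minimum distance $d=\Omega(n)$; a single flip needs $|k_{c;t}+\mathrm{bias}|>|p_\alpha|\ge\epsilon_0$, of probability $\mathrm{e}^{-\Omega(\epsilon_0^2B/\mathcal{B}^2)}=\mathrm{e}^{-\Omega(1)}$ per coordinate (using $B\ge O(\mathcal{B}^2/\epsilon_0^2)$ and $B\mathcal{B}/4^n\ll\epsilon_0$), so a Chernoff bound over the $P_2$ conditionally independent coordinates gives $f_7\le\mathrm{e}^{-\Omega(n)}$. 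Finally, with the index right, $\widehat p_{\widehat\alpha}-p_\alpha$ is an average of $P_1$ Gaussians of variance $\le\mathbf{T}_c[j]\mathcal{B}^2/B$ plus an $O(B\mathcal{B}/4^n)$ bias, so it exceeds $2\mathcal{B}/\sqrt{B}$ only with probability $\mathrm{e}^{-\Omega(P_1)}$, bounding $f_8$.

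Combining the eight bounds with $P_1=\Theta(n)$, the union bound yields $\Prob{E\,|\,D,V,H,K}\le\mathrm{e}^{-O(n)}$. The step I expect to be the real obstacle is $f_4$: because $\widehat\alpha$ is produced adaptively from the noisy bin it cannot be fixed in advance, so one must argue uniformly over all candidate indices that no rank-one subtraction can make a genuine multi-ton mimic a single-ton, and make the near-orthogonality estimate quantitative enough to survive the $4^n$-fold union bound. Closely related, and the main bookkeeping burden, is justifying rigorously that the bias/variance index $\mathbf{T}_c[j]$ really stays $O(1)$ (the $1/P_1$ damping in line~\ref{algline:T} along the acyclic peeling forest) and that the coordinate-wise Gaussian tail bounds of Lemma~\ref{lem:tailbound} still apply despite the adaptive peeling order — which is precisely what conditioning on $V,H,K$ is designed to secure, and which will be discharged separately when $\Prob{D},\Prob{V},\Prob{H},\Prob{K}$ are controlled.
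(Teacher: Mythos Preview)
Your overall architecture matches the paper's proof: split $E$ into the eight failures of Definition~\ref{df:failure}, bound each with Lemma~\ref{lem:tailbound}, use the error-correcting code for $f_7$, and for $f_4$ union-bound over all $4^n$ candidate indices (this is exactly what the paper does, citing Appendix~E of \cite{Li2015}). Your identification of $f_4$ as the delicate case is also right.

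There is, however, a real gap in how you treat the bias. You write that under $K$ the bin carries ``a bias of size $O(B\mathcal{B}/4^n)$ per coordinate''. That quantity is the bound on the per-\emph{error-rate} bias $|\Delta_\alpha^{(p)}|\le B\mathcal{B}/4^n$, not on the per-\emph{bin} bias $\Delta W_{c;t}[j]=\sum_{\alpha:M^T\alpha=j}(-1)^{\langle d,\alpha\rangle_p}\Delta_\alpha^{(p)}$, which is a sum of $4^n/B$ such terms and is only bounded by $\mathcal{B}$ deterministically --- far too large for your $\theta_0=o(1)$ claim. Event $K$ does not give you a deterministic per-coordinate bound of that size; what it gives (via Lemma~\ref{lm:bias}) is a \emph{variance} bound $\mathrm{Var}(\Delta W)\le 4B\mathcal{B}^2/N$ over the random offsets. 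The paper then inserts a Markov/Chebyshev step, $\Pr\{|\Delta W|\ge 2\mathcal{B}/B\}\le O(B^3/N)$, and carries the resulting $O(P_1B^3/N)$ additive failure terms through every $f_i$. These terms are harmless (polynomial$/4^n$), but they are not $e^{-\Omega(P_1)}$, and without this step your zero-ton and sign-flip arguments do not go through. Once you replace the incorrect deterministic bias claim by the variance bound from Lemma~\ref{lm:bias} plus Markov, your proposal becomes essentially identical to the paper's proof.
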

\begin{proof}
During the proof we denote $\frac{\mathcal{B}}{\sqrt{B}}$ by $\nu$.
The original noise in each bin can be divided as zero-mean Gaussian noise $\mathcal{N}(0,\nu^2)$ and a bias with variance less than $\frac{B^2\nu^2}{N}$ given the event $K$ as illustrated in Lemma~\ref{lm:bias}.
From Assumption A1 \& A2 and the constraint that $B\geq O\left(\frac{\mathcal{B}^2}{\epsilon_0^2}\right)$, we choose $B$ such that $\epsilon_0\geq \frac{4\mathcal{B}}{\sqrt{B}}$.
In the remainder of this proof, we will keep this constraint on $\epsilon_0$.

We will follow the Definition~\ref{df:failure} to check  all the possibility of variants of detection failures.
For the first one, we consider the case that a bin detection identify a single-ton bin as a zero-ton bin.
The failure rate is defined as follows,
\begin{align}
    f_1 = \Prob{\widehat{\mathfrak{B}}=\texttt{z}\,|\,\mathfrak{B}=\texttt{s}}=\Prob{\frac{1}{P_1}\|\mathbf{U}_{\texttt{s}}\|^2\leq T(1+\gamma_1)\nu^2}.
\end{align}
According to the lemma 7 in \cite{harper2020fast}, the maintained array $\mathbf{T}$ keeps track of the variance information of Gaussian noise in each bin.
Thus, the above failure rate can be bounded by Lemma~\ref{lem:tailbound}, Lemma~\ref{lm:bias}, and Markov inequality.
\begin{align}
    f_1=\Prob{\frac{1}{P_1}\|\mathbf{U}_{\texttt{s}}\|^2\leq T(1+\gamma_1)\nu^2}\leq& \Prob{\frac{1}{P_1}\|\mathbf{U}_{\texttt{s}}\|^2\leq T(1+\gamma_1)\nu^2\,|\,\|\Delta \mathbf{W}\|_\infty\leq\frac{2\mathcal{B}}{B}}+\Prob{\|\Delta \mathbf{W}\|_\infty\geq\frac{2\mathcal{B}}{B}}\notag\\
    \leq& \mathrm{e}^{-\frac{P_1}{4}\frac{((\epsilon_0-2\nu/\sqrt{B})^2/T\nu^2-\gamma_1)^2}{1+2(\epsilon_0-2\nu/\sqrt{B})^2/T\nu^2}}+\frac{P_1B^3}{N}. 
\end{align}
Since the number $N=4^n$, $B$ is a polynomial parameter and $P_1= O(n)$, the failure rate is exponentially low.

Then, we consider the case that we recognize a zero-ton bin as a single-ton bin.
\begin{align}
    f_2=\Prob{\widehat{\mathfrak{B}}=\texttt{s}\,|\,\mathfrak{B}=\texttt{z}}=\Prob{\frac{1}{P_1}\|\Delta \mathbf{W}+\mathbf{W}\|^2\geq T(1+\gamma_1)\nu^2}.
\end{align}
Similarly, we can use the formula of the total probability to bound this failure rate.
\begin{align}
    f_2\leq& \Prob{\frac{1}{P_1}\|\Delta \mathbf{W}+\mathbf{W}\|^2\geq T(1+\gamma_1)\nu^2\,|\,\|\Delta \mathbf{W}\|_\infty\leq\frac{2\mathcal{B}}{B}}+\Prob{ \|\Delta \mathbf{W}\|_\infty\geq\frac{2\mathcal{B}}{B}}\notag\\
    \leq& \mathrm{e}^{-\frac{P_1}{4}\left(\sqrt{1+2\gamma_1}-\sqrt{1+\frac{8}{BT}}\right)^2}+\frac{P_1B^3}{N}.
\end{align}

As for the failure rate $f_3$, it measures the probability of identifying a single-ton bin as a multi-ton bin.
For clearness, we denote the bin contains error rate $p_m$ actually, and the estimated rate is $\hat{p}_{\hat{\alpha}}$.
And the sign vector added to peel the error rate is represented by $\mathbf{s}_{\widehat{\alpha}}$
\begin{align}
    f_3=\Prob{\widehat{\mathfrak{B}}=\texttt{m}\,|\,\mathfrak{B}=\texttt{s}}=\Prob{\frac{1}{P_1}\left\|\mathbf{U}_{\texttt{s}}-\widehat{p}_{\widehat{\alpha}}\mathbf{s}_{\widehat{\alpha}}\right\|^2 \geq  T(1+\gamma_2)\nu^2}.
\end{align}
Since here involves multiple error rates, we only consider the case that the estimated error rate satisfies that $\alpha=\widehat{\alpha}$ and $\abs{p_\alpha-\hat{p}_{\widehat{\alpha}}}\leq \nu$.
By Lemma~\ref{lem:tailbound}, we have
\begin{align}
    f_3\leq f_7+f_8+\mathrm{e}^{{-\frac{P_1}{4}\left(\sqrt{1+2\gamma_2}-\sqrt{1+\frac{2(\sqrt{B}+2)^2}{BT}}\right)^2}}+\frac{P_1B^3}{N}.
\end{align}

As for the next failure, we consider the case that the bin contains several nonzero terms while the detector identifies it as a single-ton bin.
\begin{align}
    f_4=\Prob{\widehat{\mathfrak{B}}=\texttt{s}\,|\,\mathfrak{B}=\texttt{m}}=\Prob{\frac{1}{P_1}\left\|\mathbf{U}_\texttt{m}-\widehat{p}_{\widehat{\alpha}}\mathbf{s}_{\widehat{\alpha}}\right\|^2 \leq T(1+\gamma_2)\nu^2}.
\end{align}
We then decompose the bin vector by the Gaussian noise part $\mathbf{W}$, the bias part $\Delta \mathbf{W}$, the error rates $\mathbf{\beta}$, and the sign matrix $\mathbf{s}$.
The failure rate can be calculated as follows,
\begin{align}
    f_4\leq \Prob{\frac{1}{P_1}\left\|\mathbf{U}_\texttt{m}-\widehat{p}_{\widehat{\alpha}}\mathbf{s}_{\widehat{\alpha}}\right\|^2 \leq T(1+\gamma_2)\nu^2\,|\,\frac{1}{P_1} \|\mathbf{s}\mathbf{\beta}+\Delta \mathbf{W}\|^2\geq 2T\gamma_2\nu^2}+\Prob{\frac{1}{P_1}\|\mathbf{s}\mathbf{\beta}+\Delta \mathbf{W}\|^2\leq 2T\gamma_2\nu^2}.
\end{align}
The first term in the right-hand side can be easily bounded by Lemma~\ref{lem:tailbound} since $T$ tracks the variance size of every element in $\mathbf{W}$,
\begin{align}
    \Prob{\frac{1}{P_1}\left\|\mathbf{U}_\texttt{m}-\widehat{p}_{\widehat{\alpha}}\mathbf{s}_{\widehat{\alpha}}\right\|^2 \leq T(1+\gamma_2)\nu^2\,|\,\frac{1}{P_1} \|\mathbf{s}\mathbf{\beta}+\Delta \mathbf{W}\|^2\geq 2T\gamma_2\nu^2}\leq \mathrm{e}^{-\frac{P_1}{4}\frac{\gamma_2^2}{1+4\gamma_2}}.
\end{align}
As for the second term, we also need to decompose the remaining two terms,
\begin{align}
    \Prob{\frac{1}{P_1}\|\mathbf{s}\mathbf{\beta}+\Delta \mathbf{W}\|^2\leq 2T\gamma_2\nu^2}\leq  \Prob{\frac{1}{P_1}\|\mathbf{s}\mathbf{\beta}+\Delta \mathbf{W}\|^2\leq 2T\gamma_2\nu^2\,|\,\frac{1}{P_1}\|\mathbf{s}\beta\|^2\geq3T\gamma_2\nu^2}+\Prob{\frac{1}{P_1}\|\mathbf{s}\beta\|^2\leq3T\gamma_2\nu^2}.
\end{align}
Note that $\Delta W$ is a random variable with the  mean to be zero  from the summation of multiple Bernoulli variables.
Thus it is an approximate Gaussian variable by the central limit theorem.
The variance falls into the interval $\left[\frac{B}{N}\mathcal{B}^2,\frac{2B}{N}\mathcal{B}^2\right]$.
By Lemma~\ref{lem:tailbound}, the first term can be bounded by
\begin{align}
    \Prob{\frac{1}{P_1}\|\mathbf{s}\mathbf{\beta}+\Delta \mathbf{W}\|^2\leq 2T\gamma_2\nu^2\,|\,\frac{1}{P_1}\|\mathbf{s}\beta\|^2\geq3T\gamma_2\nu^2}\leq \mathrm{e}^{-\frac{P_1}{4}\frac{(1+\gamma_2N/2B^2)^2}{1+3\gamma_2N/B^2}}\ll1.
\end{align}
The second term can follow the analysis in Appendix E of \cite{Li2015}, and we get
\begin{align}
    \Prob{\frac{1}{P_1}\|\mathbf{s}\beta\|^2\leq3T\gamma_2\nu^2}\leq N\mathrm{e}^{-\frac{P_1\epsilon}{4}\left(1-\frac{3T\gamma_2\nu^2}{\epsilon_0^2}\right)}.
\end{align}
Even $N=4^n$, we can choose a proper $P_1$ and guarantee that this is still an exponential vanishing failure rate.
Therefore, we have
\begin{align}
    f_4\leq \mathrm{e}^{-O(n)}.
\end{align}

The following two failure types are easy to handle as they can be reduced to the previous rates.
For the failure that the procedure identifies a zero-ton bin as a multi-ton bin, this bin must pass the zero-ton verification first.
Thus, we have
\begin{align}
    f_5=\Prob{\widehat{\mathfrak{B}} = \texttt{m}\, |\, \mathfrak{B} = \texttt{z}}\leq f_2=\Prob{\widehat{\mathfrak{B}}=\texttt{s}\,|\,\mathfrak{B}=\texttt{z}}
\end{align}

If the detector recognizes a multi-ton as a zero-ton bin, the probability of this failure is very similar to $f_4$, and we can claim that
\begin{align}
    f_6=\Prob{\widehat{\mathfrak{B}} = \texttt{z}\, |\, \mathfrak{B} = \texttt{m}}\leq \mathrm{e}^{-O(n)}.
\end{align}

Then we need to consider the remaining index and value errors.
As the sign for every single-ton bin is generated from the inner product of offsets and the index as follows,
\begin{align}
    \sgn{U_{c;t}[j]}=\sgn{(-1)^{\langle d_{c;t},\alpha\rangle_p}p_\alpha+\Delta W+W}=\langle d_{c;t},\alpha\rangle_p\oplus Z_t,\ \forall\,t\in\{P_1+1,\cdots,P_1+P_2\}, 
\end{align}
where sgn is a function that returns the sign of input and the Bernoulli variable $Z$ denotes whether the bias and noise flips the sign of this bin.
By deliberately choosing these $P_2$ offsets, we can use these bit strings to construct a generating matrix $\mathbf{G}$ of some error correcting code,
\begin{align}
    \sgn{\mathbf{U}_c[j]}=\langle\mathbf{G}_c,\alpha\rangle\oplus\mathbf{Z}.
\end{align}
From the error-correcting codes' perspective, there are $2n$ logical bits, and the code words have length $P_2$.
Thus, the code rate is $R=\frac{2n}{P_2}\leq1$.
Besides, we denote the code protects error with Hamming weight up to $\beta P_2$ with a positive parameter $\beta\in[0,1]$.
To decide whether a code with a given distance can protect the index information, we need to figure out the property of the Bernoulli variables $\mathbf{Z}$.
By definition, the variable $Z=1$ if and only if $\abs{\Delta W+W}\geq p_\alpha$ and they contain opposite signs.
Therefore, we have
\begin{align}
    \Prob{Z=1}\leq \frac{1}{2}\Prob{\abs{\Delta W+W}\geq p_\alpha}\leq\Prob{W\geq p_\alpha-\abs{\Delta W}\,|\,\abs{\Delta W}\leq \frac{2\mathcal{B}}{B}}+\Prob{\abs{\Delta W}\geq \frac{2\mathcal{B}}{B}}\notag.
\end{align}
Due to the fact that $W$ is Gaussian noise with zero mean and variance $T\nu^2$, there exists a tail bound for Gaussian random variables.
\begin{align}
    \Prob{W\geq p_\alpha-\abs{\Delta W}\,|\,\abs{\Delta W}\leq \frac{2\mathcal{B}}{B}}\leq \mathrm{e}^{-\frac{\left(\epsilon_0-\frac{2\nu}{\sqrt{B}}\right)^2}{2T\nu^2}}\leq\mathrm{e}^{-\frac{\left(4-\frac{2}{\sqrt{B}}\right)^2}{2T}}.
\end{align}
The last inequality comes from the assumption about the smallest  Pauli error rate.
Combining this tail bound and the bound of the bias, the success probability of the Bernoulli variable is
\begin{align}
    \Prob{Z=1}\leq\mathrm{e}^{-\frac{\left(4-\frac{2}{\sqrt{B}}\right)^2}{2T}}+\frac{B^3}{N}\ll\frac{1}{2}.
\end{align}
We denote this success rate by $\mathbb{P}$, and the index failure rate is therefore bounded by Hoeffding's inequality,
\begin{align}
    f_7=\Prob{\widehat{\mathfrak{B}}=\texttt{s},\widehat{\alpha}\not=\alpha\, |\,\mathfrak{B}=\texttt{s}, \alpha}\leq \mathrm{e}^{-2P_2(\beta-\mathbb{P})}.
\end{align}

The value estimation comes from the \emph{single-ton search} step in Algorithm~\ref{alg:bin_detect}.
\begin{align}
    f_8=&\Prob{\widehat{\mathfrak{B}}=\texttt{s},\alpha,|\hat{p}_\alpha - p_\alpha|> 2\nu \,|\,\mathfrak{B}=\texttt{s}, \alpha, p_\alpha}=\Prob{\abs{\frac{\mathbf{s}_{\alpha}^T\mathbf{U}_{\texttt{s},\alpha}}{P_1}-p_\alpha}\geq 2\nu}\notag\\
    \leq&\Prob{\abs{\frac{\mathbf{s}_{\alpha}^T}{P_1}(\Delta \mathbf{W}+\mathbf{W})}\geq 2\nu\,|\,\abs{ \frac{\mathbf{s}_{\alpha}^T\Delta\mathbf{W}}{P_1}}\leq \nu}+\Prob{\abs{ \frac{\mathbf{s}_{\alpha}^T\Delta\mathbf{W}}{P_1}}\geq \nu}\notag\\
    \leq& \mathrm{e}^{-\frac{P_1}{2T}}+\frac{4B^2}{P_1N}.
\end{align}

Till now, we have completed the analysis of all types of failures that would occur in the detection of bins.
In the implementation of our algorithm, we choose $P_1 P_2\sim O(n)$ and $N=4^n$ is an exponentially increasing parameter.
Therefore, by the union bound, we have that
\begin{align}
    \Prob{E|D,V,H,K} \leq \sum_{i=1}^8f_i\leq  \mathrm{e}^{-O(n)}.
\end{align}
\end{proof}

\begin{lemma}\label{lm:bias}
Suppose Assumptions A1 \& A2 are satisfied.
Suppose that every bin contains at most $\frac{n}{2}$ nonzero error rates and that every previous bin detection ran successfully before. In an arbitrary peeled bin set $U_{c_0}[j_0]$, the bias in every bin keeps bounded dependence, and the variance after the peeling will be bounded as follows,
\begin{align}
    \text{Var}(\Delta W_{c_0}[j_0])\leq\frac{4B}{N}\mathcal{B}^2.
\end{align}
This above-mentioned observation succeeds with probability at least $1-\mathrm{e}^{-O(n)}$.

\end{lemma}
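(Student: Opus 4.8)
The plan is to treat this as the bias-analogue of the Gaussian-variance bookkeeping carried by the array $\mathbf{T}$ in Algorithm~\ref{alg:peeling}: I want to show that the deterministic bias component of the fidelity noise, once routed into the bins, stays weakly correlated and never accumulates a variance larger than $\tfrac{4B}{N}\mathcal{B}^2$, however deep the peeling runs. First I would fix the noise model from the discussion preceding Lemma~\ref{lem:tailbound}: the noise of each estimated second-order fidelity term is $w=\Delta+\omega$ with $|\Delta|\le\mathcal{B}$ deterministic and $\omega\sim\mathcal{N}(0,\mathcal{B}^2)$. Pushing this through the $B$-point subsampled Walsh--Hadamard transform of Lemma~\ref{lm:prop_hashing_obs}, an unpeeled bin has bias $\Delta W_{c;t}[j]=\sum_{\alpha:\,\mathbf{M}_c^T\alpha=j}\Delta^{(p)}_\alpha(-1)^{\langle\mathbf{d}_{c;t},\alpha\rangle_p}$, a signed sum of the $N/B$ error-rate biases routed to that bin, each of size $|\Delta^{(p)}_\alpha|\le B\mathcal{B}/N$ by the Walsh--Hadamard bound stated in the main text. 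Over the randomness of $\mathbf{M}_c$ together with Assumption~A1 (so the routing is balanced and the cross terms average out), this gives $\text{Var}\bigl(\Delta W_{c;t}[j]\bigr)\le\tfrac{N}{B}\bigl(\tfrac{B\mathcal{B}}{N}\bigr)^2=\tfrac{B}{N}\mathcal{B}^2$ for a fresh bin.

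Next I would analyse one peeling step, which is where the argument departs from the unbiased treatment of \cite{harper2020fast}. When a single-ton bin in group $c$ certifies $(\widehat\alpha,\widehat p_{\widehat\alpha})$ with $\widehat p_{\widehat\alpha}=\tfrac1{P_1}\sum_{t}(-1)^{\langle\mathbf{d}_{c;t},\widehat\alpha\rangle_p}U_{c;t}[j]$, its true part gives $p_{\widehat\alpha}$ and its bias part, $\tfrac1{P_1}\sum_t\sum_{\beta}\Delta^{(p)}_\beta(-1)^{\langle\mathbf{d}_{c;t},\widehat\alpha+\beta\rangle_p}$, splits into the $\beta=\widehat\alpha$ term, which is exactly $\Delta^{(p)}_{\widehat\alpha}$, plus a residual over $\beta\neq\widehat\alpha$. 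Since $\Delta^{(p)}_{\widehat\alpha}$ is precisely the bias carried by $p_{\widehat\alpha}$ in every other bin, the subtraction on line~\ref{algline:U} of Algorithm~\ref{alg:peeling} \emph{cancels} that error rate's bias in the target bin and injects only the residual; and because $\widehat\alpha+\beta\neq0$ for $\beta\neq\widehat\alpha$, the $P_1$ distinct offsets make the sign sums $\tfrac1{P_1}\sum_t(-1)^{\langle\mathbf{d}_{c;t},\widehat\alpha+\beta\rangle_p}$ pairwise uncorrelated with variance $1/P_1$, so the injected residual adds at most $1/P_1$ times the current bias variance of the source bin. Cycle-freeness of the peeling graph (event $H$) and the correctness of all prior detections are used here to make this recursion well-founded and to guarantee the source bin's bias is itself under control.

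I would then close the recursion: let $V_k$ bound the bias variance of any bin after it has absorbed $k$ generations of peeling, with $V_0=\tfrac{B}{N}\mathcal{B}^2$. Since every bin receives at most $n/2$ peelings by hypothesis and, by acyclicity, all its sources lie at generations $\le k-1$, we get $V_k\le\tfrac{B}{N}\mathcal{B}^2+\tfrac{n}{2}\cdot\tfrac1{P_1}V_{k-1}$; with $P_1=O(n)$ chosen so that $n/(2P_1)\le\tfrac14$ this is a contraction with fixed point at most $\tfrac{4}{3}\cdot\tfrac{B}{N}\mathcal{B}^2\le\tfrac{4B}{N}\mathcal{B}^2$, where the slack also absorbs the loss from replacing exact sign-sum variances by high-probability bounds and from the imperfect estimates $\widehat p_{\widehat\alpha}$. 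The ``bounded dependence'' claim follows because each peeling enlarges a bin's bias dependency neighbourhood by only the $O(N/B)$ error rates of one source bin, and acyclicity caps the total at $O(n\cdot N/B)\ll N$, keeping the second-moment bookkeeping legitimate. The overall failure probability comes from a union bound over the $O(\mathrm{poly}(n))$ bins across the $C$ groups and the $O(n)$ peeling generations of the rare structural events --- imbalanced routing by $\mathbf{M}_c$, an atypically large $\|\Delta W\|_\infty$, or excessive dependency growth --- each of which is a $\mathrm{poly}(n)/N$ structural tail of the kind bounded in Lemma~\ref{lm:failure_bound}, hence $\mathrm{e}^{-O(n)}$, so the total stays $\mathrm{e}^{-O(n)}$.

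The main obstacle is precisely the recursive compounding above: a single-ton source bin is frequently one that earlier peelings turned single-ton, so its bias is not the fresh $\tfrac{B}{N}\mathcal{B}^2$ but an already-accumulated quantity with a nontrivial dependency structure, and one must show --- via acyclicity and a careful generation-by-generation accounting --- that this feedback contracts rather than blows up exponentially in the peeling depth, all while keeping the biases' dependency graph sparse enough that treating a bin's bias as a sum of nearly uncorrelated contributions stays valid.
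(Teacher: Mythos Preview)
Your proposal is correct and follows essentially the same approach as the paper: both compute the fresh-bin bias variance as $\tfrac{B}{N}\mathcal{B}^2$, observe that the $\widehat\alpha$-bias is propagated without the $1/P_1$ suppression while the residual is, and close an induction using $l\le n/2$ peelings against $P_1=O(n)$ to land at $\tfrac{4B}{N}\mathcal{B}^2$, with the $e^{-O(n)}$ failure coming from $\mathrm{poly}(n)/N$ sign-collapse events. The only cosmetic differences are that you phrase the induction as a contraction recursion in the generation index and emphasise the cancellation of $\Delta^{(p)}_{\widehat\alpha}$, whereas the paper checks the invariant directly and books the same dependence as an explicit covariance correction $\tfrac{3lB^2}{N^2}\mathcal{B}^2$.
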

\begin{proof}
We prove this lemma inductively.
In the beginning, we first focus on the bias from an arbitrary bin before peeling.
According to Lemma~\ref{lm:fe}, the Pauli fidelity terms are carrying bounded noise $w\in[-\mathcal{B},\mathcal{B}]$, and we can regard the bias of fidelity as still falling into this bound naturally.
The procedure constructs every bin by a specific linear combination of multiple fidelity terms, and Lemma~\ref{lm:prop_hashing_obs} tells the form of the resulting bins.
Let us consider the bias $\Delta W$ in a bin $U_{c;t}[j]$, which is in the form of the summation of constants with random signs, 
\begin{align}
    \Delta W_{c;t}[j]=\sum_{\alpha:\mathbf{M}^T_c\alpha=j}(-1)^{\langle d_{c;t},\alpha\rangle_p}\Delta_\alpha^{(p)},
\end{align}
where these $\Delta^{(p)}$ are the effective biases of every error rate by Walsh-Hadamard tranform.
Since all the bias terms are pair-wise independent, we can calculate the variance,
\begin{align}
    \text{Var}(\Delta W_{c;t}[j])\leq \frac{B}{N}\mathcal{B}^2.
\end{align}

Then we need to analyze the effect of peeling processing on the biases of bins.
The peeling decoder algorithm points out that the single-ton bin will be averaged over random offsets to peel the corresponding error rate term in other bins.
The bias in a single-ton bin is propagated to the target bin during every peeling.

For brevity, we denote the $l$ single-ton bins and the target bin by $\{U_{c_1}[j_1],\cdots,U_{c_l}[j_l]\}$ and $U_{c_0}[j_0]$, respectively.
Averaging over those single-ton bins with $P_1$ randomly chosen offsets, and the target bin after the peeling is
\begin{align}\label{eq:peeling}
    U_{c_0;t_0}[j_0]'= U_{c_0;t_0}[j_0]+\frac{1}{P_1}\sum_{k=1}^l\sum_{t=0}^{P_1}(-1)^{\langle d_{c_k;t}+d_{c_0;t_0},\alpha_k\rangle_p}U_{c_k;t}[j_k].
\end{align}
When the previous detection works perfectly, the right-hand side of Eq.~\eqref{eq:peeling} becomes a single-ton bin, and all the other terms are either Gaussian noise or biases.
By the induction assumption, we can find all the involving bins have biases with variance less than  $\frac{4B}{N}\mathcal{B}$.
Note that the bias terms span all possible Pauli indices in a bin, and the average over different random signs is not completely independent.
The bias linked to the nonzero error rate will be propagated faithfully.
Thus we have
\begin{align}
\text{Var}(\Delta W_{c_0;t_0}[j_0]')\leq\frac{B}{N}\mathcal{B}^2+\frac{3lB^2}{N^2}\mathcal{B}^2+\frac{4lB}{P_1N}\mathcal{B}^2\leq \frac{4B}{N}\mathcal{B}^2,
\end{align}
where $2l\leq n\leq P_1\sim O(n)$.
The second term comes from the covariance of the desired nonzero terms' biases, and the third term is the independent part of summation.

In the above analysis, we have presumed two things. 
For the first, we assume there is only one bias term from the peeling that performs dependently with the original bias terms.
This actually contains two aspects that both the peeling bin and the peeled bin have untouched biases with the corresponding Pauli index.
These rely on the constraint that the accumulation of all previous signs that come from peelings would not cause any collapses to the signs of nonzero terms in a bin.
Then we need to bound the probability that the accumulation of random signs in multiple peelings rules out the randomness.
This is possible because the propagating terms carry one more random sign during every peeling process.
Random signs must come from the inner product of the corresponding offset and a nonzero bit string in the kernel space of the subsampling matrix.
To check whether these random signs will annihilate themselves, we can construct a directed group with nodes representing bins and edges representing peeling processes. 
Therefore, only the peeling that lays upstream in the corresponding path will affect the target bin's bias part.
Note there are $BC$ bin sets, and each bin set will be affected by at most $BC$ bin sets.
By the fact that the sum of all previous bitstrings must be zero to eliminate the random sign or equal to the original nonzero terms' indices, we have the failure rate $\Prob{K^c}$ for the above variance bound to be
\begin{align}
    \Prob{K^c}\leq\frac{nB^3C^2}{2N}.
\end{align}
Since $C$ is a constant and $N$ is exponentially large, the failure rate is exponentially vanishing.
\end{proof}

\subsection{Theoretic bounds}\label{sec:bounds}
The abovementioned lemmas help to remove obstacles over the engagement of the fidelity estimator and the error rate transformer.
Consequently, we can develop a comprehensive proposition to demonstrate the validity of combining these two subroutines.
This is based on the following assumptions.

\begin{propbis}{prop:Pauli}
Suppose the Assumption A1 \& A2, \ref{assump:gtm}, and \ref{assump:Pauli} hold.
Execute Algorithm~\ref{alg:peeling} with $t_0$ satisfies $\|\mathcal{H}_{5t_0}-\mathcal{I}\|<\frac{1}{4}$ and $l=\frac{2}{\epsilon^4}\log\left(\frac{4ns|\kappa|}{\delta}\right)$ sequences with a set $\kappa$ of variant sequence lengths, $B=2^b=\max\left\{O(s),O\left(\frac{\epsilon^4}{t_0^4\epsilon_0^2}\right)\right\}$, $C=O(1)$, the unit time length $t_0$, and offsets $\mathbf{D}$ with $P=O(n)$ for each subsampling group.
The transformer will estimate all Pauli error rates $\hat{\mathbf{p}}$
with accurate support information and error bounds to be $\|\hat{\mathbf{p}}-\mathbf{p}\|_\infty\leq O\left(\frac{\epsilon^2}{t_0^2\sqrt{s}}\right)$. 
Therefore, the absolute values of these nonzero decomposition parameters can be estimated by $\|\abs{\hat{s}^\star}-\abs{s^\star}\|_\infty\leq O\left(\frac{\epsilon}{t_0\sqrt[4]{s}}\right)$
The estimation works successfully with probability at least $1-\delta-O\left(\frac{1}{s}\right)$.
\end{propbis}
\begin{proof}
According to Lemma~\ref{lm:fe}, each round of the fidelity estimator results in fidelity terms with errors in the size at most $\mathcal{B}=\frac{\sigma\epsilon^2(f^{(1)}_ir^{(0)}_i+r^{(0,1)}_i)}{(f^{(0)}_i-\epsilon r^{(0)}_i)t_0^2}$, where $\sigma$ is a constant related to the choice of fitting times.
Hence, we reduce the proposition to a claim that with bounded-noise fidelity, the transformer can estimate all the $s$ error rates correctly with their support indices.
This claim can be verified by a series of lemmas.

We first consider the failure probability and error bounds of the transformer.
Since the peeling decoder is the main routine for this transformer, the failure probability can be tracked as follows,
\begin{align}
    \mathbb{P}_F 
	\leq \Prob{\text{Peeling decoder fails}\big|{E}_{\rm bin}^c}
	+\Prob{E_{\rm bin}|D,H}+\Prob{D^c}+\Prob{H^c} \label{error_rate}.
\end{align}
The event ${E}_{\rm bin}$ denotes that there exist some failing bin detections.
On the contrary, ${E}_{\rm bin}^c$ denotes that no bin detection error occurred in the entire execution of \Cref{alg:peeling}.
$D$ and $H$ denote that the maximum number of nonzero terms is at most $\frac{n}{2}$ and that the peeling routines are all cycle-free, respectively.
According to the Proposition 4 in \cite{Li2015}, we know the first term is vanishing with $s$ as $O\left(\frac{1}{s}\right)$.
The third term is exponentially vanishing according to Lemma 8 in \cite{harper2020fast}, and the fourth term is at most $O\left(\frac{\log^{\log\log s}s}{s}\right)$ as stated in Lemma 6 of \cite{Li2015}.
And the remaining is the second term.

To bound the second term, we have the following equation,
\begin{align}
    \Prob{E_{\text{bin}}\big| D,H}=1-\Prob{E_{\text{bin}}^c\big| D,H}=1-\Prob{\bigcap_{i=1}^M E_i^c\big| D,H},
\end{align}
where $M$ denotes the number of bin detection, and $E_i^c$ denotes the event that the $i$-th detection succeeds.
Thus, the question is derived to bound the probability of no anomaly detection.
By the definition of conditional probability, we have
\begin{align}
    \Prob{\bigcap_{i=1}^M E_i^c\big| D,H}=\Prob{E_M^c\bigg|\bigcap_{i=1}^{M-1}E_i^c,D,H}\Prob{\bigcap_{i=1}^{M-1}E_i^c\bigg|D,H}
    =\Prob{E_M^c\bigg|\bigcap_{i=1}^{M-1}E_i^c,D,H}\cdots\Prob{E_1^c|D,H}.
\end{align}
By Lemma~\ref{lm:failure_bound} and the fact that $B\geq O\left(\frac{\mathcal{B}^2}{\epsilon^2_0}\right)$, every term above is exactly what we bounded.
Therefore, all these terms are close to 1 with an exponentially decaying gap
\begin{gather}
    \Prob{E^c_k\big|\bigcap_{i=1}^{k-1}E^c_i,D,H}\geq 1-\mathrm{e}^{-O(P_1)}.
\end{gather}
And we have that
\begin{align}
    \Prob{E_{\text{bin}}\big| D,H}\leq 1-(1-\mathrm{e}^{-O(n)})^M\leq BCs\mathrm{e}^{-O(n)}\leq \mathrm{e}^{-O(n)}.
\end{align}
Thus, we get the stated failure probability.

We then consider the query number of the fidelity estimator, which represents the number of experiments needed to estimate the Pauli information of a Hamiltonian.
As illustrated by Proposition 4 in \cite{Li2015}, the peeling decoder succeeds with the probability of at least $1-O\left(\frac{1}{s}\right)$ adopting $C=O(1)$ subsampling groups and $B=\Theta(s)$ bin sets in every group.
Therefore, to prepare these bins, the number of queried eigenvalues is $BPC=O(Ps)$. 
Note in every bin set, so we choose $P=P_1+P_2=O(n)$.
Therefore, we get the illustrated query number.

As for the error bound, since the event that the procedure estimates the error rates with noise larger than $O\left(\frac{\epsilon^2}{t_0^2\sqrt{s}}\right)$ is classified as value error, the above analysis of the failure probability demonstrates that all error rates fall into the stated interval in successful execution.
Therefore, the absolute values of decomposition parameters can be estimated by the square root of these Pauli error rates, and we get the stated precision.
\end{proof}

\subsection{With Prior Knowledge}\label{sec:prior_info}
From an experimental view of system's Hamiltonian learning, we will always hold some prior knowledge about the system's underlying dynamics.
For example, we can find out some significant interactions of the system that are gained from the experimental details or the designed properties of the device.
Formally speaking, given the prior knowledge, we can expect that the unknown Hamiltonian contains the corresponding terms on the Pauli basis with unknown parameters, which we refer to as the structure information.
Indeed, it is not always true that this prior information covers all the possible structure, so we cannot fully rely on the prior structure to estimate the Hamiltonian.
In this subsection, we would like to discuss how the partial prior knowledge about the structure would boost this Hamiltonian learning method, especially in the Pauli error rate estimation.

During the bin detection, the structure information can serve as a lower bound for the nonzero-term detection.
For example, given that there exists one known term in the detected bin $U$, we shall accept the detection only if it is found to be a single-ton or multi-ton bin.
Otherwise, the algorithm can output an "exception", and we shall adaptively adjust the execution parameter like $\gamma_1$ in Algorithm~\ref{alg:bin_detect}.
And the same happens when we have two or more terms in one bin.
As for the peeling step, since this procedure would invoke the bin detector as a subroutine, the prior information would not directly improve this step.

As discussed above, the prior information about the possible structure can help to improve the accuracy of the overall estimation.
However, since our method is designed without prior information, the benefits we get from the structure knowledge remains in a heuristic way and can be mainly observed in numerical results.

\section{Sign Estimation}\label{sec:signestimate}
We have briefly summarized the sign estimation procedure in Sec~\ref{sec:sign}, including the motivation and basic ideas.
In this section, we will illustrate the protocol to estimate signs of all decomposed parameters.
As this sign estimation plays the third step in the whole Hamiltonian estimation procedure, this estimation employs some information gained from previous Pauli estimation.

In the above section, we regard the Hamiltonian channel as a processing matrix form in Eq.~\eqref{eq:processmatrix}, or, more specifically, as a Pauli channel.
That is due to the fact that our target information is faithfully stored in diagonal terms.
On the contrary, we estimate the sign information mainly from off-diagonal terms in this section, so we need to care for the whole channel.
With the expansion in Eq.~\eqref{eq:Hamiltonian}, we consider the SPAM as a local Pauli eigenstate $\rho_\gamma$ and a Pauli operator $P_\beta$ with the measurement outcome as follows,
\begin{align}\label{eq:originlineq}
    \Tr(\mathcal{H}_t(\rho_\gamma)P_\beta)=\Tr(\rho_\gamma P_\beta)+it\sum_{\alpha\in{\sf P}^n}s_\alpha\Tr(\rho_\gamma P_\alpha P_\beta-P_\alpha\rho_\gamma P_\beta)+o(t^2),
\end{align}
where we want to solve all decomposition parameters $\bf s$.

In order to solve all the parameters to get the sign information, we construct a family of linear equations to extract all the interesting parameters.
Firstly, we want to rule out the effect of the constant term in Eq~\eqref{eq:originlineq}.
We use variant $t$ and fit the measurement results on these $t$. 
Therefore, we get the first-order version of the Hamiltonian evolution.
\begin{align}\label{eq:lineq}
    \Tr(\mathcal{H}^{(1)}(\rho_\gamma)P_\beta)=i\sum_{\alpha\in{\sf P}^n}s_\alpha \Tr(\rho_\gamma [P_\alpha, P_\beta]),
\end{align}
where $[P_\alpha, P_\beta]$ is the commutator notation.
Moreover, since we know the support of nonzero ${\bf s}^\star$ exactly from the last section, it helps to reduce the number of unknown variables and equally the number of equations needed.
Sequentially, we choose multiple state and measurement settings to construct the necessary equations.
The collection of SPAM settings is denoted by $\{(\rho_1,M_1),\cdots,(\rho_m,M_m)\}$.
Collect these first-order measurement results, and this vector is denoted by $\mathcal{M}$.
For the coefficient matrix $\Phi$, we define $\Phi_{j,\alpha}\coloneqq i\Tr(\rho_{\gamma_j}[P_\alpha,M_{\beta_j}])$.
As we pick $m$ pairs states and measurements, and $\Phi$ only covers $s=\abs{{\bf s}^\star}$ nonzero parameters, the matrix is in the shape of $m\times s$.
Therefore, we construct the following equations based on Eq.~\eqref{eq:lineq}
\begin{align}
    \mathcal{M}=\Phi\cdot {\bf s}^\star.
\end{align}

The challenge arises when the observations $\mathcal{M}$ are noisy.
This is possible as we have some noisy SPAMs, and the linear regression brings systematic errors due to the higher-order terms.
Consider the case that the protocol can only estimate $\mathcal{M}$ with the additive noise $\omega$, so the equation becomes
\begin{gather}\label{eq:noisyeq}
    \hat{\mathcal{M}}=\Phi\cdot{\bf s}^\star+\omega.
\end{gather}
Indeed solving the equation directly is not applicable due to the unknown noise. Instead, we adopt the method from compressed sensing \cite{2005decode,candes2006stable,baraniuk2008simple} and consider the problem of finding a feasible vector $x$ satisfying the following constraints
\begin{align*}
\begin{split}
    \min_x \|x\|_1&\\
    ||\Phi x- \hat{\mathcal{M}}||_2&\leq \epsilon
\end{split}.
\end{align*}
This problem is also illustrated in the main text as Eq.~\eqref{eq:optimize}.
Notably, this problem has a nonempty feasible set for every $\epsilon\geq\|\omega\|_2$ since the ideal value of ${\bf s}^\star$ is a feasible solution.
We always choose $\epsilon=\sqrt{m}\left(\frac{\sigma\tau}{t_1}+o(t_1)\right)$, which is the upper bound of $\|\omega\|_2$.
We denote this problem as \emph{sign optimization}.

We will show that the solution $x$ is a well-qualified estimation of the ideal parameters given the measurements and states are randomly chosen from the Pauli group and local eigenstates thereof.
In the following, we first give a definition of the \emph{restricted isometry property} introduced by~\cite{2005decode,baraniuk2008simple}.
Then we show how to prove this property for the randomly sampled process matrices in our specific case and how to bound the noise effects.

\begin{definition}\label{de:rip}
We say that a matrix $\Phi$ satisfies the Restricted Isometry Property (RIP) of order $s$ if there exists a
$\delta_s\in(0,1)$ such that
\begin{gather}
    (1-\delta_s)\|x_T\|_2^2\leq\|\Phi x\|_2^2\leq(1+\delta_s)\|x_T\|_2^2
\end{gather}
holds for all $x$ on all possible support $T$ such that $\#(T)\leq s$.
\end{definition}

\begin{lemma}\label{prop:signRIP}
Suppose we know the exact support of the vector ${\bf s}^\star$ on the Pauli group.
Thus, with the number of equations $m\geq \Omega(s)$, the solution $x^\star$ from Algorithm~\ref{alg:sign} is close to the ideal parameter ${\bf s}^\star$ with probability $1-\mathrm{e}^{-O(m-s)}$,
\begin{gather}\label{eq:spambound}
    \|x^\star-{\bf s}^\star\|_2\leq C\cdot \epsilon, 
\end{gather}
where $C$ is a constant related to the choice of the sampling set of SPAM.
\end{lemma}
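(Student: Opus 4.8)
The plan is to establish the RIP for the randomly sampled coefficient matrix $\Phi$ and then invoke the standard compressed-sensing stability bound. First I would fix the randomness: each row of $\Phi$ is built from an independently drawn local Pauli eigenstate $\rho_{\gamma_j}$ and a random Pauli operator $P_{\beta_j}$, and the entry is $\Phi_{j,\alpha}=i\Tr(\rho_{\gamma_j}[P_\alpha,P_{\beta_j}])$. The key observation is that $\Tr(\rho_{\gamma_j}[P_\alpha,P_{\beta_j}])$ is, up to a factor, $\pm 2\Tr(\rho_{\gamma_j}P_\alpha P_{\beta_j})$ when $P_\alpha$ and $P_{\beta_j}$ anticommute and $0$ when they commute; since $\rho_{\gamma_j}$ is a product of single-qubit Pauli eigenstates, $\Tr(\rho_{\gamma_j}P_{\alpha+\beta_j})$ is $\pm 1$ or $0$ depending on whether $P_{\alpha+\beta_j}$ is supported on the chosen single-qubit eigenbasis at each site. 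This means each row is (a rescaling of) a random sign vector with a controlled fraction of zeros, so the rows are isotropic-like in expectation: $\mathbb{E}[\Phi^\dagger\Phi]$ is a constant times the identity on the $s$-dimensional support.

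Next I would turn this first-moment computation into a concentration statement. The standard route is to show that for a fixed support $T$ with $|T|\le s$ and a fixed unit vector $x$ supported on $T$, the random variable $\|\Phi x\|_2^2$ concentrates around its mean with an exponential (sub-exponential or sub-Gaussian) tail, because it is a sum of $m$ independent bounded terms. Then one takes a union bound over an $\varepsilon$-net of the unit sphere on each $s$-dimensional coordinate subspace, and a further union bound over the $\binom{d^2}{s}$ choices of $T$ — though here the support is \emph{known} from stage 1, so we only need the net on a single $s$-dimensional subspace, which is exactly why $m=\Omega(s)$ (rather than $m=\Omega(s\log d)$) suffices and why the failure probability comes out as $\mathrm{e}^{-O(m-s)}$. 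Combining the per-vector tail bound with the net cardinality $\le (3/\varepsilon)^s$ gives RIP of order $s$ with some $\delta_s<\sqrt{2}-1$ with probability $1-\mathrm{e}^{-O(m-s)}$.

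Finally, with RIP in hand, I would quote the classical stable-recovery theorem for the $\ell_1$-minimization program in Eq.~\eqref{eq:optimize}: if $\delta_{2s}<\sqrt{2}-1$ and $\|\omega\|_2\le\epsilon$, then any minimizer $x^\star$ satisfies $\|x^\star-{\bf s}^\star\|_2\le C\epsilon$ for a constant $C$ depending only on $\delta_{2s}$ (hence only on the sampling distribution of the SPAM settings). Since ${\bf s}^\star$ is exactly $s$-sparse on the known support, the usual "tail" term $\|{\bf s}^\star-({\bf s}^\star)_s\|_1/\sqrt{s}$ vanishes and we get precisely Eq.~\eqref{eq:spambound}. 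One also notes ${\bf s}^\star$ is feasible whenever $\epsilon\ge\|\omega\|_2$, which is guaranteed by the choice $\epsilon=\sqrt{m}(\sigma\tau/t_1+o(t_1))$ together with Assumption~\ref{assump:SPAM}.

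The main obstacle I expect is the concentration step: the entries of $\Phi$ are not i.i.d.\ Gaussian or $\pm1$ Bernoulli but structured $\{0,\pm c\}$ values whose zero-pattern is correlated across a row (it depends jointly on $P_{\beta_j}$ and the eigenbasis choices), so one must verify that this structured randomness still yields an isotropic mean and a genuine sub-exponential tail for $\|\Phi x\|_2^2$ uniformly over sparse $x$. Handling the dependence within a row — and making sure the rescaling constant and the effective "aspect ratio" are such that $m=\Omega(s)$ truly controls $\delta_{2s}$ below the $\sqrt2-1$ threshold — is where the real work lies; everything downstream is a black-box application of the compressed-sensing literature~\cite{candes2006stable,baraniuk2008simple}.
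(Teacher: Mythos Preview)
Your proposal is correct and follows essentially the same route as the paper: bound $\|\omega\|_2$ via Assumption~\ref{assump:SPAM} and the regression, establish RIP for $\Phi$ by Hoeffding-type concentration on each row plus an $\varepsilon$-net over the single known $s$-dimensional subspace (hence $m=\Omega(s)$ and failure probability $\mathrm{e}^{-O(m-s)}$), and finish with the Cand\`es stable-recovery bound. The only minor slip is that you invoke $\delta_{2s}$, whereas here the optimization variable already lives on the fixed known support, so RIP of order $s$ (i.e.\ $\delta_s<\sqrt{2}-1$) suffices, exactly as the paper uses; your worry about within-row correlations is legitimate but the paper simply absorbs it into abstract bounds $w_l,w_u,l,u$ and applies Hoeffding to the bounded rows without further analysis.
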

\begin{proof}
We first figure out the size of noise $\omega$ that comes from SPAM errors and linear regressions in~\eqref{eq:noisyeq}.
According to Assumption \ref{assump:SPAM}, every measurement carries noise at most $\tau$.
Therefore, we have
\begin{gather}
     \Tr(\mathcal{H}_t(\tilde{\rho}_\gamma)\tilde{P}_\beta)=\Tr(\rho_\gamma P_\beta)+it\sum_{\alpha\in{\sf P}^n}s_\alpha\Tr(\rho_\gamma P_\alpha P_\beta-P_\alpha\rho_\gamma P_\beta)+o(t^2)+\tau,
\end{gather}
which satisfies for all $t$.
Considering the ordinary least square, we have 
\begin{gather}
    \abs{\Tr(\mathcal{H}^{(1)}_t(\tilde{\rho}_\gamma)\tilde{P}_\beta)-\Tr(\mathcal{H}^{(1)}_t(\rho_\gamma)P_\beta)}\leq\frac{\sum_{i=1}^K\abs{i\cdot t_1-\Bar{t}}}{\sum_{i=1}^K(i\cdot t_1-\Bar{t})^2}\cdot\tau+o(t),
\end{gather}
where $K$ denotes the number of evolving times used to extract the first-order term and $\Bar{t}$ is the averaged time length.
As we choose the $t$ from $t_1$ up to $Kt_1$ arithmetically, the foregoing bound is derived to $\frac{\sigma\tau}{t_1}+o(t_1)$ for some regression constant $\sigma$, which is very similar to Eq.~\eqref{eq:linearreg}. 
Therefore, the norm of the total noise term $\omega$ can be bounded by
\begin{gather}
    \|\omega\|_2\leq\sqrt{m}\left(\frac{\sigma\tau}{t_1}+o(t_1)\right).
\end{gather}
According to Algorithm~\ref{alg:sign}, the chosen $\epsilon$ makes sure that $\epsilon\geq\|\omega\|_2$, which guarantees the existence of the solution of the optimization problem.

We then prove the approximate RIP for the constructed $\Phi$ matrix.
The proof is inspired by~\cite{Shabani2011,baraniuk2008simple}.
In the noisy version, we randomly choose the local eigenstates of Pauli operators and Pauli measurements, so $\Phi$ is a $m\times s$ random matrix.
By the fact that all these states and measurements are in a finite dimension system, the matrix $\Phi$ is always bounded.
Consider an arbitrary $s$-sparse vector $x$, and we have the following upper and lower bounds,
\begin{gather}
    \frac{w_l}{m}\|x\|^2\leq x^T\phi_i\phi_i^Tx\leq\frac{w_u}{m}\|x\|^2,\ \ \forall\,x\in\mathbb{R}^s\\
    l\|x\|^2\leq \mathbb{E}_{\Phi}\|\Phi x\|^2\leq u\|x\|^2,\ \ \forall\,x\in\mathbb{R}^s,
\end{gather}
where we use $w_l,w_u,l,u$ to show the upper and lower bounds, and we denote the $i$th row of $\Phi$ by $\phi_i$.
Implement Hoeffding's inequalities on the random matrix $\Phi$, we have
\begin{align}
    \Prob{\|\Phi x\|^2\geq (1+\delta_s)\|x\|^2}\leq&\mathrm{e}^{-2m(1+\delta_s-u)^2/(w_u-w_l)^2}\\
    \Prob{\|\Phi x\|^2\leq (1-\delta_s)\|x\|^2}\leq&\mathrm{e}^{-2m(1-\delta_s-l)^2/(w_u-w_l)^2},
\end{align}
for some positive $\delta_s$.
Combining these two, we get,
\begin{align}\label{eq:rip}
    \Prob{\abs{\|\Phi x\|^2-\|x\|^2}\geq \delta_s\|x\|^2}\leq 2\mathrm{e}^{-2m(\delta_s+\mu_0)/(w_u-w_l)^2},
\end{align}
where $\mu_0$ denotes $\min(1-u,l-1)$.

Now we need to generalize this inequality to every vector in a specific $s$-sparse support.
In our case, that is the parameter support we learned from stage 1.
According to Lemma 5.1 in \cite{baraniuk2008simple},we can use the counting and union bound to prove with probability $1-\mathrm{e}^{-2m(\delta_s+\mu_0)/(w_u-w_l)^2+s\log(12/\delta_s)}$, every $x$ on the support gained from stage 1 satisfies
\begin{gather}
    (1-\delta_s)\|x\|_2^2\leq\|\Phi x\|_2^2\leq(1+\delta_s)\|x\|_2^2.
\end{gather}
Note that we can decide the sparse support, so it is not necessary to further count all possible supports.
We need to choose $m>s$ to get a vanishing failure probability, and this is consistent with the intuition that the number of linear equations must be larger than that of unknown variables.

Suppose we get a solution $x^\star$ in the feasible set of the problem in Eq.~\eqref{eq:optimize}.
Therefore, according to Theorem 1.3 in \cite{CANDES2008589} and the definition of RIP matrix in \cite{2005decode}, the result of Eq.~\eqref{eq:optimize} satisfies the following distance bound toward the real parameter ${\bf s}^\star$
\begin{align}
    \|x^\star-{\bf s}^\star\|_2\leq \frac{2\sqrt{1+\delta_s}}{1-(1+\sqrt{2})\delta_s}\cdot\epsilon. \notag
\end{align}
Note the constant $\delta_s$ is artificially picked.
Therefore, we can always choose a proper $\delta_s<\sqrt{2}-1$.
\end{proof}
To actually rule out the effects of noise on the sign estimation, we need to make sure the additive noise $\omega$ is bounded by some bars that depend on the lower bound of the nonzero decomposition parameters.
This can be achieved by calculating the propagation of the noise and the chosen number of equations $m$.
\begin{propbis}{prop:phase2}
Suppose the absolute value estimator perfectly returns the support information of decompose parameters of the Hamiltonian, and Assumption A1 \& A2 and \ref{assump:SPAM} hold.
Run Algorithm~\ref{alg:sign} with the support information of decomposition parameters and the unit time length $t_0$.
By setting $m$ to be $O(s)\leq m\leq O\left(\frac{t_1^2\epsilon_0}{\tau^2}\right) $ The solution of Algorithm~\ref{alg:sign} contains perfect sign information of all nonzero decomposition parameters with probability at least $1-\mathrm{e}^{-O(m-s)}$.
\end{propbis}
\begin{proof}

As proved in Lemma~\ref{prop:signRIP}, the process of solving the linear equations will not enlarge the errors.
The solved decomposition parameters $x^\star$ satisfy that following bound with a probability of at least $1-\mathrm{e}^{-O(m-s)}$
\begin{gather*}
    \|x^\star-{\bf s}^\star\|_2\leq C\cdot\sqrt{m}\left(\frac{\sigma\tau}{t_1}+o(t_1)\right),
\end{gather*}
where $C$ comes from the constant in Lemma~\ref{prop:signRIP}.
According to Assumption A1 \& A2 and \ref{assump:SPAM}, $m$ determines the upper bound of norm distance, which is supposed to be smaller than $O(\sqrt{\epsilon_0})$.
Therefore, this error indicates that there is no sign flip between the estimated parameters $x^\star$ and the actual decomposition vector ${\bf s}^\star$.
\end{proof}
\begin{remark}\label{rm:pha2t}
\rm Similar to Remark~\ref{rm: pha1t}, we need to choose some proper $t_1$ to attune the two types of noise.
Thus, in our model, we would choose $t_1\sim\sqrt{\sigma\tau}$.
\end{remark}

\section{Main Theorem}\label{sec:mainproof}

Till now, we have introduced our protocol that combines the fidelity estimator, Pauli error rates estimator, and the sign estimator.
The validity of this protocol, therefore, relies on those guarantees of subroutines.
In this section, we introduce the main theorem that demonstrates the correctness and efficiency of our protocol working on the general $n$-qubit quantum systems.

\begin{thmbis}{thm:main}
Suppose Assumption A1 \& A2, \ref{assump:SPAM}, \ref{assump:gtm}, and \ref{assump:Pauli} hold. Run Algorithm~\ref{alg:main} with $l=\frac{2}{\epsilon^4}\log\left(\frac{4ns|\kappa|}{\delta}\right)$ sequences for each length where $\kappa$ is the set of variant sequence lengths, $B=2^b=\max\left\{O(s),O\left(\frac{\epsilon^4}{t_0^4\epsilon_0^2}\right)\right\}$, $C=O(1)$, $m=O(s)$, unit times $t_0$ satisfies $\|\mathcal{H}_{5t_0}-\mathcal{I}\|<\frac{1}{4}$ and $t_1$, and the offsets $D$ with size $P=O(n)$ for each subsampling group.
The Hamiltonian estimator will return all nonzero decomposition parameters with the perfect support estimation and $\|\hat{\bf s}^\star-{\bf s}^\star\|_\infty\leq O\left(\frac{\epsilon}{t_0\sqrt[4]{s}}\right)$, which succeeds with probability at least $1-\delta-O\left(\frac{1}{s}\right)$.
The circuit measurement complexity of this execution is $\tilde{O}\left(\frac{sn}{\epsilon^4}\right)$ where $\tilde{O}$ notation ignores the logarithmic terms.
The post-processing complexity is $\text{poly}(n,s)$, where poly denotes that the scaling is polynomial with the elements.
\end{thmbis}
\begin{proof}
According to Proposition~\ref{prop:Pauli}, Algorithm~ \ref{alg:peeling} and \ref{alg:subsampling} with these parameters can estimate the absolute values of decomposition parameters up to the accuracy of $\|\abs{\hat{s}^\star}-\abs{s^\star}\|_\infty\leq O\left(\frac{\epsilon}{t_0\sqrt[4]{s}}\right)$. 
Moreover, the sign estimator will extract perfect sign information from Proposition~\ref{prop:phase2}.
Therefore, the illustrated accuracy can be achieved.
The failure probability comes from the union bound of these two stages, which are $\delta+\left(\frac{1}{s}\right)$ and $\mathrm{e}^{-O(m-s)}$, respectively.
As $m=O(s)$, we can claim the stated failure probability.

As for the measurement complexity, it is gained from simple counting.
In the first stage, we run the cascading circuit with altogether $l\cdot\abs{\kappa}$ sequences.
The set $\kappa$ includes exponentially increasing sequence lengths from 1 to $O\left(\frac{1}{\Delta}\right)$ where $\Delta$ refers to the smallest fidelity residual of the detected channel.
In the second stage, the protocol needs to prepare $m$ equations, which are equal to $m$ measurements.
Therefore, the complexity $M_Q$ satisfies
\begin{gather*}
    M_Q=l\cdot \abs{\kappa}+m=\tilde{O}\left(\frac{sn}{\epsilon^4}\right).
\end{gather*}

The classical complexity is also combined from the foregoing two stages.
In the first stage, the post-processing is mainly contributed by the peeling algorithm, of which the complexity is stated as $O(sn^2)$ in Theorem 1 of \cite{harper2020fast}.
Also, to extract the desired fidelity terms, the procedure needs $O(sn)$ fittings.
In the second stage, the optimization problem illustrated in Eq.~\eqref{eq:optimize} is of size $m\times s$.
Since we have confirmed the existence of the solution, the complexity is $\text{poly}(s)$.
Therefore, we have
\begin{gather}
    M_C=O(sn^2)+O(sn)+\text{poly}(s)=\text{poly}(n,s).
\end{gather}
\end{proof}

\section{Implementation and numerical results}\label{sec:addnum}
This section serves as the further analysis and numerical detection of some concerns that are raised in the main text.

\subsection{Threshold Behaviors}\label{sec:threshold}
In this section, we perform the exhaustive numerical simulation to show how we choose the bin parameter $b$ in each case of the simulations introduced in \Cref{sec:numerical}.
Even though the procedure runs with an inputted parameter, $b$, this parameter is supposed to rely on the structure of the target Hamiltonian to guarantee the successful execution.
Hence, we have to employ a trial process to determine the proper $b$ for each Hamiltonian.
More clearly, we increase the parameter $b$ and run the procedure on the target Hamiltonian to find the proper choice of $b_0$ such that the estimation with $b\geq b_0$ is steadily close to the ideal values.
In Figure~\ref{fig:ising}, we exhibit the simulation results of running the procedure exhaustively to detect the behaviors of different $b$ on different systems.

\begin{figure}[tbh]
    \centering
    \includegraphics[width=\columnwidth]{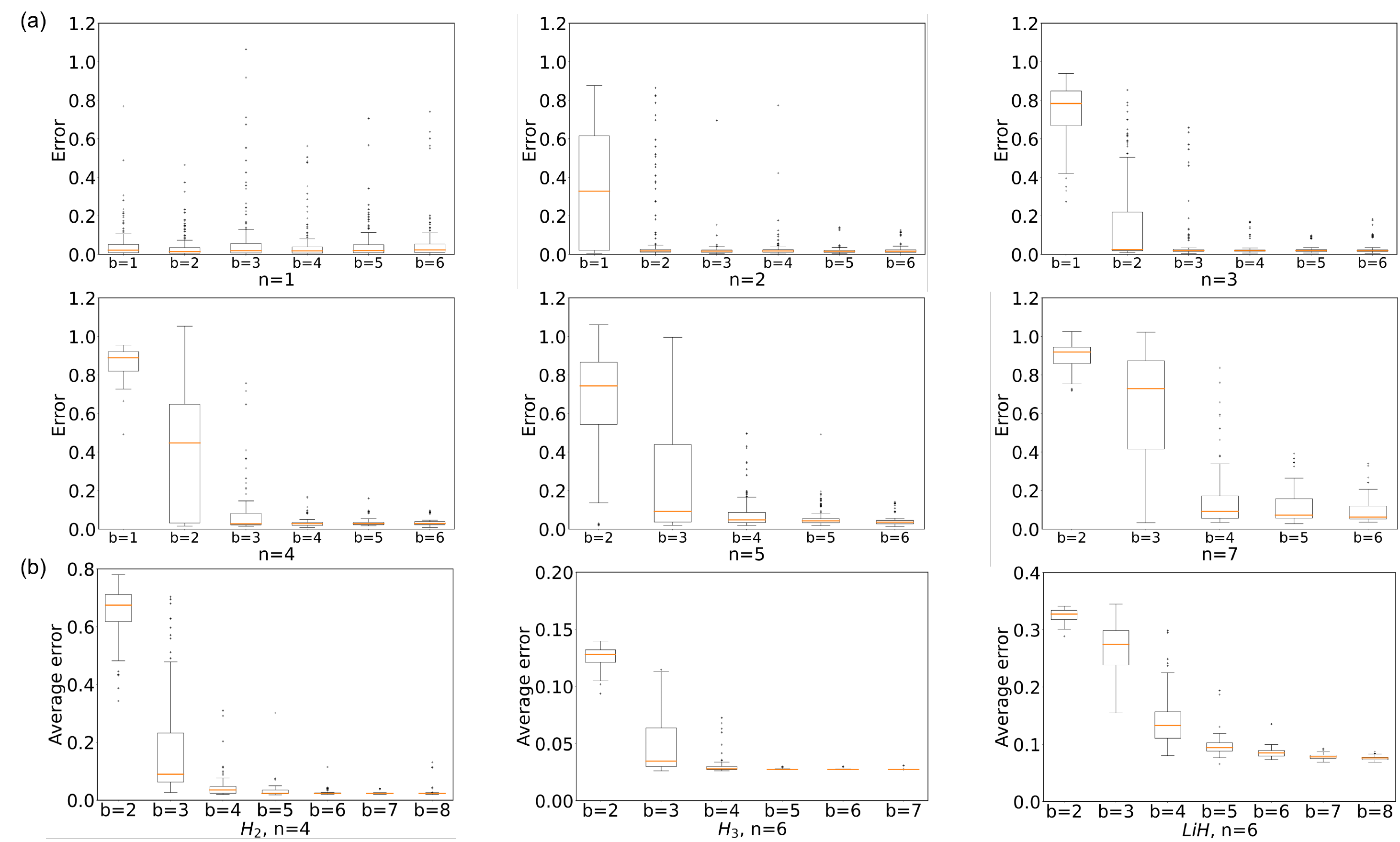}
    \caption{This figure includes the thorough results for the trial processes for successfully executing the learning procedure.
    In figure~(a), we show the execution of random TFIM Hamiltonian with multiple system sizes. 
    The metric for errors is chosen as the relative 1-norm distance.
    Note in the case of $n=1$, all the choices of $b$ perform a very similar estimation distribution, which is due to the only one nonzero term in this case. In the case of $n=2$, the reconstruction becomes constantly close to the ideal values when $b\geq4$, which implies the proper choice, $b=4$. For $n=3$, the steady reconstruction shows up when the procedure chooses $b\geq4$.
    As for cases of $n=4$, $n=5$, and $n=7$, the steady estimations are recognized when $b\geq 5$. These trials solve the concerns of choosing proper $b$ raised in the TFIM simulation in Figure~\ref{fig:numerical}.
    In figure~(b), we focus on the $b$ choosing problem of the molecular Hamiltonian learning. 
    In the case of learning $H_2$ Hamiltonian, the reconstruction errors decay fast by increasing $b$ up to 5.
    After $b\geq 6$, the estimation performs well, which indicates that $b=6$ is a proper choice.
    In the case of $H_3$ and $LiH$ molecules, the proper choices are $b=6$ and $b=7$, respectively. }
    \label{fig:ising}
\end{figure}

In Figure~\ref{fig:ising}, we display the trial results for the random TFIM Hamiltonian with different system sizes.
In the case of $n=1$, all these $b$ perform well due to the fact that there is only one nonzero parameter in the Hamiltonian.
In the case of the $n=2$ system, these results are more typical, which is very similar to Figure~\ref{fig:threshold}.
According to the box plot, the procedure performs much better when $b\geq2$, and the performance keeps steady after $b$ grows up to 4.
Therefore, the procedure will choose $b=4$ as a proper parameter to execute on the 2-qubit system.
In all other cases, this process works very similarly.
Based on the understanding of the requirement of the parameter $b$ and the observations on Figure~\ref{fig:threshold} and Figure~\ref{fig:ising}, parameters for random TFIM Hamiltonian learning are just the same as those used in \Cref{sec:numerical}.


In addition to the random TFIM systems, we also check the threshold behaviors for molecular systems.
The numerical results of the trial processes are shown in Figure~\ref{fig:ising}(b).
A key property of molecular Hamiltonian is that they contain much more nonzero terms than the Ising models.
Hence, the threshold behaviors are generally asking for larger $b$.
In the first plot, the reconstruction errors of $H_2$ molecules decay rapidly from $b=2$ to $b=5$.
After the procedure increases $b$ up to 6, the reconstruction becomes steadily close to the ideal decomposition parameters.
Based on these results, the procedure regards $b=6$ as a proper choice of the bin parameter, and we exhibit the corresponding distribution of $b=6$ estimations in Figure~\ref{fig:numerical}.
Similarly, we run the trial simulation for $H_3$ and $LiH$ molecules, respectively.
By witnessing the reconstruction with steady errors, the procedure can then determine the proper parameter in each case.

\subsection{Higher-Order Fitting}\label{sec:fitting}
During the simulation for the numerical results, we found there exist some lower bounds of the errors after we detect the distribution of our reconstructions for systems with comparable large sizes.
This means there are some systematic errors in some processes of the procedure.
In this section, we will show these errors are mainly contributed by the higher-order terms during the fitting of the second-order fidelity.

Firstly, we need to consider the definite expression of the fidelity terms of a Hamiltonian channel.
According to Eq.~\eqref{eq:Hamiltonian}, we need to expand this series to even higher-order terms.
\begin{align}
    \mathcal{H}_t(\rho)
    =&\rho+it\sum_{\alpha\in{\sf P}^n}s_\alpha(\rho P_\alpha-P_\alpha\rho)\notag
    +t^2\sum_{\alpha,\beta\in{\sf P}^n}s_\alpha s_\beta\left[ P_\alpha\rho P_\beta-\frac{1}{2}(P_\alpha P_\beta\rho+\rho P_\alpha P_\beta)\right]\\
    &+it^3\sum_{\alpha,\beta,\gamma\in{\sf P}^n}s_\alpha s_\beta s_\gamma\left[\frac{1}{6}(\rho P_\alpha P_\beta P_\gamma-P_\alpha P_\beta P_\gamma\rho)+\frac{1}{2}(P_\alpha\rho P_\beta P_\gamma-P_\alpha P_\beta \rho P_\gamma)\right]\notag\\
    &+t^4\sum_{\alpha,\beta,\gamma,\delta\in{\sf P}^n}s_\alpha s_\beta s_\gamma s_\delta\left[\frac{1}{24}(\rho P_\alpha P_\beta P_\gamma P_\delta+P_\alpha P_\beta P_\gamma P_\delta\rho)-\frac{1}{6}(P_\alpha \rho P_\beta P_\gamma P_\delta+P_\alpha P_\beta P_\gamma \rho P_\delta)+\frac{1}{4}P_\alpha P_\beta\rho P_\gamma P_\delta\right]\notag\\
    &+o(t^5).
\end{align}
Note that these Pauli indices are all equivalent, and we can use this property to determine the effects on diagonal terms (Pauli terms).
For the first-order terms, they simply cause no diagonal effects. For the second-order terms, the diagonal effects from them are depicted by Eq.~\eqref{eq:chi_wot}, which is the major source of learning the Hamiltonian in our protocol.
As for the third-order terms, there are also zero effects for them on the diagonal terms.
The first terms obviously contribute no effects on the Pauli terms since they annihilate themselves on the Pauli terms.
The second terms have no effects either due to the rotational symmetry of the indices.
The fourth-order terms contribute significantly to the diagonal terms.
Apparently, so far, the odd-order terms contain no effects of the diagonal terms, and this can be easily extended to all the odd-order terms.
This guides us to further improve our fitting process by considering other even-order terms.

We also show the numerical results to verify this idea explicitly.
\begin{figure}
    \centering
    \includegraphics[width=0.8\columnwidth]{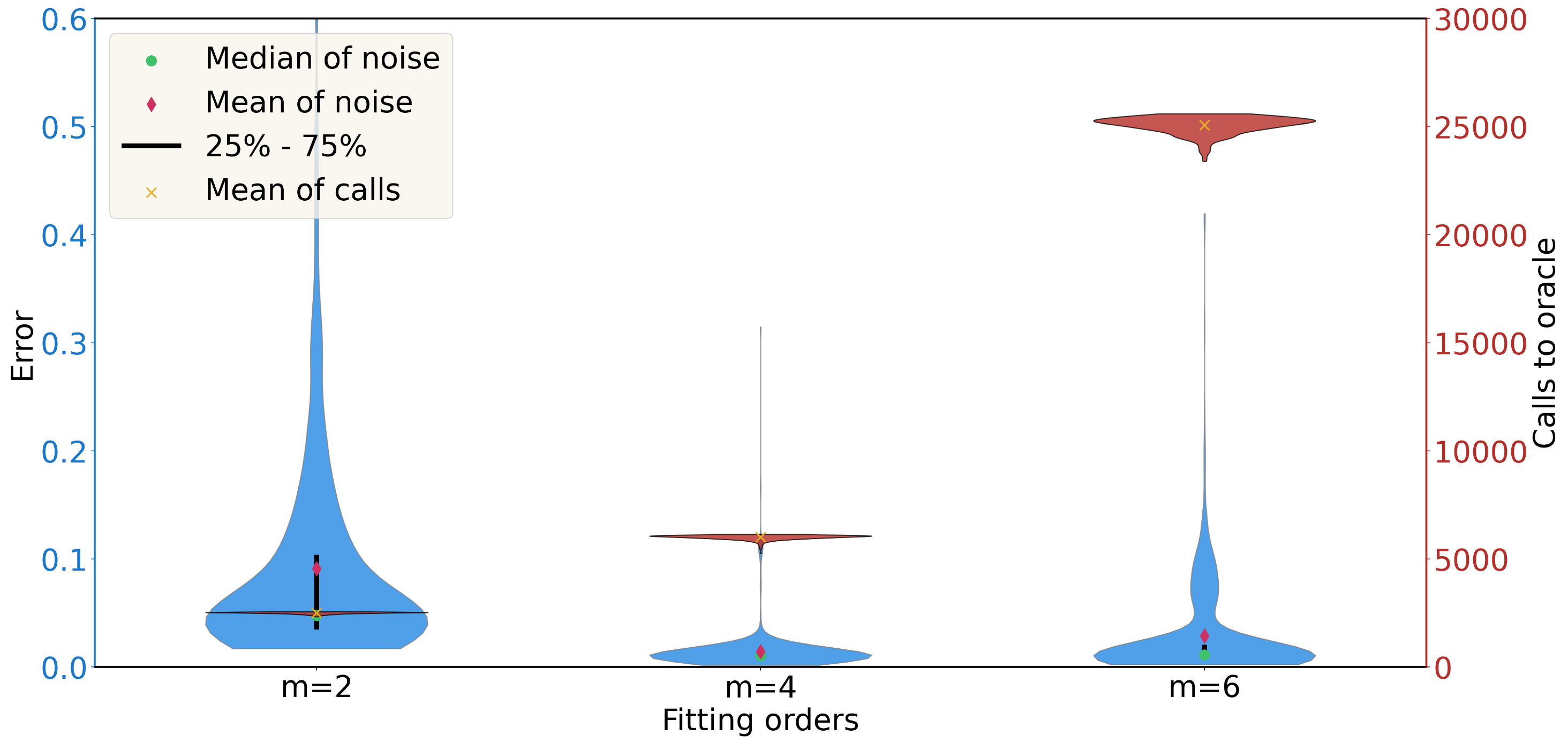}
    \caption{This figure depicts the effects of different fitting settings on the estimation of Hamiltonians. 
    To rule out the statistical fluctuations in the detection, we run different fitting processes on 50 random 4-qubit TFIM Hamiltonians for 10 rounds each. 
    For $m=2$, it is the ordinary fitting used for the vast estimation exhibited in the main text.
    For $m=4$, we consider the effects of fourth-order terms besides the second-order terms in the fitting process. This modification asks for a little more time steps for regression while bringing much better estimation.
    As for $m=6$, we consider the further sixth-order terms, and this modification requires even more time steps to implement six-order regression, which increases the numbers of oracle used. }
    \label{fig:varm}
\end{figure}
In Figure~\ref{fig:varm}, we exhibit the reconstruction cases with different fitting processes and fixed fidelity noise from finite shot of circuit.
In this figure, $m$ denotes the highest terms that we consider in the fitting.
As can be observed, the estimation is better when considering the fourth-order terms than the ordinary fitting with a mild sacrifice of the measurement complexity due to more time step needed for regression.
This proves the previous idea about systematic errors.
On the other hand, when the procedure keeps fetching higher-order terms, both the errors and complexity grow due to the effects of the circuit noise.
Even though the systematic errors are suppressed by considering higher-order terms, the circuit noise would be enlarged by the higher-order regression.
Going from $m=4$ to $m=6$, the circuit noise is now the dominant contributor of the overall errors.
Therefore, we propose an effective solution to significantly alleviate the systematic errors in the fitting process, while we still need to suppress the circuit noise by further increase the shot numbers in order to balance the systematic errors and circuit noise of fidelity.

\end{document}